\numberwithin{equation}{section}
\newtheorem{theorem}{Theorem}[section]
\newtheorem{claim}[theorem]{Claim}
\newtheorem{proposition}[theorem]{Proposition}
\newtheorem{corollary}[theorem]{Corollary}
\newtheorem{lemma}[theorem]{Lemma}
\theoremstyle{definition}
\newtheorem{definition}[theorem]{Definition}
\newtheorem{remark}{Remark}
\newcommand{\norm}[1]{\left\Vert#1\right\Vert}
 \newcommand{\eps}{\varepsilon}
\renewcommand{\mid}{\;\middle\vert\;} 
 \renewcommand{\d}{\,\-d}
\newcommand{\F}{\mathbb{F}}
\newcommand{\calD}{\mathcal{D}}
\newcommand{\calC}{\mathcal{C}}
\newcommand{\calM}{\mathcal{M}}
\newcommand{\yes}{\mathrm{yes}}
\newcommand{\no}{\mathrm{no}}
\newcommand{\id}{\mathfrak{id}}
\newcommand{\fraki}{\mathfrak{i}}
\renewcommand{\d}{\,\mathrm{d}}
\newcommand{\reff}{\mathrm{ref}}
\newcommand{\supp}{\mathrm{supp}}
\newcommand{\lspan}{\mathrm{span}}
\newcommand{\Dno}{\mathcal{D}_{\mathrm{no}}}
\newcommand{\Dyes}{\mathcal{D}_{\mathrm{yes}}}
\newcommand{\adv}{\mathrm{adv}}
\renewcommand{\Pr}{\operatorname*{\mathbf{Pr}}}
\DeclareMathOperator*{\E}{\mathbf{E}}
\newcommand{\pr}[2][]{ \ifthenelse{\isempty{#1}}
  {\Pr\left[#2\right]} {\Pr_{#1}\left[#2\right]} }
\newcommand{\ex}[2][]{ \ifthenelse{\isempty{#1}}
  {\E\left[#2\right]}
  {\E_{#1}\left[#2\right]} }
\newcommand\skipi{{\vskip 10pt}}
\renewcommand{\leq}{\leqslant}
\renewcommand{\geq}{\geqslant}
\title{Multi-Pass Streaming Lower Bounds for Approximating Max-Cut}
\author{Yumou Fei\thanks{Department of EECS, Massachusetts Institute of Technology.}\and Dor Minzer\thanks{Department of Mathematics, Massachusetts Institute of Technology. Supported by NSF CCF award 2227876 and NSF CAREER award 2239160.}\and Shuo Wang\thanks{Department of Mathematics, Massachusetts Institute of Technology. Supported by NSF CCF award 2227876.}}
\date{\vspace{-5ex}}
\begin{document}
\maketitle
\begin{abstract}
    In the Max-Cut problem in the streaming model, an algorithm is given the edges of an unknown graph $G = (V,E)$ in some fixed order, and its
    goal is to approximate the size of the largest cut in $G$. Improving upon an earlier result of Kapralov, Khanna and Sudan, it was shown by Kapralov and Krachun that for all $\eps>0$, no $o(n)$ memory streaming algorithm can achieve a $(1/2+\eps)$-approximation for Max-Cut. Their result holds for
    single-pass streams, i.e.~the setting in which the algorithm only views the stream once, and it was open whether multi-pass access
    may help. The state-of-the-art result along these lines, due to Assadi and N, rules out arbitrarily good approximation algorithms with constantly many passes and $n^{1-\delta}$ space for any $\delta>0$.

    We improve upon this state-of-the-art result, showing that any non-trivial approximation algorithm for Max-Cut requires either polynomially many passes or polynomially large space. More specifically, we show that for all $\eps>0$, a $k$-pass streaming $(1/2+\eps)$-approximation algorithm for Max-Cut requires $\Omega_{\eps}\left(n^{1/3}/k\right)$ space.
    This result leads to a similar lower bound for the Maximum Directed Cut problem, showing the near optimality of the algorithm of [Saxena, Singer,  Sudan, Velusamy, SODA 2025].

    Our lower bounds proceed by showing a communication complexity lower 
    bound for the Distributional Implicit Hidden Partition (DIHP) Problem, introduced by Kapralov and Krachun. While a naive application of the discrepancy
    method fails, we identify a property of protocols called ``globalness'', and show that (1) any protocol for DIHP can be turned into a global protocol, (2) the discrepancy of a global protocol must be small. The second step 
    is the more technically involved step in the argument, and therein 
    we use global hypercontractive inequalities, and more specifically strong quantitative versions of the level-$d$ inequality for global functions.    
\end{abstract}

\thispagestyle{empty}
\setcounter{page}{0}
\newpage
{
\setcounter{tocdepth}{3} 
\tableofcontents
}
\thispagestyle{empty}
\setcounter{page}{0}
\newpage
\setcounter{page}{1}

\section{Introduction}
The approximability of constraint satisfaction problems (CSPs in short) 
has been a central point of study in complexity theory. By now, thanks 
to the theories of NP-hardness and of probabilistically checkable proofs, we have a fairly good understanding (though incomplete) of approximation ratios achievable by polynomial time algorithms. For example,
for the Max-Cut problem\textemdash which is the focus of this paper\textemdash the best known approximation algorithm due to Goemans-Williamson~\cite{GW95} achieves approximation ratio of $\alpha_{GW}\approx 0.878$, and furthermore this is best possible assuming the Unique-Games Conjecture~\cite{Khot02,KKMO07}.

In this paper we focus on the streaming model~\cite{AMS96},
in which an input CSP instance is given to the algorithm as a stream. Namely, suppose that the instance is $\Psi = (X,E)$ where $X$ is a set of variables, and $E = \{C_1,\ldots,C_m\}$ is a set of constraints. Then the stream consists 
of the constraints of $\Psi$ given in some predetermined order, and the goal of the 
algorithm is to output a number that approximates the 
\emph{value} of $\Psi$, 
which is defined as the largest number of constraints that can be simultaneously satisfied. Here, we think of the algorithm as space bounded (but otherwise it can carry out very costly computations), and the question is how much space is needed/ suffices to approximate CSPs of interest. 
Casting the Max-Cut problem in this lens, we think of the variables as nodes $X = \{x_1,\ldots,x_n\}$, and of the constraints as corresponding to edges, specifying we want their endpoints to be on different sides of the cut. In other words, for each edge $e = \{i,j\}$ in the graph, we have the constraints $x_i+x_j = 1\pmod{2}$. The value of an instance is the maximum number of equations satisfied over all $\mathbb{F}_2$-labeling of the variables.

The study of CSPs in the streaming model has received much attention over the last decade, see for example~\cite{KKS15,guruswami2017streaming,KK19,chou2020optimal,AKSY20,chou2022linear,AN21,chou2022linear,saxena2023streaming,saxena2023improved,hwang2024oblivious,chou2024sketching,saxena2025streaming} (and~\cite{SudanSurvey} for a survey). The starting point of this line of study is the Max-Cut problem, for which it was observed that one can get a  $1/2$-approximation streaming algorithm with $O(\log n)$ space. Indeed, the algorithm simply counts the number of edges 
in the graph and divides it by $2$. Building on the work~\cite{gavinsky2007exponential}, 
the work of~\cite{KKS15} was the first to show that beating this trivial $1/2$-approximation ratio requires a large amount of space. More specifically, they showed 
that for all $\eps>0$, any $(1/2+\eps)$-approximation streaming algorithm requires $\Omega_{\eps}(\sqrt{n})$ memory. This result was later improved in~\cite{KK19}, who
showed a lower bound of $\Omega_{\eps}(n)$ for the same problem, which is tight up to 
logarithmic factors.

Subsequent works have further developed both the algorithmic as well as the lower bound machinery, leading to non-trivial algorithms as well as hardness results for different CSPs. In a sense, the 
main goal is to determine, for each CSP, what is the best approximation ratio achievable 
by an efficient streaming algorithm. Here and throughout, an efficient streaming algorithm
is one that uses poly-logarithmic space. For example, the work~\cite{chou2024sketching} considers a special sub-class of (single-pass) streaming algorithms known as sketching algorithms, and shows that they exhibit a dichotomy behavior. 
Namely, for each CSP there is an approximation ratio achievable by an efficient sketching algorithm, 
and doing any better than 
that (with a sketching algorithm) requires polynomial memory. 
The main open problem thus is to establish a similar dichotomy
behavior for the more general class of streaming algorithms.
While this challenge remains largely open, much progress
has been made over the last decade.

\subsection{Multi-pass Streaming Algorithms}
With a few exceptions that we soon mention, all of the above results are concerned with 
single-pass, worst case order streaming algorithm. By that, we mean that the algorithm views the input stream once, and furthermore that its order is predetermined by an adversary. This naturally leads one to consider other settings of 
streaming algorithms:
\begin{enumerate}
    \item Random order streaming model: what if the input stream is given in a random, uniformly chosen order?
    \item Multi-pass streaming algorithms: what if the algorithm is given mulitple passes on the stream, say $k$ passes where $k>1$?
\end{enumerate}
Indeed, these directions have been highlighted
as two of the most interesting directions in which not much 
is known, see for example~\cite{SudanSurvey}. 
The result of~\cite{KKS15}, showing that any $1/2+\eps$ approximation streaming algorithm for Max-Cut must use $\Omega_{\eps}(\sqrt{n})$ space, also holds in the first model above.
The focus of the current paper is on the second model, which we refer to as the 
multi-pass model. With this in mind, the results of~\cite{KKS15,KK19} only hold in the 
single-pass model. In fact, this is true for almost all of the known lower bounds for streaming algorithms.

Exceptions to this assertion are the works~\cite{AKSY20,AN21,CKP+23}, which 
address the multi-pass setting, and they do so using information theoretic techniques. In particular, 
the works~\cite{AKSY20,AN21} rule out arbitrarily good approximations for 
Max-Cut using sub-linear space and constantly many passes. Their proofs proceed by studying the 
associated cycle detection problem/ cycle counting problem, and showing a communication
complexity lower bound for them. In the cycle counting problem, each one of $K$-players is given some of the edges of an unknown graph, and their goal is to determine whether 
the graph has at least $r$ cycles, or less than $r$ cycles. 
These works prove communication complexity lower bounds for this
problem, and use it (via standard reductions) to show hardness of approximation results for Max-Cut in the multi-pass streaming model. We refer the reader to~\cite{Assadi} for a more detailed discussion on the subject.

The works~\cite{AKSY20,AN21} establish 
$\Omega(n^{1-O(k\eps)})$ lower bound for $(1-\eps)$-approximation of Max-Cut, where $k$ is the number of passes, namely a near linear space
lower bound provided that 
the number of passes is much smaller than $1/\eps$. While giving evidence towards the hardness of Max-Cut in the multi-pass setting, this result leaves open the possibility of, say, a polylog-space $0.9$-approximation using even two passes.

The work~\cite{CKP+23}
studies the search version of the associated cycle finding problem, and rules out $o(\log n)$-pass, $n^{o(1)}$-space algorithms for it. To 
the best of our knowledge, 
it has no direct implications on the approximability of Max-Cut in the multi-pass model.

\subsection{Main Result}
Our main result is a new space lower bound for multi-pass streaming algorithms for Max-Cut.
Our lower bound works against any non-trivial $(1/2+\eps)$-approximations, and remains meaningful for even polynomially many passes on the stream.
\begin{restatable}{theorem}{mainintro}\label{thm:main_into}
    For all $\eps>0$ and $k\in\mathbb{N}$, any randomized, $k$-pass, $\left(1/2+\eps\right)$-approximation streaming algorithm for Max-Cut requires $\Omega_{\eps}(n^{1/3}/k)$ space. 
\end{restatable}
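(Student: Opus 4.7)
The plan is to follow the streaming-to-communication reduction paradigm and prove the lower bound at the communication level. Specifically, I would work with the Distributional Implicit Hidden Partition problem $\mathrm{DIHP}_{n,T}$ of Kapralov--Krachun: in the $\Dyes$ case there is a uniformly random hidden partition $\sigma:[n]\to\{-1,1\}$, and each of $T$ players is given a small batch of edges whose endpoints agree with $\sigma$ with slight bias $\beta$, whereas in the $\Dno$ case the edges are drawn from the corresponding product marginal. A $k$-pass, $s$-space streaming $(1/2+\eps)$-approximation algorithm for Max-Cut on graphs with $\Theta(n)$ edges translates to a $kT$-round communication protocol of cost $s$ per message that distinguishes $\Dyes$ from $\Dno$, so an $\Omega_\eps(n^{1/3})$ lower bound on per-player communication against $T=O_\eps(1)$ players is enough to conclude the desired $\Omega_\eps(n^{1/3}/k)$ space bound.

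To lower bound the communication cost of $\mathrm{DIHP}$, the natural approach is the discrepancy/advantage method: for a fixed transcript $\tau$, write $|\Pr_{\Dyes}[\tau]-\Pr_{\Dno}[\tau]|$ as the inner product of the indicator of each player's accepting set against a planted-versus-uniform bias kernel, and then Fourier-expand. The level-$d$ contribution scales like $\beta^d$ times a sum over size-$d$ edge subsets of products of Fourier coefficients of the players' accepting indicators. The naive bound via Parseval and standard hypercontractivity loses logarithmic factors, because an accepting set can sharply concentrate its Fourier mass on few coordinates; this loss prevents pushing past the single-pass regime and is exactly why Kapralov--Krachun's bound, applied directly, does not survive multi-pass reductions.

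The key conceptual device is to restrict attention to what I would call \emph{global} protocols: protocols whose per-player message-conditioned indicators have small generalized influences on every small subset of edges. The argument then splits into two steps. First, show that any protocol for $\mathrm{DIHP}$ can be transformed into a global one of comparable communication cost and distinguishing advantage. The idea is to iteratively identify coordinates on which a message distribution is overly concentrated, explicitly communicate those values as part of the transcript, and recurse; the extra communication is absorbed into the $n^{1/3}$ budget. Second, show that global protocols have small discrepancy. Here I would plug the per-player indicators into the strong quantitative level-$d$ inequality for global functions, which guarantees $W^{=d}[f]\leq (Cd/n)^d \|f\|_2^2$ without the extra $\log n$ factor present in the general case. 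Summing over $d$ and over the size-$d$ edge subsets contributing to the advantage, the total discrepancy shrinks to $2^{-\Omega_\eps(n^{1/3})}$, as required.

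The hardest step I anticipate is the globalization transformation in the multi-round, multi-party setting. In a single-round setting, globalization essentially amounts to restricting to a large sub-rectangle on which the indicator is smooth, but with $k$ rounds and $T$ players one must control the accumulation of structure across rounds while preserving the planted correlations of $\Dyes$. I would handle this via an induction on rounds, maintaining as an invariant that each player's conditional distribution remains close in an appropriate sense to the original, and charging every increment in generalized influence to additional explicitly communicated bits. Once this invariant is in place, the strong level-$d$ inequality does the heavy lifting, and the $n^{1/3}$ exponent emerges from balancing the $(Cd/n)^d$ decay of the Fourier mass against the combinatorial number of size-$d$ edge subsets contributing at each level.
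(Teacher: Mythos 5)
Your proposal takes essentially the same route as the paper: reduce to the Distributional Implicit Hidden Partition problem, apply the discrepancy method, transform protocols into ``global'' ones by explicitly communicating over-influential edges, and then bound the discrepancy of global rectangles via a level-$d$ inequality from global hypercontractivity. One caveat worth noting: the crucial bound the paper actually proves for a global rectangle $R$ of density $\geq 2^{-O(n^{1/3})}$ is a small \emph{multiplicative} constant, $|\Dno(R)-\Dyes(R)|\leq 0.01\cdot\Dno(R)$, rather than a $2^{-\Omega(n^{1/3})}$ total-discrepancy bound; the exponential only enters as the density threshold below which rectangles are discarded, and the paper additionally has to account for a third class of rectangles whose fixed edges form a cycle across players, with the level-$d$ / global hypercontractivity step (not the globalization transform) being the more technically demanding part.
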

In words,~\Cref{thm:main_into} asserts that even if an algorithm takes, say, $n^{1/6}$ many passes on the stream, 
it still requires polynomial space to achieve a non-trivial approximation for Max-Cut.

\paragraph{Maximum Directed Cut:} Theorem~\ref{thm:main_into} has an interesting corollary for the directed version of Max-Cut. In the Maximum Directed Cut problem one is given a directed graph $G = (V,E)$, and the
goal is to find a partition $V = L\cup R$ of $V$ maximizing the
fraction of edges going from $L$ to $R$. For this problem, the
naive algorithm that counts the number of edges in the graph
gives approximation ratio of $1/4$. 
The work~\cite{chou2020optimal} gives a single-pass, logarithmic space algorithm achieving approximation ratio $4/9$, and furthermore shows that improving this ratio 
requires $\Omega(\sqrt{n})$ space. It turns out that multi-pass algorithms 
can achieve strictly better approximation ratios for this problem, and the work~\cite{saxena2025streaming} gives a logarithmic space, $O(1/\eps)$-pass algorithm with approximation ratio $1/2 - \eps$. 
The following result, which is a direct corollary of Theorem~\ref{thm:main_into}, shows that going beyond $1/2$-approximation requires polynomial space.
\begin{theorem}
    For all $\eps>0$ and $k\in\mathbb{N}$, any randomized $k$-pass $(1/2+\eps)$-approximation streaming algorithm 
    for Maximum Directed Cut requires $\Omega_{\eps}(n^{1/3}/k)$
    space.
\end{theorem}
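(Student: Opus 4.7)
The plan is to derive the theorem as an immediate corollary of \Cref{thm:main_into} via a standard bi-orientation reduction from Max-Cut to Maximum Directed Cut. Given an undirected graph $G = (V, E)$ on $n$ vertices, I form the directed graph $G' = (V, E')$ on the same vertex set by replacing each undirected edge $\{u, v\} \in E$ with the two anti-parallel arcs $(u, v)$ and $(v, u)$, so that $|E'| = 2|E|$. For any bipartition $V = L \cup R$, an undirected edge $\{u, v\} \in E$ that crosses the cut contributes exactly one arc from $L$ to $R$ in $G'$ (either $(u,v)$ or $(v,u)$), while an edge that does not cross contributes none. Taking the maximum over all bipartitions therefore yields $\mathrm{MaxDiCut}(G') = \mathrm{MaxCut}(G)$.

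Next I would use this reduction to transfer the lower bound. Suppose $\algo{A}$ is a randomized $k$-pass streaming algorithm for Maximum Directed Cut on $n$-vertex graphs that uses space $s$ and achieves approximation ratio $1/2+\eps$. Given any adversarially ordered stream of the edges of an undirected graph $G$, I simulate a stream of the arcs of $G'$ by emitting, for each undirected edge $\{u,v\}$ encountered, the two arcs $(u,v)$ and $(v,u)$ one after the other. Running $\algo{A}$ on this derived stream constitutes a legal $k$-pass streaming computation over $G'$, uses exactly the space of $\algo{A}$, and makes the same number of passes over $G$. The output $v$ satisfies $v \in \bigl[(1/2+\eps)\,\mathrm{MaxDiCut}(G'),\, \mathrm{MaxDiCut}(G')\bigr] = \bigl[(1/2+\eps)\,\mathrm{MaxCut}(G),\, \mathrm{MaxCut}(G)\bigr]$, and hence constitutes a valid $(1/2+\eps)$-approximation to $\mathrm{MaxCut}(G)$. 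Applying \Cref{thm:main_into} to this derived algorithm yields $s = \Omega_\eps(n^{1/3}/k)$, which is the desired lower bound.

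The argument is a one-line reduction, so there is no substantive technical obstacle. The only points worth checking are (i) that the bi-orientation preserves the optimal cut value exactly, which follows from the symmetry of the anti-parallel construction with respect to exchanging $L$ and $R$, and (ii) that the stream transformation preserves both the pass count and the memory footprint, which is transparent since each undirected edge is processed as exactly two consecutive arcs. Both conditions hold by construction, and the statement follows.
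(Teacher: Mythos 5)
Your reduction is correct, and it is the natural (and presumably intended) argument behind the paper's assertion that the directed-cut bound is ``a direct corollary'' of Theorem~\ref{thm:main_into}; the paper itself supplies no proof. The identity $\mathrm{MaxDiCut}(G')=\mathrm{MaxCut}(G)$ for the bi-orientation holds because each undirected edge $\{u,v\}$ contributes exactly one of the two arcs $(u,v),(v,u)$ to the directed cut precisely when it crosses the bipartition, and contributes nothing otherwise; the streaming simulation emits two consecutive arcs per incoming edge and so preserves pass count, space, and vertex count $n$. One point worth making explicit if you write this up: the hard instances produced by the Max-Cut lower bound are multigraphs (the paper notes this in its Appendix reduction), so the bi-orientation yields directed multigraphs, and the hypothesized MaxDiCut algorithm must be understood to accept streams with repeated arcs --- the same convention already in force for the Max-Cut argument, so this is consistent rather than a gap.
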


\paragraph{Proof technique:} The technique in the proof of~\Cref{thm:main_into} is fairly general, and it may apply to wider classes of predicates. 
Indeed, in contrast to the arguments in~\cite{AKSY20,AN21,CKP+23} which are all information 
theoretic and rely fairly heavily on the cycle detection problem and its connection to
the Max-Cut problem, our proof is Fourier analytic, closer in spirit to the arguments
in~\cite{KKS15,KK19}. We defer a detailed discussion of our techniques to~\Cref{sec:pf_techniuqe}.
While we have put some effort into optimizing the quantitative aspects of~\Cref{thm:main_into}, it seems 
that currently our techniques are unable
to establish linear space lower bounds for Max-Cut, and we think that this is an interesting open challenge.

\subsection{Streaming Lower Bounds via Communication Complexity}\label{sec:pf_techniuqe}
As was done in prior works, our proof of~\Cref{thm:main_into} proceeds by establishing a
communication complexity lower bound for a related problem. 
More specifically, the problem we study is the distributional implicit hidden partition problem (abbreviated as DIHP henceforth) from~\cite{KK19}, 
which we define next.
\begin{definition}
Fix a ground set $U$ and an integer $m\leq |U|/2$. 
An element $y\in \{-1,0,1\}^{\binom{U}{2}}$ is said to be a \emph{labeled matching} over $U$ if the edge set $\supp(y):=\left\{\{u,v\}\in\binom{U}{2}:y_{\{u,v\}}\neq 0\right\}$ consists of vertex disjoint edges. In that case, we think of the support of $y$ as a graph, and of the label of an edge $\{u,v\}$ as $y_{\{u,v\}}$. 
\end{definition}
\begin{definition}\label{def:matchings}
    The space of labeled matchings, denoted by $\Omega^{U,m}\subseteq \{-1,0,1\}^{\binom{U}{2}}$, is defined as
\[
\Omega^{U,m}:=\left\{y\in \{-1,0,1\}^{\binom{U}{2}}:\supp(y)\text{ is a matching with }m\text{ edges}\right\}.
\]
\end{definition}

Throughout this paper we will mostly consider the case the ground set $U$ is $[n]$, and the matching size $m$ is $\alpha n$ where $\alpha>0$ is a small absolute constant. When both the ground set and the matching size are clear from context, we often abbreviate notations and write $\Omega$ instead of $\Omega^{[n],\alpha n}$. In particular, we denote by $\Omega^K$ the Cartesian product of $K$ copies of $\Omega$.

With this in mind, in the DIHP problem we have $K$ players,
each receiving a labeled matching from 
$\Omega^{[n],\alpha n}$ as an input. Their goal is to be able to tell
if the labels of their matchings are consistent, in
the sense that there is a bipartition of the vertex set 
$U$ so that edges that cross this bipartition are labeled by 
$-1$, and edges that stay within one side of the bipartition
are labeled by $1$. We stress here that each player only gets
to see the edges they received, so while that player can find
bipartitions of the vertices consistent with their edges, 
the challenge here is that the bipartition should be consistent
with the edges of other players as well.

\paragraph{Lower bounds for DIHP:}
To prove that this communication problem is hard,~\cite{KK19} introduce two distributions over the inputs of the players. In the YES distribution, all of the labels follow from 
one common bipartition of the vertices, whereas in the NO distribution each matching is labeled independently.
Formally, we define these distributions as follows:
\begin{definition}
Let $K\in\mathbb{N}$ be a constant. Define two distributions $\calD_{\yes}$ and $\calD_{\no}$ over $\Omega^{K}$:
\begin{enumerate}
\item {\bf The no distribution:} define $\calD_{\no}$ to be the uniform distribution over $\Omega^{K}$. 
\item {\bf The yes distribution:} 
sample a uniformly random vector $x\in\mathbb{F}_{2}^{n}$, then independently and uniformly sample $K$ matchings $M^{(1)},M^{(2)},\dots,M^{(K)}$ of size $\alpha n$. For each $i\in[K]$, we let $y^{(i)}\in\Omega$ have support $\supp(y^{(i)})=M^{(i)}$ and be defined as $y_{uv}^{(i)}=(-1)^{x_{u}+x_{v}}$ for $\{u,v\}\in M^{(i)}$. We define $\calD_{\yes}$ to be the joint distribution of $(y^{(1)},\dots,y^{(K)})$ obtained by this procedure.
\end{enumerate}
\end{definition}
The work~\cite{KK19} shows an $\Omega(n)$ communication 
lower bound for this problem for protocols in which player $1$
sends a message to player $2$, which sends a message to player $3$ and so on until its player $K$ turn to speak. In that step, player $K$ should decide whether to accept or reject. These type of
restricted protocols suffice for establishing single-pass lower bounds, and are 
typically easier to prove.

A subtle difference between the communication problem 
we consider here and the one used in \cite{KK19}, is that 
in our case each player only knows their matching. In contrast, in the setting of~\cite{KK19}, player $i$ knows all of the 
matchings $M^{(1)},\ldots,M^{(i)}$, and only the labels are
private. 
This does not matter for~\cite{KK19}, as their setting is specifically designed to facilitate their single-pass analysis. 
This distinction is crucial in the multi-pass setting, though, as without it the problem 
is no longer hard. The multi-pass
setting corresponds to 
protocols similar to the 
above, except that in the
end, player $K$ sends a message to player $1$, and 
then the protocol continues in the same way.
Note that with high probability the matchings $M^{(1)},\dots, M^{(K)}$ will not be edge-disjoint. Hence, if the matchings were public, the last player to act could announce a common edge to two of the matchings, and in the second pass the two corresponding players could broadcast their labels of that edge and compare it. In the no distribution, with probability $1/2$ the labels would not match each other. 

In light of this, we will consider the version of this problem wherein both the matchings and the labels are private.
\begin{definition}[Distributional Implicit Hidden Partition  Problem]\label{def:DIHP}
We define $\mathsf{DIHP}(n,\alpha, K)$ to be the $K$-player communication game in the number-in-hand (NIH) communication model where the $i$-th player gets as private input $y^{(i)}\in\Omega$, and their goal is to decide whether $(y^{(1)},\dots,y^{(K)})$ comes from $\calD_{\yes}$ or $\calD_{\no}$. For a protocol $\Pi$, we define its ``advantage'' to be
\[
\mathrm{adv}(\Pi):=\left|\Pr_{y^{(1)},\dots,y^{(K)}\sim \calD_{\yes}}\left[\Pi\left(y^{(1)},\dots,y^{(K)}\right)=1\right]-\Pr_{y^{(1)},\dots,y^{(K)}\sim \calD_{\no}}\left[\Pi\left(y^{(1)},\dots,y^{(K)}\right)=1\right]\right|.
\]
\end{definition} 

Our main result regarding the $\mathsf{DIHP}(n,\alpha, K)$ 
is the following communication complexity lower bound:
\begin{restatable}{theorem}{thmmain}\label{thm:main}
    Let $\alpha\in (0,10^{-7}]$ be a constant. Any communication protocol $\Pi$ for $\mathsf{DIHP}(n, \alpha, K)$ with advantage at least $0.1$ requires ${\Omega}\left(n^{1/3}K^{-2}\right)$ bits of communication\footnote{The communication complexity is the total number of bits broadcast by all players in all rounds during the communication game.}.
\end{restatable}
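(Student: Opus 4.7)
The plan is to prove \Cref{thm:main} by the discrepancy method adapted to the multi-party number-in-hand setting. Any deterministic protocol $\Pi$ on $\Omega^K$ with $c$ bits of communication induces a partition of $\Omega^K$ into at most $2^c$ combinatorial rectangles $R=R_1\times\cdots\times R_K$, so a standard averaging argument gives
\[
\adv(\Pi)\ \leq\ 2^c\cdot\max_{R}\bigl|\calD_{\yes}(R)-\calD_{\no}(R)\bigr|,
\]
with randomized protocols handled by Yao's principle. It therefore suffices to show that every rectangle has discrepancy at most $\exp(-\Omega(n^{1/3}/K^{2}))$.

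The natural tool for the discrepancy of a single rectangle is Fourier analysis on the labeled-matching space $\Omega=\Omega^{[n],\alpha n}$. Writing $f_i=\mathbf{1}_{R_i}$ and $g_i(x)=\E\!\left[f_i(y^{(i)})\mid x\right]$ for the conditional expectation of $f_i$ given the hidden cut $x\in\mathbb{F}_2^n$ under $\calD_{\yes}$, the discrepancy becomes
\[
\calD_{\yes}(R)-\calD_{\no}(R)\ =\ \E_{x\in\mathbb{F}_2^n}\prod_{i=1}^K g_i(x)\ -\ \prod_{i=1}^K\E\bigl[g_i\bigr],
\]
which Fourier expands as a sum $\sum_{(S_1,\dots,S_K)}\prod_i \widehat{g_i}(S_i)$ over non-trivial tuples of subsets of $[n]$ satisfying $\bigoplus_i S_i=\emptyset$. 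Each coefficient $\widehat{g_i}(S)$ in turn is a weighted sum of Fourier coefficients of $f_i$ on $\Omega$ indexed by labeled sub-matchings whose cut pattern with respect to $S$ is odd.

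The naive bound $\sum_{S_1,\dots,S_K}\prod_i|\widehat{g_i}(S_i)|$ is too weak because an arbitrary rectangle indicator can concentrate Fourier mass at low levels (for instance, if $R_i$ only depends on edges incident to a single vertex). I would therefore import the notion of \emph{globalness}: call $f_i$ global if, for every small partial matching $T$, the restriction of $f_i$ obtained by fixing the labels on $T$ has $L^2$ norm significantly smaller than $\|f_i\|_2$; a rectangle is global if all its side indicators are. The proof splits into two steps mirroring the structure outlined in the introduction: (1) show that any protocol can be converted, with negligible loss in advantage, into one whose rectangles are global, by iteratively identifying a non-global rectangle, refining it along the partial matching witnessing non-globalness, and charging the resulting increase in communication against an entropy-like potential that strictly decreases under such refinements; and (2) bound the discrepancy of a global rectangle by $\exp(-\Omega(n^{1/3}/K^{2}))$.

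The main obstacle is step (2). The key tool is a quantitatively strong level-$d$ inequality for global functions on $\Omega$, which states that $W^{\leq d}(f_i)\leq \|f_i\|_2^{2}\cdot (Cd)^{d}\cdot \eta^{d}$ for a small globalness parameter $\eta$; this controls the low-level contribution to the Fourier expansion. High-level contributions with $|S_i|>d$ are controlled by a direct counting argument using that a random matching of size $\alpha n$ contains few edges crossing any fixed small cut. Optimizing the threshold $d$ and tracking the $K$-dependence through the constraint $\bigoplus_i S_i=\emptyset$, which forces a delicate accounting to avoid an exponential-in-$K$ loss, yields the final exponent $n^{1/3}/K^{2}$. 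Adapting global hypercontractivity to the non-product labeled-matching measure on $\Omega$, and extracting a bound whose $K$-dependence is only polynomial rather than exponential, is the technical heart of the argument.
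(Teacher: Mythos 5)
Your opening step---the bound $\adv(\Pi)\le 2^{c}\max_{R}|\calD_{\yes}(R)-\calD_{\no}(R)|$ and the target of making every rectangle's \emph{absolute} discrepancy $\exp(-\Omega(n^{1/3}/K^{2}))$---is the wrong form of the discrepancy method here, and the paper flags exactly why: the rectangle in which every player insists on seeing the edge $\{1,2\}$ with a prescribed label has $\calD_{\no}(R)\approx(\alpha/n)^{K}$ and $\calD_{\yes}(R)=0$, so its absolute discrepancy is only polynomially small for constant $K$. More structurally, an absolute per-rectangle bound cannot suffice because the number of leaves after regularization is not controlled. The correct form is the sum version $\adv(\Pi)\le\frac12\sum_{R}|\calD_{\yes}(R)-\calD_{\no}(R)|$ together with a \emph{relative} bound $|\calD_{\yes}(R)-\calD_{\no}(R)|\le 0.01\,\calD_{\no}(R)$ for the good rectangles, which then sums to $0.01$ because $\sum_{R}\calD_{\no}(R)=1$; this is exactly what Lemma~\ref{lem:main} delivers (a small constant ratio bound, not something tending to zero with $n$). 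Your later ``step (2)'' still targets the absolute quantity, so as stated it would not close the argument.

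The larger missing piece is cycle handling. The ratio bound for a global rectangle holds only when the exposed restrictions $z^{(1)},\dots,z^{(K)}$ have pairwise disjoint supports whose union is acyclic; this is what lets one identify $\bigcap_{i}L(z^{(i)})=L(z)$ with the correct codimension and run the Fourier argument on it (cf.~\eqref{eq:yes-no-dim-relation}). Your proposal never addresses the leaves where two players expose the same edge or where the exposed edges close a cycle, yet these are precisely the rectangles your naive bound already fails on. The paper's Lemma~\ref{lem:analysis_of_global_protocols} therefore splits leaves into three classes: high-potential leaves (bounded by Markov's inequality on the average potential, which \emph{increases} by $O(1)$ per round, not ``strictly decreases'' as you wrote); leaves whose restrictions contain a cycle or shared edge, bounded by a union bound on the probability that a global rectangle discovers such an edge (Lemma~\ref{lem:low_cycle_probability}); and the acyclic low-potential remainder where the ratio bound applies. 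Finally a small sign error: globalness requires that no small restriction makes the density or $L^{2}$ norm significantly \emph{larger}; the opposite direction, ``significantly smaller for every restriction,'' is impossible by averaging.
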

\Cref{thm:main_into} follows quickly from~\Cref{thm:main}; see~\Cref{sec:apx_quick_pf} for a formal argument. 
Our proof of~\Cref{thm:main} combines a type of ``structure-vs-randomness'' argument along with a global hypercontractive inequality. The structure-vs-randomness approach has achieved great success in proving lifting theorems and lower bounds for search problems in the area of communication complexity, see for example~\cite{RM97,GPW17,YZ24}. In our case, it refers to the fact
that we are able to take any protocol for~\Cref{thm:main} 
and break it into a ``structured'' part, in which players 
fully expose information about edges, and a ``pseudo-random'' part, 
in which players do not expose too much information about
any small set of individual edges. We refer to the last property as ``globalness''
of protocols, taking inspiration from the notion of 
globalness for sets/ functions in the context of discrete Fourier analysis. 

To analyze global protocols, we use 
global hypercontractive inequalities, a tool from discrete Fourier analysis that has its origins in the theory of PCPs~\cite{KMS1,DKKMS1,DKKMS2,KMS} and by now has found applications throughout discrete mathematics~\cite{LM,EKLapp,KM,keevash2023forbidden,MZoptimal,MZrmgen,KLM23,keevash2024largest,LMgp,KLLM24,green2024improved}. Most relevant to us is the result of
Keller, Lifshitz and Marcus~\cite{KLM23} which establishes
quantitatively optimal global hypercontractive inequalities 
for product spaces. At a high level, this is because 
we are concerned with functions over the space
$\Omega^{[n],\alpha n}$ as per~\Cref{def:matchings}, which is close in spirit to a product distribution over $\{-1,0,1\}^{\binom{[n]}{2}}$.

In the next section, we provide a more detailed discussion of our techniques and an overview of the proof.

\subsection{Proof Overview}\label{sec:proof_overview}
\subsubsection{The Discrepancy Method via Globalness}
As is often the case in communication complexity, 
we prove~\Cref{thm:main} using the discrepancy method, which we state below for convenience.  
\begin{lemma}\label{lem:disc_easy}
 Let $\Pi$ be a deterministic 
 protocol for $\mathsf{DIHP}(n,\alpha, K)$, 
 and let $\mathcal{R}$ be the partition it induces over 
 of $\Omega^K$ into combinatorial rectangles. Then
 \[
 \mathrm{adv}(\Pi) \leq \frac{1}{2}\sum_{R\in\mathcal{R}}|\mathcal{D}_{\yes}(R) - \mathcal{D}_{\no}(R)|. 
 \]
\end{lemma}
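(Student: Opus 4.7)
The plan is to apply the standard discrepancy method and to extract the factor $\tfrac{1}{2}$ from a conservation identity. First I would recall the key combinatorial fact that any deterministic number-in-hand protocol $\Pi$ induces a partition of $\Omega^K$ into combinatorial rectangles indexed by transcripts: the set of inputs producing any fixed transcript $\tau$ is of the form $R_1 \times \cdots \times R_K \subseteq \Omega^K$, where $R_i \subseteq \Omega$ depends only on player $i$'s messages. This follows by a routine induction on the number of rounds, since at each step the message a player sends depends only on the transcript observed so far and on that player's private input. In particular, the output of $\Pi$, being a function of the transcript, is constant on each such rectangle, so the partition $\mathcal{R}$ decomposes as $\mathcal{R} = \mathcal{R}_1 \sqcup \mathcal{R}_0$ according to whether $\Pi$ outputs $1$ or $0$ on the rectangle.

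Writing $f(R) := \mathcal{D}_{\yes}(R) - \mathcal{D}_{\no}(R)$, the next step is to express
\[
\mathrm{adv}(\Pi) = \left| \Pr_{\mathcal{D}_{\yes}}[\Pi = 1] - \Pr_{\mathcal{D}_{\no}}[\Pi = 1] \right| = \left| \sum_{R \in \mathcal{R}_1} f(R) \right|.
\]
The trivial bound $\mathrm{adv}(\Pi) \leq \sum_{R \in \mathcal{R}} |f(R)|$ is off by a factor of $2$ from what we want.

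The final step recovers this factor by noting the conservation identity $\sum_{R \in \mathcal{R}} f(R) = \mathcal{D}_{\yes}(\Omega^K) - \mathcal{D}_{\no}(\Omega^K) = 1 - 1 = 0$, so $\sum_{R \in \mathcal{R}_1} f(R) = -\sum_{R \in \mathcal{R}_0} f(R)$. Therefore
\[
2 \left| \sum_{R \in \mathcal{R}_1} f(R) \right| = \left| \sum_{R \in \mathcal{R}_1} f(R) \right| + \left| \sum_{R \in \mathcal{R}_0} f(R) \right| \leq \sum_{R \in \mathcal{R}} |f(R)|,
\]
which gives the claimed bound.

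This lemma is essentially textbook, so there is no real technical obstacle; the only point worth checking carefully is the rectangle property, which must be invoked in the NIH (not number-on-forehead) setting and relies on the fact that each player's messages depend only on their own private input together with the public transcript.
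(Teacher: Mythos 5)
Your proof is correct and follows essentially the same route the paper sketches (the paper just says the bound ``follows from definitions quickly'' by decomposing the two probabilities over the rectangles). Your write-up is in fact more complete: the conservation identity $\sum_{R\in\mathcal{R}} f(R)=0$, which you use to convert the trivial bound into one with the factor $\tfrac{1}{2}$, is exactly the detail the paper leaves implicit, and your handling of the NIH rectangle property is the standard and correct one.
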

\begin{proof}
The proof follows from definitions quickly, after breaking down each 
one of the probabilities in the definition of $\text{adv}(\Pi)$ according
to the rectangle $R\in\mathcal{R}$ the input lies in.
\end{proof}

\paragraph{A naive application of the discrepancy method:} 
by~\Cref{lem:disc_easy}, our result would follow if we were able to show that for any 
sizable rectangle $R$, we have that the probability masses $\mathcal{D}_{\text{yes}}(R)$ and $\mathcal{D}_{\text{no}}(R)$ are 
very close to each other. Indeed, if we were able to show that for any rectangle $R$ such that
$\mathcal{D}_{\mathrm{no}} (R)\geq 2^{-n^c}$ (where $c$ is a small constant), it holds that
\begin{equation}\label{eq:1_pf_ov}
    |\mathcal{D}_{\mathrm{yes}}(R) - \mathcal{D}_{\mathrm{no}}(R)|\leq 0.01 \cdot 
\mathcal{D}_{\mathrm{no}}(R),
\end{equation} 
then we would get (by summing up) that the advantage of $\Pi$ is at most $0.01 + |\mathcal{R}|2^{-n^c}$, which quickly yields 
a $\Omega(n^c)$ communication
lower bound. Alas, there are sizable combinatorial rectangles $R$ for which~\eqref{eq:1_pf_ov} fails. Indeed, consider the rectangle 
$R = A^{(1)}\times\ldots\times A^{(K)}$ where $A^{(i)}$ is the collection
of labeled matchings containing the edge $e = \{1,2\}$ with label 
$(-1)^i$. Then we have that $\mathcal{D}_{\mathrm{no}} (R)\approx (\alpha/n)^{K}$ meaning that $R$ is sizable, but $\mathcal{D}_{\mathrm{yes}}(R) = 0$ so the gap between them is almost as big as it can be. 

Thus, we see 
that there are in fact rectangles violating~\eqref{eq:1_pf_ov}. The
example we have seen though is very special, and it heavily relies on knowing
full information about one specific edge $e$. Given that a sizable rectangle cannot give a lot of information about many specific edges, 
there does not seem to be a way to use such rectangles to construct a protocol for DIHP. Can we prove that these are
the only violations of~\eqref{eq:1_pf_ov}? 
If true, can we use such a
result to prove a lower 
bound for $\mathsf{DIHP}(n,\alpha, K)$ via (an appropriate adaptation of) the discrepancy method?

\paragraph{Global rectangles:} 
the observation above motivates us to define a certain type of sets called the \textit{global sets}. Informally speaking, a set
$A\subseteq \Omega$ is called
global if no edge $e$ is significantly more prevalent than others in members in $A$; we refer the reader to~\Cref{def:global_set} for a more precise definition.
This notion naturally extend to rectangles: $R=A^{(1)}\times\cdots\times A^{(K)}$ is called a \textit{global rectangle} if for every $i$, $A^{(i)}$ is a global set. 

With the notion of global rectangles, one may wonder if indeed the only
rectangles violating~\eqref{eq:1_pf_ov} are not global. In other words, 
one may wonder if~\eqref{eq:1_pf_ov} holds for for every global rectangle $R$ with $\mathcal{D}_{\mathrm{no}}(R)\geq 2^{-O(n^c)}$. We show that this is
indeed true for $c = 1/3$; see~\Cref{lem:main} for a precise statement.
The proof of this uses discrete Fourier analysis, and in particular global hypercontractivity.

\paragraph{Global protocols:} having shown that~\eqref{eq:1_pf_ov} holds for global rectangles, one may wonder how to convert it into a communication complexity lower bound. Clearly, an arbitrary protocol
$\Pi$ need not induce a partition of $\Omega^{K}$ into global rectangles.
However, we prove a sort of regularity lemma, showing that $\Pi$ 
can be modified to obtain this property while roughly keeping the same
communication complexity and advantage. We refer to such protocols as global protocols. At a high level, if a protocol $\Pi$
is not global and we take a rectangle $R = A^{(1)}\times\ldots\times A^{(K)}$ in its induced partition that is not global, and consider any $i$ and a labeled edge $(e,b)$ such that $(e,b)$ is significantly more likely to appear in $A^{(i)}$. We then modify the protocol by saying that
if the players reached rectangle $R$, then player $i$ exposes the bit corresponding to $(e,b)$. Since each time a player 
exposes a bit due to lack of globalness the relative density of $A^{(i)}$ increases by at least a constant factor, player $i$ will overall expose at most $\log(|\Omega|/|A^{(i)}|)$ many edges
before getting to the case their set is global. Thus, overall
this operation leads to a protocol in which players expose 
on average only a few more edges compared to the original 
protocol (roughly $O(n^{1/3})$ in our case), and besides that the rectangles produced in the 
rectangle partition of the protocol are all global.

\paragraph{Finishing the proof:} given a 
protocol $\Pi$ that solves $\mathsf{DIHP}(n,\alpha, K)$, we first transform
it via our regularity lemma to a global protocol $\Pi^{\reff}$ that roughly has the same advantage and communication complexity.  We then prove an variant of the assertion that~\eqref{eq:1_pf_ov} holds for global rectangles $R$, tailored to the case some edges are fixed. 
Indeed, in our setting we have to deal with rectangles $R = A^{(1)}\times\ldots\times A^{(K)}$ where each $A^{(i)}$ lives inside a 
(possibly different) restriction of $\Omega$. 
This setting is slightly more complicated compared to the (cleaner setting) of global rectangles $R\subseteq \Omega^{K}$, but we show that~\eqref{eq:1_pf_ov} still works. Morally speaking, combined with the fact that a protocol where each player
exposes $O(n^{1/3})$ of their edges fails, this shows that the advantage of $\Pi^{\reff}$ (and hence of $\Pi$) must be small.

\subsubsection{Applying Global Hypercontractivity}
We now give some intuition 
as to the relevance of global hypercontractivity to us. 
Consider a global rectangle $R = A^{(1)}\times\ldots\times A^{(K)}$, fix some $i\in\{1,\ldots,K\}$, and let $A = A^{(i)}$. By reaching the rectangle $R$, player $i$ has communicated the fact that their input belongs to the set $A$. It is natural to ask what sort of partitions $x\in \{0,1\}^n$ of the vertex set are consistent with that behavior. Indeed, if for $i\neq i'$ the partitions consistent with $A^{(i)}$ are very different from the partitions that are consistent with $A^{(i')}$, then the  players may (rightfully) suspect that the input has been drawn from $\mathcal{D}_{\text{no}}$.

This line of thought leads to considering the induced distribution of the partition $x$ conditioned on $A$. More precisely, suppose that we sample a bipartition $x\in\{0,1\}^n$ and a labeled matching $y\in \Omega$ consistent with $x$. What 
is the distribution of $x$ conditioned on $y\in A$? For any fixed $y$, the set of bipartitions $x$ that are consistent with it is the affine subspace
\[
L(y) = \left\{x\in\mathbb{F}_2^n~|~\forall\{u,v\}\in \supp(y), 
~y_{\{u,v\}} = (-1)^{x_u+x_v}
\right\}.
\]
Thus, the distribution of $x$ conditioned on $y$ is the uniform convex combination of the uniform distributions over $L(y)$, over all $y\in A$. Intuitively, the fact that the set $A$ is global means that for typical $y,y'\in A$, the subspaces $L(y)$ and $L(y')$ look ``independent'', and we use this to show that the distribution of $x$ is nearly uniform. A good model case to keep in mind is the case $A$ is a randomly chosen set of some specified density, in which case this can be proved via standard concentration inequalities.
While our actual case wherein $A$ is global cannot be analyzed this way, intuition suggests 
that as no edge is too prevalent in elements in $A$, the intersection of $\supp(y)$, $\supp(y')$ for typical $y,y'\in A$ behaves like in the random case, and that this determines the distribution of $x$.

To formalize this argument, we consider the Fourier expansion of the function $x\mapsto \Pr[x|y\in A]$, and relate 
its Fourier coefficients to Fourier coefficients of 
$1_A\colon \Omega \to\{0,1\}$. The above assertions (morally) translate to proving that 
the Fourier mass on small degrees is small, and this
is where global hypercontractivity and level $d$ inequalities for global functions enters the picture.

\section{Transforming Protocols into Global Protocols}\label{sec:proof_lower_bound}
The goal of this section is to formally define the notions of global sets, rectangles and protocols. We then prove that any protocol for DIHP can be transformed into a global protocol at only a mild cost in its communication complexity. 
Finally, we formally state the discrepancy bound for 
global protocols, and prove it modulo a discrepancy bound for global rectangles which is proved in subsequent sections.
\subsection{Pseudorandomness Notions}
Throughout this section, $\Omega=\Omega^{[n],\alpha n}$ will be a space of labeled matchings as per~\Cref{def:matchings}, and $\mu$ will be the uniform measure over $\Omega$. Thus, for a subset $A\subseteq \Omega$ we have that $\mu(A)$ is the relative size of $A$ inside $\Omega$. 
To define the notion of globalness we shall be interested in
restrictions of $\Omega$ and $\mu$, by which we mean the resulting probability space once we fix a few of the coordinates. 
\begin{definition}[Restrictions]\label{def:restrictions}
For each string $z\in\{-1,0,1\}^{\binom{[n]}{2}}$, we let $\Omega_{z}\subseteq \Omega$ be the restricted domain defined by
\[
\Omega_{z}:=\left\{y\in \Omega:y_{uv}=z_{uv}\text{ for all }\{u,v\}\in\supp(z)\right\}.
\]
We call $z$ a ``restriction'' if $\Omega_{z}\neq \emptyset$. 
In particular, for $z$ to qualify as a restriction, the edge set $\supp(z)$ must be a matching of size no more than $\alpha n$.
\end{definition}

Note that in the space $\Omega$, if we 
fix the entry corresponding to edge $\{i,j\}\in\binom{[n]}{2}$ 
to a non-zero value, then the entry corresponding to any other 
edge $\{i,j'\}$ and $\{i',j\}$ must be given the value $0$. 
Indeed, this is because each element in $\Omega$ is a signed 
matching. This motivates the following definition:
\begin{definition}
For a string $z\in \{-1,0,1\}^{\binom{[n]}{2}}$, we denote by $N(z)\subseteq [n]$ the set of vertices incident to some pair in $\supp(z)$.
\end{definition}

Writing $\Omega = \Omega^{[n],\alpha n}$, note that the restricted 
space $\Omega_z$ is ``isomorphic'' to $\Omega^{[n]\setminus N(z),\, \alpha n-|\supp(z)|}$. 

We will often want to further restrict an already restricted space. Towards this end, for a restriction $z'$ we wish to 
consider the restrictions $z$ that extend/ subsume it:
\begin{definition}
For two strings $z,z'\in \{-1,0,1\}^{\binom{[n]}{2}}$, we say $z$ subsumes $z'$ if 
$\supp(z')\subseteq \supp(z)$ and for all $\{u,v\}\in\supp(z')$ we have that $z_{uv}=z'_{uv}$.
\end{definition}

Armed with the notion of restrictions, we can now formally define
the notion of global sets.
\begin{definition}\label{def:global_set}
A subset $A\subseteq \Omega$ is said to be $z'$-global if $A\subseteq \Omega_{z'}$, and for all restrictions $z$ that subsume $z'$ we have 
\[
\frac{|A\cap \Omega_{z}|}{|\Omega_{z}|}\leq 2^{|\supp(z)|-|\supp(z')|}\cdot \frac{|A\cap \Omega_{z'}|}{|\Omega_{z'}|}.
\]
When $z' = \vec{0}$ is the trivial restriction, we simply say 
that $A$ is global (omitting the $z'$).
\end{definition}
In words, for a set $A$ and a restriction $z'$, we say that $A$ is 
$z'$-global if any further restrictions $z$ that subsumes $z'$ increases
the relative density of $A$ by factor at most $2^{|\supp(z)|-|\supp(z')|}$. We next define global rectangles. 
\begin{definition}[$\zeta$-global rectangles]
    Let $ R = A^{(1)} \times \cdots \times A^{(K)} \subseteq \Omega^{K} $ be a rectangle, and let 
    $\zeta = (z^{(1)},\ldots,z^{(K)})$ be a sequence of restrictions. 
    We say that $R$ is $\zeta$-global
     if for all $i$'s, $A^{(i)}$ is
    $z^{(i)}$-global. Abusing notations, we define  
    $|\zeta|:=\sum_{i=1}^K \left|\supp\left(z^{(i)}\right)\right|$.
\end{definition}

\subsection{The Decomposition Lemma}
We now show that any subset $S$ with large size can be decomposed into global subsets with large size by restricting (on average) not too many coordinates. More precisely: 
\begin{lemma}[Decomposition lemma]\label{lem:reg}
    Let $z'$ be any restriction, and let $A\subseteq \Omega_{z'}$ be any subset. Then we can decompose $A$ into a disjoint union of subsets $A_{(1)},A_{(2)},\dots,A_{(k)}$ such that:
    \begin{enumerate}
        \item {\bf Globalness:} for each $i\in [k]$, there exists a restriction $z_{(i)}$ that subsumes $z'$ such that $A_{(i)}\subseteq \Omega_{z_{(i)}}$ and $A_{(i)}$ is $z_{(i)}$-global.
        \item {\bf Size of the restrictions}: the restrictions $z_{(i)}$ satisfy the following inequality:
        \[\sum_{i=1}^k \frac{|A_{(i)}|}{|A|} \left(\left|\supp(z_{(i)})\right| +\log_{2} \frac{|\Omega_{z_{(i)}}|}{|A_{(i)}|}\right)\leq \left|\supp(z')\right|+\log_{2}\frac{|\Omega_{z'}|}{|A|}+2.\]
    \end{enumerate}
\end{lemma}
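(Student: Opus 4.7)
The plan is to construct the decomposition greedily by repeatedly extracting the piece that witnesses the worst violation of $z'$-globalness. Starting with $B_0=A$, at step $i$ we find the restriction $z_{(i)}$ subsuming $z'$ that maximizes the violation ratio
\[
\phi(z; B_{i-1}) \;:=\; \frac{|B_{i-1}\cap\Omega_z|\,/\,|\Omega_z|}{2^{|\supp(z)|-|\supp(z')|}\cdot |B_{i-1}|\,/\,|\Omega_{z'}|}.
\]
If this maximum is at most $1$ then by definition $B_{i-1}$ is already $z'$-global, and we terminate by outputting $(B_{i-1}, z')$ as the last piece. Otherwise we output $A_{(i)} := B_{i-1}\cap\Omega_{z_{(i)}}$ with label $z_{(i)}$, and continue with $B_i := B_{i-1}\setminus\Omega_{z_{(i)}}$. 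Since a non-trivial step has $A_{(i)}\neq\emptyset$, $|B_i|$ strictly decreases, so the process terminates.

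The first claim to verify is that each extracted piece $A_{(i)} = B_{i-1}\cap\Omega_{z_{(i)}}$ is itself $z_{(i)}$-global. This is exactly where the maximality of $z_{(i)}$ is used: for every restriction $z''$ subsuming $z_{(i)}$ (which also subsumes $z'$), we have $\phi(z''; B_{i-1}) \le \phi(z_{(i)}; B_{i-1})$ by the choice of $z_{(i)}$. Using $\Omega_{z''}\subseteq\Omega_{z_{(i)}}$, so that $B_{i-1}\cap\Omega_{z''}=A_{(i)}\cap\Omega_{z''}$ and $|B_{i-1}\cap\Omega_{z_{(i)}}|=|A_{(i)}|$, this inequality rearranges into precisely the $z_{(i)}$-globalness condition for $A_{(i)}$.

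For the cost bound, write $\Phi(X,z) := |\supp(z)| + \log(|\Omega_z|/|X|)$. The violation strictly exceeding $1$ unpacks to $|A_{(i)}|/|\Omega_{z_{(i)}}| > 2^{|\supp(z_{(i)})|-|\supp(z')|}\cdot|B_{i-1}|/|\Omega_{z'}|$, which after taking logarithms gives $\Phi(A_{(i)}, z_{(i)}) \le \Phi(B_{i-1}, z')$; this also trivially holds for the last $z'$-global piece. Setting $D_i := |B_{i-1}|$ for $i=1,\dots,k^*+1$ and $c_i := D_i - D_{i+1}$ (with $D_{k^*+2}:=0$), so $c_i=|A_{(i)}|$, we get
\[
\sum_i |A_{(i)}|\,\Phi(A_{(i)}, z_{(i)}) \;\le\; \sum_i c_i\,\Phi(B_{i-1}, z') \;=\; |A|\cdot|\supp(z')| \;+\; \sum_i c_i\,\log\!\bigl(|\Omega_{z'}|/D_i\bigr).
\]
The $D_i$ form a strictly decreasing sequence from $|A|$ down to $0$, so the last sum is a left Riemann sum for the decreasing function $x\mapsto \log(|\Omega_{z'}|/x)$ on $(0,|A|]$, hence bounded above by $\int_0^{|A|}\log(|\Omega_{z'}|/x)\,dx = |A|\log(|\Omega_{z'}|/|A|) + |A|/\ln 2$. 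Dividing through by $|A|$ and using $1/\ln 2 < 2$ yields the stated inequality.

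The main obstacle is the globalness claim: one must carefully translate a violation inequality phrased \emph{relative to the outer restriction $z'$} into a genuine globalness statement \emph{relative to the inner restriction $z_{(i)}$}, which relies on both the maximality of $z_{(i)}$ and the identities $B_{i-1}\cap\Omega_{z''}=A_{(i)}\cap\Omega_{z''}$ induced by $\Omega_{z''}\subseteq\Omega_{z_{(i)}}$. The remaining pieces (the per-extract cost inequality and the Riemann-sum bookkeeping) are routine algebra and calculus, with the slack $1/\ln 2 \approx 1.44$ comfortably fitting inside the stated additive constant $2$.
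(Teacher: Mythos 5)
Your proof is correct and uses essentially the same greedy-extraction strategy as the paper, but with a slightly different (and arguably cleaner) selection rule. The paper's algorithm picks, among all restrictions subsuming $z'$ for which the violation inequality holds, one with \emph{largest support size}; it then verifies globalness of each extracted piece by contradiction. You instead pick the restriction \emph{maximizing the violation ratio} $\phi(\cdot; B_{i-1})$ over all restrictions subsuming $z'$, and the globalness of $A_{(i)}$ then follows by a direct rearrangement (every $z''$ subsuming $z_{(i)}$ also subsumes $z'$, so $\phi(z''; B_{i-1}) \le \phi(z_{(i)}; B_{i-1})$, which, using $\Omega_{z''}\subseteq\Omega_{z_{(i)}}$, is literally the $z_{(i)}$-globalness condition for $A_{(i)}$). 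This buys a slightly shorter globalness proof; the paper's rule is perhaps the more natural one if one wants to minimize the number of coordinates exposed per step, but for the lemma as stated the two selection rules are interchangeable. The cost bound is argued the same way in both: take logarithms of the violation inequality, weight by $|A_{(i)}|/|A|$, and bound the resulting sum by comparing to $\int_0^{|A|}\log(|\Omega_{z'}|/x)\,dx$. One small terminological slip: your sum $\sum_i c_i\,g(D_i)$ with $g$ decreasing and intervals $[D_{i+1},D_i]$ is a \emph{right-endpoint} Riemann sum (which underestimates the integral because $g$ is decreasing), not a left one; the inequality you state is nonetheless correct.
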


\begin{proof}
The proof of the lemma is algorithmic, and the decomposition 
is the result of the following procedure:
\begin{algorithm}
\DontPrintSemicolon
\SetKwInOut{Input}{Input}\SetKwInOut{Output}{Output}
    \caption{Decompose$(A,z')$}\label{alg:decomposition}
    \Input{a restriction $z'$ and a set $A\subseteq \Omega_{z'}$}
    \Output{a sequence of sets $A_{(1)},\dots,A_{(k)}$ and a sequence of restrictions $z_{(1)},\dots,z_{(k)}$}
    $i\gets 0$\;
    \While {$A$ is not $z'$-global}{
        $i \gets i+1$\;
        find a restriction $z_{(i)}$ with largest possible support size such that $z_{(i)}$ subsumes $z'$ and 
        \begin{gather}\label{eq:restriction-maximality}
        \frac{\left|A\cap \Omega_{z_{(i)}}\right|}{\left|\Omega_{z_{(i)}}\right|}> 2^{\left|\supp(z_{(i)})\right|-|\supp(z')|}\cdot\frac{|A|}{|\Omega_{z'}|}
        \end{gather}
        
        $A_{(i)}\gets A\cap\Omega_{z_{(i)}}$\;
        $A\gets A\setminus A_{(i)}$\; 
    }
    \If{$A$ is nonempty (and $z'$-global)}{
        $A_{(i+1)} \gets A$\; 
        $z_{(i+1)}\gets z'$\;
    }
\end{algorithm}

To check the first property, assume on the contrary that some $A_{(i)}$ is not $z_{(i)}$-global. Then by \Cref{def:global_set}, there must exists a restriction $z$ that subsumes $z_{(i)}$ and 
\begin{equation}\label{eq:assume-not-global}
\frac{|A_{(i)}\cap \Omega_{z}|}{|\Omega_{z}|}>2^{|\supp(z)|-\left|\supp(z_{(i)})\right|}\cdot\frac{\left|A_{(i)}\cap \Omega_{z_{(i)}}\right|}{\left|\Omega_{z_{(i)}}\right|}.
\end{equation}
Since the choice of $z_{(i)}$ satisfies \eqref{eq:restriction-maximality} and $A_{(i)}=A\cap \Omega_{z_{(i)}}$, we can combine \eqref{eq:assume-not-global} and \eqref{eq:restriction-maximality} and get
\[
\frac{|A\cap \Omega_{z}|}{|\Omega_{z}|}>2^{|\supp(z)|-\left|\supp(z')\right|}\cdot\frac{\left|A\cap \Omega_{z'}\right|}{\left|\Omega_{z'}\right|}.
\]
This contradicts the maximality of the choice of $z_{(i)}$ since $|\supp(z)|>\left|\supp(z_{(i)})\right|$.

We now check the second property, and towards that end we denote $A_{(\geq i)} = \bigcup_{j=i}^k A_{(j)}$. 
Taking logs of~\eqref{eq:restriction-maximality} gives 
\[
\log_{2} \frac{|\Omega_{z_{(i)}}|}{|A_{(i)}|} + \left| \supp(z_{(i)})\right| \leq \log_{2} \frac{|\Omega_{z'}|}{|A_{(\geq i)}|} + |\supp(z')| = \log_{2} \frac{|\Omega_{z'}|}{|A|} + \log_{2} \frac{|A|}{|A_{(\geq i)}|}+ |\supp(z')|. 
\]
Multiplying this inequality by $|A_{(i)}|/|A|$ and summing over all $i\in [k]$ we get: 
\begin{align*}
    \sum_{i=1}^k \frac{|A_{(i)}|}{|A|} \left(|\supp(z_{(i)})|+\log_{2} \frac{|\Omega_{z_{(i)}}|}{|A_{(i)}|}\right)&\leq \log_{2}\frac{|\Omega_{z'}|}{|A|}+|\supp(z')|+\sum_{i=1}^k \frac{|A_{(i)}|}{|A|}\cdot \log_{2} \frac{|A|}{|A_{(\geq i)}|}\\ 
    &\leq \log_{2}\frac{|\Omega_{z'}|}{|A|}+|\supp(z')|+\int_0^1 \log_{2} \frac{1}{1-x} \d x\\
    &\leq \log_{2}\frac{|\Omega_{z'}|}{|A|} +|\supp(z')|+2,
\end{align*}
where the second transition is because $\sum_{i=1}^k \frac{|A_{(i)}|}{|A|}\cdot \log_{2} \frac{|A|}{|A_{(\geq i)}|}$ is a lower Riemann sum for
the function $f(x) = \log_{2}\left(1/(1-x)\right)$ with points $x^{(i)} = \frac{|A_{(1)}|}{|A|}+\dots+\frac{|A_{(i-1)}|}{|A|}$, and the last transition is by a direct calculation.
\end{proof}

\subsection{From Arbitrary Protocols to Global Protocols}



We next show how to use~\Cref{lem:reg} to transform any protocol into a global protocol. 
Before doing so, we must formally define what we mean by ``global protocols'' and how we measure their cost.
The most natural way to define a global protocol is by saying that at each
node in the communication tree, the set of inputs leading up to it form a global rectangle. 
This structure seems a bit too strict to achieve however, and instead we consider a slight relaxation which is essentially as good as the natural candidate. This adaptation is slightly more complicated to explain, and involves the notion of rounds of communication. In 
each round of communication only a single player speaks, and the globalness requirement asserts that at the end of each round of communication, the set of inputs leading up to that node is global.

To quantize the information revealed in the communication process, we define the following potential function for rectangles. 
\begin{definition}[Potential function of rectangles]
    For restrictions $\zeta=(z^{(1)},\dots,z^{(K)})$ and a rectangle $R=A^{(1)}\times \cdots\times A^{(K)}$ such that $A^{(i)}\subseteq \Omega_{z^{(i)}}$,  we define the potential of $(\zeta,R)$ as: 
    \begin{align*}
    p(\zeta,R):= \sum_{i=1}^K |\supp(z^{(i)})|+ \log_{2} \left(\frac{|\Omega_{z^{(i)}}|}{|A^{(i)}|}
    \right). 
    \end{align*}
\end{definition}
We now formally define global protocols.
\begin{definition}[Global protocols]\label{def:global_protocol}
    A communication protocol $\Pi$ for $\textsf{DIHP}(n,\alpha,K)$ is called an $r$-round global communication protocol if it specifies the following procedure of communications: 
    \begin{itemize}
        \item the $K$ players take turns to send messages according $\Pi$; 
        \item there are at most $r$ rounds of communications, there is only one player sending message in a single round;
        \item the length of message in each round of communications is not bounded; instead, from the perspective of rectangles, after each round of communications, a $\zeta$-global rectangle $R$ is further partitioned into several global rectangles $R: = R_{(1)}\cup\dots\cup R_{(k)}$ such that: (1) $R_{(i)}$ is $\zeta_{(i)}$-global; (2) $\zeta_{(i)}$ subsumes $\zeta$; (3) the following inequality holds: 
        \begin{align*}
            \sum_{i=1}^k\frac{|R_{(i)}|}{|R|}p(\zeta_{(i)},R_{(i)})\leq p(\zeta,R) + 3. 
        \end{align*}
    \end{itemize}
\end{definition}
We now show an explicit construction of global protocol $\Pi^{\reff}$, given any communication protocol $\Pi$, and show that $\adv(\Pi^{\reff})\geq \adv(\Pi)$. 
\begin{lemma}\label{lem:arbitrary_to_global}
    Given a communication protocol $\Pi$ for $\mathsf{DIHP}(n,\alpha,K)$ with communication complexity at most $r$, we can construct an $r$-round global protocol $\Pi^{\reff}$ for $\mathsf{DIHP}(n,\alpha,K)$ such that $\adv(\Pi^{\reff}) \geq \adv(\Pi)$. 
\end{lemma}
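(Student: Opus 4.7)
The plan is to construct $\Pi^{\reff}$ by simulating $\Pi$ one bit at a time, and after each simulated bit applying the Decomposition Lemma (\Cref{lem:reg}) to globalize the set of the player who just spoke. Since $\Pi$ has at most $r$ bits, this yields at most $r$ rounds of $\Pi^{\reff}$. In round $t$, the player who speaks in bit $t$ of $\Pi$ sends a compound message consisting of (a) the bit $b\in\{0,1\}$ they would have sent in $\Pi$, and (b) an index $j$ identifying which global piece of the decomposition their input lies in. Since $\Pi$'s bit is recoverable from this augmented message, each leaf of $\Pi^{\reff}$ determines a leaf of $\Pi$, so declaring $\Pi^{\reff}$ to output what $\Pi$ would output on the corresponding transcript gives $\adv(\Pi^{\reff}) = \adv(\Pi)$ exactly.

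For the globalness invariant, inductively assume that at the start of round $t$ the current rectangle is a $\zeta$-global rectangle $R = A^{(1)}\times\cdots\times A^{(K)}$ with $\zeta = (z^{(1)},\ldots,z^{(K)})$, and suppose player $i$ speaks. The simulated bit of $\Pi$ splits $A^{(i)} = A^{(i)}_0 \cup A^{(i)}_1$, and for each $b\in\{0,1\}$ we apply \Cref{lem:reg} to $A^{(i)}_b$ with base restriction $z^{(i)}$, obtaining a partition $A^{(i)}_b = \bigcup_j A^{(i)}_{b,(j)}$ with each $A^{(i)}_{b,(j)}$ being $z^{(i)}_{b,(j)}$-global for some restriction $z^{(i)}_{b,(j)}$ subsuming $z^{(i)}$. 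The other players' sets and restrictions are untouched and hence remain global, so every sub-rectangle $R_{(b,j)}$ produced in this round is itself a global rectangle subsuming $\zeta$.

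The main calculation is the potential bound. Only player $i$'s coordinate of $(\zeta,R)$ changes, and $|R_{(b,j)}|/|R| = |A^{(i)}_{b,(j)}|/|A^{(i)}|$, so the expected potential gain equals
\[
\sum_{b,j}\frac{|A^{(i)}_{b,(j)}|}{|A^{(i)}|}\Bigl[|\supp(z^{(i)}_{b,(j)})|+\log\tfrac{|\Omega_{z^{(i)}_{b,(j)}}|}{|A^{(i)}_{b,(j)}|}\Bigr]-|\supp(z^{(i)})|-\log\tfrac{|\Omega_{z^{(i)}}|}{|A^{(i)}|}.
\]
For each fixed $b$, \Cref{lem:reg} (applied to $A^{(i)}_b$ with base $z^{(i)}$) bounds the inner $j$-sum by $|\supp(z^{(i)})|+\log(|\Omega_{z^{(i)}}|/|A^{(i)}_b|)+2$. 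Averaging this over $b$ with weights $|A^{(i)}_b|/|A^{(i)}|$, the $\log(|\Omega_{z^{(i)}}|/|A^{(i)}_b|)$ terms rearrange to $\log(|\Omega_{z^{(i)}}|/|A^{(i)}|) + H$, where $H = \sum_b (|A^{(i)}_b|/|A^{(i)}|)\log(|A^{(i)}|/|A^{(i)}_b|)\leq 1$ is the binary entropy of the simulated bit. The total expected potential gain is therefore at most $2+1=3$, exactly matching the requirement in \Cref{def:global_protocol}. The construction is otherwise mechanical; the delicate point is this bookkeeping, which cleanly separates the ``$+2$'' coming from the decomposition step from the ``$+1$'' coming from the binary entropy of the simulated bit of $\Pi$.
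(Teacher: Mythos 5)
Your proposal is correct and follows essentially the same route as the paper: simulate $\Pi$ one bit at a time, apply the Decomposition Lemma to the speaking player's set after each bit, and bound the per-round potential increase by the decomposition's $+2$ plus the binary entropy $+1$ of the simulated bit. The only (immaterial) difference is at the leaves, where you have $\Pi^{\reff}$ copy $\Pi$'s output (giving equality of advantages), whereas the paper lets each leaf rectangle vote for whichever of $\mathcal{D}_{\yes},\mathcal{D}_{\no}$ gives it more mass; both satisfy $\adv(\Pi^{\reff})\geq\adv(\Pi)$.
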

\begin{proof}
     We start with some setup. For convenience, given an arbitrary communication protocol $\Pi$ with $|\Pi|=r$, we consider its tree structure. Without loss of generality, assume that at each round, a player sends exactly one bit of message. In this case, the communication tree is a binary tree. Furthermore, we extend the tree so that all leaf nodes lie at the same depth. In particular, these modifications do not increase the communication cost or decrease the advantage of $\Pi$.  Each node $u$ on the tree has an associated rectangle $R_u = A^{(1)}\times \cdots\times A^{(K)}$. We use $\mathcal{N}_d$ to denote the set of all rectangles (nodes) of $\Pi$ of depth $d$, where root node is of depth $0$. In particular, $\mathcal{N}_{r}$ denotes the set of all leaf rectangles (nodes) of $\Pi$, and each leaf rectangle (node) is labeled with an output, either ``1'' or ``0''. 

    With the setup described above, we now construct the global protocol $\Pi^{\reff}$ (where the superscript ``$\reff$'' stands for ``refined''). 
    The formal construction of $\Pi^{\reff}$ is described in Algorithm \ref{alg:refinement}, but it is helpful to think
    of the construction slightly less formally. Note that viewing the protocol $\Pi$ as a communication tree, we have that each node in it corresponds to a rectangle. Thus, 
    in $\Pi^{\reff}$ we proceed going over the nodes of this tree, starting with the root node of $\Pi$, and decompose each one of these rectangle into global rectangles using~\Cref{lem:reg}. 
\begin{algorithm}
\DontPrintSemicolon
\SetKwInOut{Input}{Input}\SetKwInOut{Output}{Output}
    \caption{Construction of the global protocol $\Pi^{\reff}$}\label{alg:refinement}
    \Input{the $i$-th player gets input $y^{(i)}\in\Omega$}
    \Output{a bit $\texttt{ans}\in\{0,1\}$}
    initialize: $v\leftarrow$ the root of $\Pi$; for every $i\in [K]$, $A^{(i)} \leftarrow \Omega,z^{(i)}  = \emptyset$; $R\leftarrow A^{(1)}\times \cdots\times A^{(K)}$\;
    \While {\text{$v$ is not a leaf node}}{
        suppose player $i$ communicates a bit at node $v$ according to $\Pi$\;
        let $A^{(i)}=A_{0}\cup A_{1}$ be the partition\footnotemark~at $v$ according to $\Pi$\;
        let $b \in\{0,1\}$ be such that $y^{(i)}\in A_{b}$\;
        player $i$ sends $b$, and we update $A^{(i)}\leftarrow A_{b}$, $R\leftarrow A^{(1)}\times \cdots\times A^{(K)}$, $v\gets v_{b}$\;
        \If{$A^{(i)}$ is not $z^{(i)}$-global}{
            $(A_{(1)},z_{(1)}),\dots,(A_{(k)},z_{(k)})\leftarrow\mathrm{Decompose}(A^{(i)},z^{(i)})$ (running \Cref{alg:decomposition})\;
            let $\ell\in [k]$ be such that $y^{(i)}\in A_{(\ell)}$\; 
            player $i$ sends $\ell$, and we update $A^{(i)}\leftarrow A_{_{(\ell)}}, z^{(i)} \leftarrow z_{(\ell)}$, $R\leftarrow A^{(1)}\times \cdots\times A^{(K)}$\;
        }
    }
    let $\texttt{ans} = 1$ if $\mathcal{D}_{\mathrm{yes}}(R)\geq \mathcal{D}_{\mathrm{no}}(R)$, otherwise let $\texttt{ans} = 0$\;
    output $\texttt{ans}$\;
\end{algorithm}
\footnotetext{
Strictly speaking, the protocol $\Pi$ at node $v$ does not directly divide the set $A^{(i)}$ itself, since $A^{(i)}$ is a set dynamically maintained during the execution of the \emph{refined} protocol $\Pi^{\reff}$. However, there always exists a superset $A^{(i)}_{\text{original}}\supseteq A^{(i)}$ that is divided at node $v$ into two subsets based on the message of player $i$ in the original protocol $\Pi$. The partition $A^{(i)}=A_{0}\cup A_{1}$ is then the restriction of the partition of $A^{(i)}_{\text{original}}$ according to $\Pi$. 
}
    
    \paragraph{$\Pi^{\mathrm{ref}}$ is global:}
    to analyze the protocol $\Pi^{\mathrm{ref}}$, we 
    define a round of communication as the event in which a player $i$ sends both a bit $b$ and an integer $\ell$ (see Lines 3–10 in Algorithm \ref{alg:refinement}). Note that every rectangle produced by the refined protocol $\Pi^{\reff}$ after rounds of communications is a global rectangle with some associated restriction. We will keep track of the restriction corresponding to each rectangle $R$. We define $\mathcal{N}^{\mathrm{ref}}_d$ to be the set of all restriction-rectangle pairs $(\zeta, R)$ that are generated by $\Pi^{\mathrm{ref}}$ after the first $d$ rounds of communication. There are two subtle differences between $\mathcal{N}_d$ and $\mathcal{N}^{\mathrm{ref}}_d$:
\begin{enumerate}
    \item $\mathcal{N}_d$ is a set of rectangles, whereas $\mathcal{N}^{\mathrm{ref}}_d$ is a set of pairs, each consisting of a restriction and a rectangle.
    \item The notion of ``depth'' differs: a rectangle $R \in \mathcal{N}_d$ is obtained by protocol $\Pi$ after exactly $d$ bits have been communicated, while a pair $(\zeta, R) \in \mathcal{N}^{\mathrm{ref}}_d$ is produced by protocol $\Pi^{\mathrm{ref}}$ after $d$ rounds of communication, with each round involving the transmission of a bit $b \in \{0,1\}$ and an integer $\ell$.
\end{enumerate}
First, we show that $\Pi^{\reff}$ is a global protocol as per~\Cref{def:global_protocol}. The discussion above shows that (1) $\Pi^{\reff}$ has exactly $r$ rounds of communications; (2) after $0\leq d\leq r$ rounds of communications, the resulting rectangles in $\mathcal{N}_d^{\reff}$ are all global rectangles with restrictions; (3) for all $(\zeta,R)\in \mathcal{N}_{d-1}^{\reff}$ and $(\zeta',R')\in \mathcal{N}_{d}^{\reff}$ such that $R'\subseteq R$, we have $\zeta'$ subsume $\zeta$. Thus, we 
have the first two items in~\Cref{def:global_protocol}, and
we next show the third item.

It suffices to upper bound the potential increment after each round of communications. Assume that after $d$ rounds of communication according to $\Pi^{\reff}$, we obtain a pair $(\zeta,R)\in\mathcal{N}_{d}^{\reff}$ and player $i$ will speak in the next round. The communication of the player $i$ divides $A^{(i)}$ into two parts $A_0,A_1$, which decomposes the rectangle $R$ into two disjoint rectangles $R_0,R_1$ via the message of $b$ (see lines 3 to 6 in Algorithm \ref{alg:refinement}). In lines 7 to 10, $R_0$ and $R_1$ are further decomposed into several global rectangles separately by the message of $\ell$. We have:
\begin{align*}
    \sum_{\substack{(\zeta',R')\in \mathcal{N}_{d+1} ^{\reff}\\ R'\subseteq R}}\frac{|R'|}{|R|}\cdot  p(\zeta',R') &= \frac{|R_0|}{|R|}\sum_{\substack{(\zeta',R')\in \mathcal{N}_{d+1} ^{\reff}\\ R'\subseteq R_0}}\frac{|R'|}{|R_0|}\cdot p(\zeta',R')+ \frac{|R_1|}{|R|}\sum_{\substack{(\zeta',R')\in \mathcal{N}_{d+1} ^{\reff}\\ R'\subseteq R_1}}\frac{|R'|}{|R_1|}\cdot p(\zeta',R')\\
    &\leq \frac{|R_0|}{|R|}\left(p(\zeta,R)+\log_{2}\left(\frac{|R|}{|R_0|}\right)+2\right)+  \frac{|R_1|}{|R|}\left(p(\zeta,R)+\log_{2}\left(\frac{|R|}{|R_1|}\right)+2\right) \\
    &\leq p(\zeta,R)+ 3,
\end{align*}
where the second transition is by~\Cref{lem:reg}, and the last transition comes from the fact that the binary entropy is upper bounded by $1$.

\paragraph{Upper bounding $\adv(\Pi)$ by $\adv\left(\Pi^{\reff}\right)$.} By the definition of $\adv(\Pi)$, we have
    \begin{align*}
        2\cdot \adv(\Pi)\leq \sum_{R\in \mathcal{N}_{r}}|\Dyes(R) - \Dno(R)|.
    \end{align*}
    By construction, it is easy to see that each leaf rectangle $R$ of $\Pi$ is decomposed into several subrectangles by the refined protocol $\Pi^{\reff}$, and we have: 
    \begin{align*}
        2\cdot \adv(\Pi)\leq \sum_{R\in \mathcal{\mathcal{N}}_{r}}|\Dyes(R) - \Dno(R)|
        \leq \sum_{R\in \mathcal{\mathcal{N}}_{r}} \sum_{
        \substack{R'\subseteq R\\ (\zeta,R')\in \mathcal{N}_{r}^{\reff}}} |\Dyes(R') - \Dno(R')|
         = 2\cdot \adv(\Pi^{\reff}). 
    \end{align*}
    Here, the second inequality comes from the fact that each $R\in \mathcal{N}_{r}$ is the disjoint union of all $R'$ such that $R'\subseteq R$ and $(\zeta,R')\in \mathcal{N}_{r}^{\reff}$ for some $\zeta$. The last equality comes from the construction of $\Pi^{\reff}$ in lines 11 and 12. 
\end{proof}

\subsection{Analyzing Global Protocols: Preparatory Statements}
Armed with~\Cref{lem:arbitrary_to_global}, we would now like to
analyze global protocols. In this section, we give two statements that will be useful to us towards this goal.

The first statement captures the following intuition: 
for a $\zeta = \left(z^{(1)},\dots,z^{(K)}\right)$-global rectangle $R$, if $\bigcup_{i=1}^K \supp(z^{(i)})$ is small, then the probability that a random element from $R$ contains an edge between vertices appearing in $\zeta$ which doesn't already appear in one of the $z^{(i)}$'s, is small. 
Intuitively, this is true because there are at most $4|\zeta|^2$ such edges, and by definition of globalness none of them is particularly more prevalent in members of $R$.
\begin{lemma}\label{lem:low_cycle_probability}
    Suppose that $R$ 
    is a $\zeta = \left(z^{(1)},\dots,z^{(K)}\right)$-global rectangle where $|\zeta|= m\leq \alpha n/10$. Let $B\subseteq R $ be the subset consisting of all tuples $(y^{(1)},\dots,y^{(K)})\in R$,  such that $\bigcup_{i=1}^K \left(\mathrm{supp}(y^{(i)})\setminus \supp(z^{(i)})\right)$ contains at least one edge within $\bigcup_{i=1}^KN\left(\supp\left(z^{(i)}\right)\right)$.
    Then
    \begin{align*}        \mathcal{D}_{\mathrm{no}}(B) \leq \mathcal{D}_{\mathrm{no}}(R)\cdot \binom{2m}{2}\cdot \frac{8\alpha K}{n}.
    \end{align*}
\end{lemma}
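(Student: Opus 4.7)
The plan is a union bound over players and edges, with the per-edge probability controlled by globalness. First I would set $V_\zeta := \bigcup_{i=1}^K N(\supp(z^{(i)}))$, noting $|V_\zeta| \leq 2m$, and observe that $B$ is the event that some player $i$ has an edge $e \in \binom{V_\zeta}{2}\setminus \supp(z^{(i)})$ lying in $\supp(y^{(i)})$. Since $\Dno$ is uniform on $\Omega^K$ and its restriction to $R = A^{(1)}\times\cdots\times A^{(K)}$ is the product of uniform distributions on each $A^{(i)}$, a union bound yields
\[
\Dno(B\mid R) \;\leq\; \sum_{i=1}^K \sum_{e\in\binom{V_\zeta}{2}} \Pr_{y^{(i)}\sim \mu|_{A^{(i)}}}\!\left[e\in \supp(y^{(i)})\right],
\]
where dropping the $e\notin \supp(z^{(i)})$ side condition only loosens the bound.

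Next I would bound each summand. If $e$ intersects $N(\supp(z^{(i)}))$ but is not in $\supp(z^{(i)})$, the probability is $0$: since $y^{(i)} \in \Omega_{z^{(i)}}$, the endpoint of $e$ in $N(\supp(z^{(i)}))$ is already matched elsewhere, so $y^{(i)}$ cannot contain $e$. Otherwise, $e$ is disjoint from $N(\supp(z^{(i)}))$, so the two extensions $z^{(i),\pm}_e$ obtained by appending $(e,\pm 1)$ to $z^{(i)}$ are legitimate restrictions subsuming $z^{(i)}$. Applying the $z^{(i)}$-globalness of $A^{(i)}$ to each gives
\[
\frac{|A^{(i)}\cap \Omega_{z^{(i),\pm}_e}|}{|A^{(i)}|} \;\leq\; 2\cdot \frac{|\Omega_{z^{(i),\pm}_e}|}{|\Omega_{z^{(i)}}|},
\]
and summing over the two labels yields $\Pr[e\in \supp(y^{(i)})\mid y^{(i)}\in A^{(i)}] \leq 4\cdot |\Omega_{z^{(i),+}_e}|/|\Omega_{z^{(i)}}|$.

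A direct count of labeled matchings of size $s$ on $N$ vertices (namely, $N!/(s!(N-2s)!)$) shows that for $t = |\supp(z^{(i)})|$,
\[
\frac{|\Omega_{z^{(i),+}_e}|}{|\Omega_{z^{(i)}}|} \;=\; \frac{\alpha n - t}{(n-2t)(n-2t-1)}.
\]
Since $t\leq m \leq \alpha n/10$ and $\alpha \leq 10^{-7}$, we have $n-2t \geq 0.99\,n$, so this ratio is at most $2\alpha/n$ (for $n$ large enough), and thus the per-pair probability is at most $8\alpha/n$. Plugging back,
\[
\Dno(B\mid R) \;\leq\; K\cdot \binom{|V_\zeta|}{2}\cdot \frac{8\alpha}{n} \;\leq\; \binom{2m}{2}\cdot \frac{8\alpha K}{n},
\]
and multiplying by $\Dno(R)$ gives the claim. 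The only nontrivial ingredient is the invocation of globalness to control the single-edge probability; the rest is counting and a union bound, so no real obstacle is expected.
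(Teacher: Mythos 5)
Your argument is correct and is essentially the paper's proof: a union bound over players and over edges inside $\bigcup_i N(\supp(z^{(i)}))$, with each single-edge probability controlled by applying $z^{(i)}$-globalness to the two one-edge extensions of $z^{(i)}$ and a direct count of $|\Omega_{z^{(i),\pm}_e}|/|\Omega_{z^{(i)}}|$. One small slip: you should \emph{not} drop the side condition $e\notin\supp(z^{(i)})$ from the union bound, since for $e\in\supp(z^{(i)})$ the term $\Pr[e\in\supp(y^{(i)})]$ equals $1$ (those edges are forced by $y^{(i)}\in\Omega_{z^{(i)}}$), and your subsequent ``otherwise'' case split silently omits exactly these edges; keeping the side condition (which is built into the definition of $B$) makes your two cases exhaustive and the rest of the argument goes through verbatim.
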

\begin{proof}
   Denote $N = \bigcup_{i=1}^KN\left(\supp\left(z^{(i)}\right)\right)$, 
   and for each $i\in [K]$ define 
   \[
   B^{(i)} = \left\{ (y^{(1)},\ldots,y^{(K)})\in R~:~
   \supp\left(y^{(i)}\right)\setminus \supp\left(z^{(i)}\right)\text{ contains some edge from } N\right\}.
   \]
   Clearly $B\subseteq \bigcup_{i=1}^K B^{(i)} $, so it suffices to show that $\mathcal{D}_{\mathrm{no}}
    \left(B^{(i)}\right)\leq \mathcal{D}_{\mathrm{no}}(R)\cdot\binom{2m}{2}\frac{8\alpha }{n}$ for all $i\in [K]$. 

    Fix $i\in [K]$ and an edge $e=\{u,v\}$. 
    Using $z^{(i)}$-globalness of $A^{(i)}$, the fraction of $y^{(i)}\in A^{(i)}$ such that $e\in \supp\left(y^{(i)}\right)\setminus \supp\left(z^{(i)}\right)$ is upper bounded by
    \[
    2\cdot\frac{\left|\left\{y^{(i)}\in\Omega_{z^{(i)}}\mid e\in\supp(y^{(i)})\right\}\right|}{\left|\Omega_{z^{(i)}}\right|}
    \leq
    4\cdot \frac{\alpha n-|\supp(z^{(i)})|}{\binom{n-2|\supp(z^{(i)})|}{2}}
    \leq \frac{8\alpha}{n}.    
    \]
    Taking a union bound over all edges within $N$ completes the proof. 
\end{proof}
The second statement captures the idea that if a restriction contains no cycle and $R$ is global with respect to it, then 
the measure of $ R $ under $ \mathcal{D}_{\text{yes}}$ and $ \mathcal{D}_{\text{no}} $ remains nearly identical. 
This is the most technical lemma of this paper, and it is proved in Section \ref{sec:global_rectangle}.

\begin{restatable}[Discrepancy bound]{lemma}{discrepancylemma}\label{lem:main}
Let $ R = A^{(1)} \times \cdots \times A^{(K)} $ be a rectangle and suppose that for all $i$, $A^{(i)} $ is $z^{(i)}$-global. Suppose the following conditions hold for constants $\gamma, \eta\in (0,\frac{1}{10})$:
\begin{enumerate}
    \item The edge sets $\left( \supp(z^{(i)})\right)_{i\in [K]}$ are pairwise disjoint, and their union does not contain any cycle.
    \item $\sum_{i}|\supp(z^{(i)})| \leq \gamma n^{1/3}$.
    \item $|A^{(i)}|/|\Omega_{z^{(i)}}| \geq 2^{-\eta n^{1/3}}$ for all $ i $.
\end{enumerate}
Then:
\[
\frac{|\calD_{\no}(R)-\calD_{\yes}(R)|}{\calD_{\no}(R)}\leq \exp\left(K\sqrt{4\eta\gamma^{2}K^{2}+4\eta + 2\gamma}+o(1)\right)-1.
\]
\end{restatable}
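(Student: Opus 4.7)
My approach is to recast $\calD_{\yes}(R)/\calD_{\no}(R) - 1$ as a Fourier-analytic expression on $\mathbb{F}_{2}^{n}$ and close the estimate using global hypercontractivity. For each $i$, set
\[
f_i(x) := \Pr\bigl[\,y^{(i)} \in A^{(i)} \bigm| x\,\bigr] / \mu(A^{(i)}),
\]
where the conditional distribution of $y^{(i)}$ given the hidden bipartition $x$ is the $\calD_{\yes}$ one. Averaging over $x$ gives
\[
\frac{\calD_{\yes}(R)}{\calD_{\no}(R)} = \E_{x \sim \mathbb{F}_{2}^{n}}\prod_{i=1}^{K} f_i(x),
\]
so the task reduces to bounding $\bigl|\E_{x}\prod_i f_i - 1\bigr|$. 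Since $A^{(i)} \subseteq \Omega_{z^{(i)}}$, the function $f_i$ is supported on the affine subspace $L_{z^{(i)}} := \{x : (-1)^{x_u+x_v} = z^{(i)}_{uv}\ \forall \{u,v\} \in \supp(z^{(i)})\}$ and depends on $x$ only through its restriction to $V_i := [n] \setminus N(z^{(i)})$. Writing $f_i = \mathbb{1}_{L_{z^{(i)}}}\cdot g_i(x|_{V_i})$, the acyclicity of $\bigcup_i \supp(z^{(i)})$ (condition~1) makes the defining linear constraints of the $L_{z^{(i)}}$'s linearly independent, so the ``structured factor'' $\prod_i \mathbb{1}_{L_{z^{(i)}}}$ averages to exactly $2^{-\sum_i |\supp(z^{(i)})|}$ and contributes identical multiplicative terms to $\calD_{\yes}(R)$ and $\calD_{\no}(R)$. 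The problem thereby reduces to bounding $\bigl|\E \prod_i g_i - 1\bigr|$, where each $g_i$ is a mean-one function on the cube $\mathbb{F}_{2}^{V_i}$.

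\textbf{H\"older decomposition and Fourier analysis.} Expanding $\prod_i g_i = \sum_{T \subseteq [K]} \prod_{i \in T}(g_i - 1)$ and using $\E(g_i - 1) = 0$ gives $\E \prod_i g_i - 1 = \sum_{|T| \geq 2}\E\prod_{i \in T}(g_i - 1)$; H\"older's inequality on $|T|$ factors then yields
\[
\Bigl|\E\prod_i g_i - 1\Bigr| \leq \sum_{t=2}^{K}\binom{K}{t}\max_i\|g_i - 1\|_{t}^{\,t} \leq \bigl(1 + \max_{i,\,q \leq K}\|g_i - 1\|_{q}\bigr)^{K} - 1.
\]
It therefore suffices to show $\max_{q \leq K}\|g_i - 1\|_{q} \leq \sqrt{4\eta\gamma^{2} K^{2} + 4\eta + 2\gamma} + o(1)$, as the announced $\exp\bigl(K\sqrt{\cdots} + o(1)\bigr) - 1$ bound then follows by exponentiating. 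To bound $\|g_i - 1\|_q$, I Fourier-expand $g_i - 1 = \sum_{\emptyset \neq S \subseteq V_i}\hat{g}_i(S)\chi_S$, observe that $\hat{g}_i(S)$ equals $\mu(A^{(i)})^{-1}$ times a sum over matchings $T$ with $V(T) = S$ of the ``edge-Fourier'' coefficients $\E_{y \sim \Omega}[\mathbb{1}_{A^{(i)}}(y)\chi_T(y)]$, and then apply the level-$d$ inequality for global functions of Keller--Lifshitz--Marcus~\cite{KLM23} in the restricted matching space $\Omega_{z^{(i)}}$. This step uses the $z^{(i)}$-globalness of $A^{(i)}$ directly, while the density lower bound $\mu_i := |A^{(i)}|/|\Omega_{z^{(i)}}| \geq 2^{-\eta n^{1/3}}$ controls the tail $\sum_{|S|=2d}\hat{g}_i(S)^{2}$.

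\textbf{Main obstacle.} The principal technical difficulty lies in the level-$d$ step: transferring the global hypercontractive inequality of~\cite{KLM23}, which is stated for genuine product spaces, to the labeled-matching space $\Omega_{z^{(i)}}$ with the sharp quantitative dependence on $\mu_i$ and on $d$ that produces the particular combination $4\eta\gamma^{2} K^{2} + 4\eta + 2\gamma$. I expect this will require either an ad-hoc level-$d$ inequality tailored to labeled matchings, or a coupling argument showing that $\Omega_{z^{(i)}}$ is Fourier-indistinguishable from a suitable product space at the relevant scales (with the small error absorbed in the $o(1)$ term). A secondary source of bookkeeping will be the summation over level profiles $(d_1,\dots,d_K)$ that must exploit the constraint $\bigoplus_i S_i = \emptyset$ and the acyclicity of condition~1 in order to isolate the three contributions $4\eta\gamma^{2} K^{2}$, $4\eta$, and $2\gamma$ that appear inside the exponent.
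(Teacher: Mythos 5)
Your proposal follows essentially the same route as the paper: express $\calD_{\yes}(R)/\calD_{\no}(R)$ as $\E_{x}\prod_i f_i(x)$ where $f_i$ is a normalized posterior of the bipartition given $A^{(i)}$ (the paper's Lemma~\ref{lem:relation_between_yes_no}), use the acyclicity/disjointness conditions to confine attention to $L(z)=\bigcap_i L(z^{(i)})$, write $\prod_i f_i$ as a product of centered-plus-one factors, apply H\"older to reduce to high-norm bounds on the centered factors, and establish those norm bounds via a level-$d$ inequality adapted from~\cite{KLM23}. This is precisely the structure of the paper's argument (\Cref{lem:relation_between_yes_no}, the expansion in the proof of \Cref{lem:main}, and \Cref{lem:norm}).

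However, there is a small but genuine error in the H\"older step. You drop the $|T|=1$ terms by appealing to $\E(g_i-1)=0$, but that identity holds only for the uniform measure on $\mathbb{F}_2^{V_i}$, which is where $g_i$ is normalized to be mean one. The expectation in $\E\prod_i g_i$ is over $x$ uniform on $L(z)$, and the marginal of $L(z)$ on $V_i$ is generally \emph{not} uniform: for $j\neq i$ an edge $\{u,v\}\in\supp(z^{(j)})$ may have both endpoints in $V_i=[n]\setminus N(z^{(i)})$, so the defining constraint of $L(z^{(j)})$ cuts down $\mathbb{F}_2^{V_i}$. Thus $\E_{x\in L(z)}[g_i(x|_{V_i})]$ need not equal $1$, and the $|T|=1$ terms cannot be discarded. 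The paper correctly keeps all $T\neq\emptyset$; this does not change the final bound, since $\sum_{T\neq\emptyset}\prod_{i\in T}\|h_i\|_K=\prod_i(1+\|h_i\|_K)-1\leq e^{K\max_i\|h_i\|_K}-1$ absorbs the singletons anyway, so the fix is trivial—but the justification you gave is incorrect and should be removed.

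Beyond that, the ``Main obstacle'' you flag is indeed where essentially all of the paper's remaining work lies: the bound $\|h_i\|_K\leq\sqrt{4\eta\gamma^2K^2+4\eta+2\gamma}+o(1)$ is the paper's \Cref{lem:norm}, whose proof occupies Sections 3.2--3.3 (relating Fourier coefficients of $h_i$ on $L(z)$ to those on $L(z')$ via refinement/unrefinement, \Cref{lem:override-constraint}, and then to $K$-norms via \Cref{lem:decay-to-K-norm}) together with all of Section 4 (the matching-space hypercontractive inequality and projected level-$d$ inequality, \Cref{lem:derivative-based-inequality} and \Cref{thm:level-d-inequality}). You correctly diagnose that the crux is transferring~\cite{KLM23} from product spaces to $\Omega$, and the paper does this via a formal-polynomial embedding (\Cref{def:formal-polynomial}, \Cref{prop:approximate-product}, \Cref{lem:KLM-theorem}) rather than a coupling argument; but your proposal only names this step without supplying it, so as a proof of \Cref{lem:main} it is incomplete at exactly the point you identify.
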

\begin{proof}
    Deferred to~\Cref{sec:global_rectangle}.
\end{proof}

\subsection{Analyzing Global Protocols}
We are now ready to analyze global protocols. The following lemma asserts that any global protocol for $\mathsf{DIHP}(n,\alpha,K)$ with small number of rounds of communication has a small advantage.
\begin{lemma}\label{lem:analysis_of_global_protocols}
    Let $\alpha \in (0,10^{-7}]$ be a constant, and suppose that $r\leq 10^{-10}n^{1/3}/K^2$. If $\Pi$ is an $r$-round, global protocol for $\mathsf{DIHP}(n,\alpha,K)$, then $\adv(\Pi)< 0.1$.
\end{lemma}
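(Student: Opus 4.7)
The plan is to apply the discrepancy method of \Cref{lem:disc_easy} to the leaf partition $\mathcal{N}_r^{\reff}$ induced by $\Pi$, splitting the sum over rectangles according to whether each one satisfies the hypotheses of the discrepancy bound in \Cref{lem:main}. First I would establish by induction on the round number, using the per-round potential bound in \Cref{def:global_protocol}, that $\mathbb{E}_{(\zeta, R) \sim \mathcal{D}_{\no}}[p(\zeta, R)] \leq 3r$. Setting $\gamma = \eta := c/K^2$ for a small absolute constant $c$ to be chosen, a direct calculation gives $K\sqrt{4\eta\gamma^2 K^2 + 4\eta + 2\gamma} = O(\sqrt{c})$, so the right-hand side of \Cref{lem:main} is bounded by some $\epsilon \leq 0.04$ for sufficiently small $c$.

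Declare a leaf rectangle $(\zeta, R)$ \emph{good} if (a) $\sum_i|\supp(z^{(i)})| \leq \gamma n^{1/3}$, (b) $|A^{(i)}|/|\Omega_{z^{(i)}}| \geq 2^{-\eta n^{1/3}}$ for each $i$, and (c) the supports $\{\supp(z^{(i)})\}_{i \in [K]}$ are pairwise disjoint with acyclic union; let $\mathcal{G}$ and $\mathcal{B}$ denote the good and bad rectangles. Applying \Cref{lem:main} to each $(\zeta, R) \in \mathcal{G}$ gives $|\mathcal{D}_{\yes}(R) - \mathcal{D}_{\no}(R)| \leq \epsilon\, \mathcal{D}_{\no}(R)$, so the good rectangles contribute at most $\epsilon$ to the discrepancy. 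For the bad rectangles I would use the trivial bound $|\mathcal{D}_{\yes}(R) - \mathcal{D}_{\no}(R)| \leq \mathcal{D}_{\yes}(R) + \mathcal{D}_{\no}(R)$; summing the good-rectangle estimate yields $\mathcal{D}_{\yes}(\mathcal{G}) \geq (1-\epsilon)\mathcal{D}_{\no}(\mathcal{G})$ and hence $\mathcal{D}_{\yes}(\mathcal{B}) \leq \mathcal{D}_{\no}(\mathcal{B}) + \epsilon$ automatically, reducing the task to bounding $\mathcal{D}_{\no}(\mathcal{B})$ alone.

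To bound $\mathcal{D}_{\no}(\mathcal{B})$, I would handle the failure modes separately. Failure of (a) or (b) both entail $p(\zeta, R) > \gamma n^{1/3}$, and are handled by Markov's inequality on the potential, contributing mass at most $3r/(\gamma n^{1/3}) = O(1/c)$. The main technical obstacle is the geometric failure (c), which I would control by an iterative per-round argument: conditional on the rectangle $(\zeta, R)$ at the start of a round being global with $|\zeta| \leq \gamma n^{1/3}$, each specific edge appears in $A^{(i)}$ with conditional probability at most $4\alpha/n$ by the globalness estimate at the heart of \Cref{lem:low_cycle_probability}. Creating an overlap requires a newly exposed edge to coincide with one of $O(|\zeta|)$ existing edges, while creating a cycle requires connecting two vertices within a common connected component of $\bigcup_i \supp(z^{(i)})$ (at most $O(|\zeta|^2)$ such edges); summing over the $r$ rounds yields overlap and cycle probabilities of order $O(r\alpha\gamma/n^{2/3})$ and $O(r\alpha\gamma^2/n^{1/3})$ respectively, both negligible in our regime $r \leq 10^{-10}n^{1/3}/K^2$.

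Combining everything, $\adv(\Pi) \leq \epsilon + \mathcal{D}_{\no}(\mathcal{B}) < 0.1$ for sufficiently small $c$. The principal difficulty lies in handling the geometric condition (c): a naive union bound over all $\binom{n}{2}K^2$ edge/pair combinations gives the useless estimate $O(\alpha^2 K^2)$ in our regime, since the hypothesis permits $K$ to grow like $n^{1/6}$, so it is essential to exploit globalness of the intermediate rectangles round-by-round and take advantage of the fact that $|\zeta|$ stays at most $\gamma n^{1/3}$ throughout any good execution.
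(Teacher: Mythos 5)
Your proposal is essentially the same as the paper's proof. The paper partitions the leaf pairs into $\mathcal{R}_1$ (large potential), $\mathcal{R}_2$ (cycle or overlap in $\bigcup_i \supp(z^{(i)})$), and $\mathcal{R}_3$ (good), handles $\mathcal{R}_1$ by Markov applied to the $3r$ bound on average potential, handles $\mathcal{R}_2$ by a per-round argument via \Cref{lem:low_cycle_probability} (your conditions (a), (b), (c) correspond precisely to $\mathcal{R}_1^c$, $\mathcal{R}_1^c$, $\mathcal{R}_2^c$), and invokes \Cref{lem:main} on $\mathcal{R}_3$; this matches your decomposition into good/bad and your handling of each type. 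You also correctly identify the central difficulty (a naive union bound over all edge/pair combinations gives $O(\alpha^2 K^2)$, useless when $K$ grows polynomially) and the cure (per-round globalness with $|\zeta|$ controlled).

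One small point where you would need to be more careful: you characterize the cycle-creating event as ``a newly exposed edge connecting two vertices within a common connected component of $\bigcup_i \supp(z^{(i)})$''. This is not quite right when a single round exposes several new edges at once, which the decomposition algorithm can do: a cycle can be closed by an edge whose endpoints lie in distinct components of the old $\zeta$, with the linking path passing through other edges exposed in the same round. The paper's Claim 3.13 handles this by a minimality argument: at the first depth $d''$ where a cycle or overlap appears, some new edge $e$ on the offending cycle has both endpoints in $N(\zeta')$ (the vertex set of the parent restriction $\zeta'$), which is the weaker and correct requirement. Fortunately this does not change the count: the number of edges inside $N(\zeta')$ is still $\binom{2|\zeta'|}{2} = O(|\zeta'|^2)$, so your final quantitative estimate survives once the characterization is corrected to ``any edge within $N(\zeta')$''.
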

\begin{proof}
Write $r = \gamma n^{1/3}$ so that $\gamma \leq 10^{-10}/K^2$. For each $0\leq d\leq r$ define $\mathcal{N}_{d}$  
as the set of pairs of restrictions and rectangles $(\zeta,R)$ that are obtained after $d$ rounds of communications according in $\Pi$. We further classify 
pairs $(\zeta,R)\in \mathcal{N}_{r}$ (which correspond to leaf nodes) into three types:
    \begin{enumerate}
        \item $\mathcal{R}_1$ is the collection of pairs $(\zeta, R)\in \mathcal{N}_{r}$ such that $p(\zeta, R)\geq 10^3 \gamma n^{1/3}$; 
        \item $\mathcal{R}_2$ is the collection of 
        pairs $(\zeta, R)\in \mathcal{N}_{r}$ such that $\bigcup_{i=1}^K \supp\left(z^{(i)}\right)$ contains a cycle or there exist two distinct indices $i,j$ such that $ \supp\left(z^{(i)}\right)\cap \supp\left(z^{(j)}\right)\neq \emptyset$; 
        \item $\mathcal{R}_3$ is the collection of 
        pairs $(\zeta, R)\in \mathcal{N}_{r}$ not in the first two types.
    \end{enumerate}
    We will prove the following properties regarding the partition $\mathcal{R}_1, \mathcal{R}_2, \mathcal{R}_3$:
    \begin{enumerate}
        \item $\sum_{(\zeta, R)\in \mathcal{R}_1} \Dno(R)\leq 0.01$;
        \item $\sum_{(\zeta,R)\in \mathcal{R}_2\setminus\mathcal{R}_1} \Dno (R)\leq 0.01$;
        \item $|\Dno(R)  - \Dyes (R)|\leq 0.01\cdot \Dno(R)$ for every $(\zeta, R)\in \mathcal{R}_3$.
    \end{enumerate}
    Before establishing these items, we quickly show how to conclude the proof of the lemma from them. Indeed, by~\Cref{lem:disc_easy} we have
    \begin{align*}
        2\cdot  \adv(\Pi^{\reff} )  &\leq \sum_{(\zeta,R)\in \mathcal{N}_{r}} |\Dno(R) - \Dyes(R)| \\
        &= \sum_{(\zeta,R)\in \mathcal{R}_1\cup \mathcal{R}_2} |\Dno(R) - \Dyes(R)| + \sum_{(\zeta, R)\in \mathcal{R}_3}  |\Dno(R) - \Dyes(R)|\\
        &\leq \sum_{(\zeta,R)\in \mathcal{R}_1\cup \mathcal{R}_2}\left(\Dyes(R)+\Dno(R)\right) + \sum_{(\zeta,R)\in \mathcal{R}_3}  |\Dno(R) - \Dyes(R)| \\
        & =  \sum_{(\zeta,R)\in \mathcal{R}_1\cup \mathcal{R}_2}\Dno(R) +1 - \sum_{(\zeta,R)\in \mathcal{R}_3}\Dyes(R) +\sum_{(\zeta,R)\in \mathcal{R}_3}  |\Dno(R) - \Dyes(R)|\\
        &\leq \sum_{(\zeta,R)\in \mathcal{R}_1\cup \mathcal{R}_2}\Dno(R) +1 - \sum_{(\zeta,R)\in \mathcal{R}_3}\Dno(R)  +2\cdot \sum_{(\zeta,R)\in \mathcal{R}_3}  |\Dno(R) - \Dyes(R)|\\
        &\leq 2\cdot \sum_{(\zeta,R)\in \mathcal{R}_1\cup \mathcal{R}_2}\Dno(R)+0.02\cdot \sum_{(\zeta,R)\in \mathcal{R}_3} \Dno(R) <0.1.
    \end{align*}
    In the sixth transition, we used $|\Dno(R)  - \Dyes (R)|\leq 0.01\cdot \Dno(R)$ for every $(\zeta, R)\in \mathcal{R}_3$, and in the last transition we used 
    $\sum_{(\zeta, R)\in \mathcal{R}_1} \Dno(R)\leq 0.01$
    and $\sum_{(\zeta, R)\in \mathcal{R}_2} \Dno(R)\leq 0.01$.
    We now move on the prove the properties of $\mathcal{R}_1,\mathcal{R}_2,\mathcal{R}_3$.
    
    \paragraph{Proving the property of $\mathcal{R}_3$:}
    For any $(\zeta,R)\in \mathcal{R}_3$, 
    writing $\zeta = \left(z^{(1)},\dots,z^{(K)}\right)$ and 
    $R= A^{(1)}\times \cdots\times A^{(K)}$, we get that as $(\zeta,R)\not\in \mathcal{R}_1$ we have 
    \[
    |\zeta|\leq p(\zeta,R)\leq 10^3\gamma n^{1/3}\leq 10^{-7}n^{1/3}/K^2,
    \qquad 
    \frac{|A^{(i)}|}{|\Omega_{z^{(i)}}|}\geq 2^{-p(\zeta, R)}\geq 2^{-10^{-7}n^{1/3}/K^2}.
    \]
    Also, as $(\zeta,R)\not\in \mathcal{R}_2$ 
    we get that $\bigcup_{i=1}^K \supp\left(z^{(i)}\right)$ does not contain any cycle, and for any two distinct indices $i,j\in [K]$ we have $\supp\left(z^{(i)}\right)\cap \supp\left(z^{(j)}\right)= \emptyset$.
    Using~\Cref{lem:main}, we get our desired bound of $$\frac{|\Dno(R)  - \Dyes (R)|}{ \Dno(R)}\leq \exp\left(K\sqrt{4\cdot (10^{-7}/K^2)^3K^2+6\cdot (10^{-7}/K^2)}\right)-1\leq 0.01.$$

    \paragraph{Upper bounding $\mathcal{R}_1$.} To upper bound the total weight of pairs $(\zeta, R)$ with $p(\zeta,R)\geq 10^3 \gamma n^{1/3}$, we first bound the weighted sum of potentials $p(\zeta,R)$ over all leaf pairs $(\zeta,R)\in \mathcal{N}_{r}$. More precisely, we show that
    \begin{align}\label{eq:weighted_sum}
        \sum_{(\zeta,R)\in \mathcal{N}_{r}} \frac{|R|}{|\Omega^{K}|}\cdot p(\zeta,R) \leq 3\cdot r,
    \end{align}
    and the proof proceeds by induction argument on the depth $d$. We prove that for all $d$,
    \begin{align}\label{eq:induction_hypo}
    \sum_{(\zeta,R)\in \mathcal{N}_d} \frac{|R|}{|\Omega^{ K}|}\cdot p(\zeta,R)\leq 3\cdot d.
    \end{align}
    When $d=0$ the statement is clear as $\mathcal{N}_0$ only
    contains the trivial rectangle $\Omega^K$, so it has potential equal to $0$.
    Let $d>0$ and assume that~\eqref{eq:induction_hypo} holds for $d-1$. We have
    \begin{align*}
        \quad\sum_{(\zeta',R')\in\mathcal{N}_d} \frac{|R'|}{|\Omega^{K}|}\cdot p(\zeta',R')  
        &= \sum_{(\zeta,R)\in \mathcal{N}_{d-1}} \frac{|R|}{|\Omega^{K}|}\sum_{R'\subseteq R,(\zeta',R')\in \mathcal{N}_d} \frac{|R'|}{|R|}\cdot p(\zeta',R') \\
        &\leq \sum_{(\zeta,R)\in \mathcal{N}_{d-1}} \frac{|R|}{|\Omega^{K}|} \cdot \left(p(\zeta,R)+3 \right)&\\
        &\leq 3(d-1) + 3 = 3d,
    \end{align*}
    where the first transition is by definition, the second one
    is by~\Cref{def:global_protocol}, and the last transition is by the inductive hypothesis. This completes the inductive step, and in particular establishes~\eqref{eq:weighted_sum}.
    
    The bound on $\mathcal{R}_1$ now follows by Markov's inequality applied on~\eqref{eq:weighted_sum}:
    \[
        \sum_{(\zeta,R)\in \mathcal{R}_1} \Dno(R) = \sum_{(\zeta,R)\in \mathcal{R}_1} \frac{|R|}{|\Omega^{K}|}\leq \frac{\sum_{(\zeta,R)\in \mathcal{N}_{r}}\frac{|R|}{|\Omega^{K}|}\cdot p(\zeta,R)}{10^3\gamma n^{1/3}}\leq \frac{3\cdot r}{10^3\cdot r}<0.005.
    \]
    \paragraph{Upper bounding $\mathcal{R}_2\setminus \mathcal{R}_1$.} 
    The key ingredient in the proof is the following observation: 
    \begin{align}\label{eq:cycle}
        \sum_{(\zeta,R)\in \mathcal{R}_2\setminus \mathcal{R}_1} \Dno(R) \leq \sum_{d=0}^{r-1}\sum_{\substack{\left(\zeta,R\right)\in\mathcal{N}_d\\ |\zeta|\leq 10^3\gamma n^{1/3}}} \Dno(R)\cdot \frac{8\alpha K}{n}\cdot \binom{2|\zeta|}{2}.
    \end{align}
    Indeed, once we have that, the right hand side is clearly
    at most $\frac{8\alpha K}{n}\cdot r\cdot \binom{2\cdot 10^3 \gamma n^{1/3}}{2} < 0.01$ using the conditions on $r$ and
    $\gamma$.
    
    We now move on to establish~\eqref{eq:cycle}. For every pair $\left(\zeta=(z^{(1)},\dots,z^{(K)}),R\right)$ such that $R$ is a $\zeta$-global rectangle, 
    let $N(\zeta) = \bigcup_{i=1}^KN\left(\supp\left(z^{(i)}\right)\right)$ and denote by $B(\zeta,R)\subseteq \Omega^{K}$ the set
    \begin{align*}
        \left\{(y^{(1)},\dots,y^{(K)})\in R: \bigcup_{i=1}^K\left(\supp\left(y^{(i)}\right)\setminus \supp\left(z^{(i)}\right)\right)\text{ contains an edge within }N(\zeta)\right\}.
    \end{align*}
    With this definition, we prove ~\eqref{eq:cycle} in two steps: 
    \begin{itemize}
        \item First, we prove that \begin{align}\label{eq:cycle_2}
        \sum_{(\zeta,R)\in \mathcal{R}_2\setminus \mathcal{R}_1} \Dno(R) \leq \sum_{d=0}^{r-1}\sum_{\substack{\left(\zeta,R\right)\in\mathcal{N}_d \\ |\zeta|\leq 10^3\gamma n^{1/3}}} \Dno(B(\zeta,R)).
        \end{align}
        \item Secondly, we use~\Cref{lem:low_cycle_probability} to get that $\Dno(B(\zeta,R))\leq \frac{8\alpha K}{n}\cdot \binom{2|\zeta|}{2} \cdot \Dno(R)$, and plug it into the right hand side of~\eqref{eq:cycle_2} to get~\eqref{eq:cycle}.
    \end{itemize}
    The rest of the argument is devoted to proving~\eqref{eq:cycle_2}, and we first establish the following claim. 
    \begin{claim}\label{claim:cycle}
    For every pair $(\zeta,R)\in \mathcal{R}_2\setminus \mathcal{R}_1$, there exist an integer $d'\in \{0,\dots,r-1\}$ and a pair $(\zeta',R')\in \mathcal{N}_{d'}$ with $|\zeta'|\leq 10^3\gamma n^{1/3}$ such that $R\subseteq B(\zeta',R')$. 
    \end{claim}
    \begin{proof}
        Fix a pair $(\zeta,R)\in \mathcal{R}_2\setminus \mathcal{R}_1$, so that by definition $(\zeta,R)\in \mathcal{N}_r$. It follows that for every integer $d''\in \{0,\dots,r\}$, there exists a pair $(\zeta'',R'')\in \mathcal{N}_{d''}$ such that $R\subseteq R''$ and $|\zeta''|\leq |\zeta|\leq 10^3\gamma  n^{1/3}$. We pick the smallest $d''\in \{0,\dots,r\}$ such that there exists a pair $(\zeta'',R'')\in \mathcal{N}_{d''}$ with the following properties: 
        \begin{enumerate}
            \item $R\subseteq R''$ and $|\zeta'|\leq |\zeta|\leq 10^3\gamma  n^{1/3}$;
            \item\label{property:have_cycle} writing $\zeta''=\left(z''^{(1)},\dots,z''^{(K)}\right)$, 
            we have that either (1) $\bigcup_{i=1}^K \supp\left(z''^{(i)}\right)$ contains a cycle, 
            or (2) there exists an edge $e$ such that $e\in \supp\left(z''^{(i)}\right)\cap \supp\left(z''^{(j)}\right)$ for two distinct indices $i,j\in[K]$. 
        \end{enumerate}
        With $d''$ and $(\zeta'',R'')$ in hand, we set $d' = d''-1$ and pick $(\zeta' = \left(z'^{(1)},\dots,z'^{(K)}\right),R')\in\mathcal{N}_{d'}$ to be the parent node of $(\zeta'',R'')$ in the communication tree. 
        We claim that the pair $(\zeta',R')$ satisfies the assertion of the claim. First we argue that $1\leq d''\leq r$, so that this parent node exists. 
        Indeed, $(\zeta, R)$ itself has the two properties 
        so $d''\leq r$, but the pair at the root node 
        $(\vec{0},\Omega^{K})\in \mathcal{N}_{0}$ does not so $d''>0$.   

        Next, we show that $R \subseteq B(\zeta', R')$, 
        and as $R\subseteq R''$ it suffices to show that $R''\subseteq B(\zeta',R')$.
        By the minimality of $d''$ we know that property \ref{property:have_cycle} does not hold for the pair $(\zeta', R')$. Furthermore, because exactly one player speaks in each round, there exists at most one coordinate $i\in[K]$ for which $z''^{(i)} \neq z'^{(i)}$. We consider two cases, according to which 
        condition in property~\ref{property:have_cycle} holds.
        \begin{enumerate}
            \item If $\bigcup_{j=1}^K \supp\left(z''^{(j)}\right)$ contains a cycle, 
            then that cycle must contain an edge from $\supp(z''^{(i)}) \setminus \supp(z'^{(i)})$ as otherwise $(\zeta',R')$
            would also satisfy property~\ref{property:have_cycle}. Calling that edge $e = \{u,v\}$, we note that as the support of $z''^{(i)}$ is a matching and $u,v$ are on a 
            cycle in $\bigcup_{j=1}^K \supp\left(z''^{(j)}\right)$, there must be $j,\ell\neq i$ such that $u$ appears in some edge in
            $\supp\left(z''^{(j)}\right) = \supp\left(z'^{(j)}\right)$
            and $v$ appears in some edge in
            $\supp\left(z''^{(\ell)}\right) = \supp\left(z'^{(\ell)}\right)$. 
            In particular, we get that $u,v\in N(\zeta')$, 
            so $e$ is an edge within $N(\zeta')$.
            Noting that any $(y''^{(1)},\dots, y''^{(K)})\in R''$ has $e\in \supp(y''^{(i)})\setminus \supp(z'^{(i)})$, we get that 
            $(y''^{(1)},\dots, y''^{(K)})\in B(\zeta',R')$.
            \item Else, there are $j\neq \ell$ and an edge 
            $e = \{u,v\}\in \supp(z''^{(j)})\cap \supp(z''^{(\ell)})$. As property~\ref{property:have_cycle} fails for 
            $(\zeta',R')$ it must be the case that either $j=i$
            or $\ell = i$, say without loss of generality that
            $\ell = i$ so that $e\not\in \supp(z'^{(i)})$. As $e\in \supp(z''^{(j)}) = \supp(z'^{(j)})$ we get that $u,v\in N(\zeta')$
            so $e$ is an edge within $N(\zeta')$.
            Noting that any $(y''^{(1)},\dots, y''^{(K)})\in R''$ has $e\in \supp(y''^{(i)})\setminus \supp(z'^{(i)})$, we get that 
            $(y''^{(1)},\dots, y''^{(K)})\in B(\zeta',R')$.\qedhere
        \end{enumerate}
    \end{proof}
    We finish by noting that~\Cref{claim:cycle} implies~\eqref{eq:cycle_2}. Indeed, noting that the 
    for distinct rectangle-restriction pairs $(\zeta_1,R_1), (\zeta_2,R_2)\in \mathcal{R}_2\setminus \mathcal{R}_1$ we have $R_1\cap R_2=\emptyset$, we get that their 
    contributions on the right hand side of~\eqref{eq:cycle_2} is disjoint so no overcounting occurs.        
\end{proof}

\subsection{Lower Bound for \textsf{DIHP}}
Now, we are ready to prove~\Cref{thm:main}, restated below.
\thmmain*
\begin{proof}
    Assume we have a communication protocol $\Pi$ for $\textsf{DIHP}(n,\alpha, K)$ such that $|\Pi| =r \leq 10^{-10}n^{1/3}/K^2$. By ~\Cref{lem:arbitrary_to_global} we get that there exists a $r$-round global protocol $\Pi^{\reff}$ such that $\adv(\Pi)\leq \adv(\Pi^{\reff})$, and by~\Cref{lem:analysis_of_global_protocols} we have $\adv(\Pi^{\reff})\leq 0.1$. Thus, 
    $\adv(\Pi)\leq 0.1$, and the proof is concluded.
\end{proof}

\begin{remark} 
A more refined analysis yields a $\Omega(\sqrt{n})$ lower bound for the search version of the $\textsf{DIHP}$ problem, as studied in~\cite{CKP+23}. Roughly speaking, the $\Theta(n^{1/3})$ bottleneck 
in the above argument appears both when we apply~\Cref{lem:main} as well as when we use~\eqref{eq:cycle_2} to bound the contribution
of $\mathcal{R}_2\setminus\mathcal{R}_1$. 
As we do not know how to improve~\Cref{lem:main} beyond $\Theta(n^{1/3})$, this is the limit of our argument. 
However,~\Cref{lem:main} is not necessary when studying the search version of the problem, and in this context the only bottleneck 
is the use of~\eqref{eq:cycle_2}. A more 
careful analysis may yield a lower bound of $\Omega(\sqrt{n})$
in this context.

\end{remark}

\section{Bounding the Discrepancy of Global Rectangles}\label{sec:global_rectangle}
In this section we prove~\Cref{lem:main} assuming a global hypercontractivity result, which we prove later in~\Cref{sec:global_hypercontractivity}. 

\subsection{The Induced Distribution over Bipartitions}
Recall that a node in the protocol tree corresponds to 
a rectangle $R = A^{(1)}\times\ldots\times A^{(K)}$, 
where $A^{(i)}\subseteq \Omega$ is the subset of
possible inputs for player $i$. A natural question
is what information do players $j\neq i$ learn about 
the bipartitions $x\in \mathbb{F}_2^U$ that are 
consistent with player $i$ based on $A^{(i)}$. These type of questions come 
up in the proof of~\Cref{lem:main}, and in this section 
we formally define the corresponding probability distribution
over bipartition and state a result about it.

To begin the discussion, note that each element $y\in \Omega^{U,m}$ corresponds to a labeled set of edges on $U$, 
giving rise to bipartitions $x\in\mathbb{F}_2^U$ that are consistent with it, which we denote by $L(y)$:
\begin{definition}\label{def:L(y)}
    For every string $y\in \{-1,0,1\}^{\binom{U}{2}}$, we define the affine subspace $L(y)\subseteq \mathbb{F}_{2}^{U}$ by
    \[
    L(y):=\bigcap_{\{u,v\}:\,y_{uv}=1}\left\{x\in\mathbb{F}_{2}^{U}:x_{u}+ x_{v}=0\right\}\cap\bigcap_{\{u,v\}:\,y_{uv}=-1}\left\{x\in\mathbb{F}_{2}^{U}:x_{u}+ x_{v}=1\right\}.
    \]
    When used in this context, we call the string $y$ a ``constraint''.
\end{definition}
We will often consider constraints $y$ whose supports are matchings, i.e., such that $y\in\Omega^{[n],\alpha n}$. In that case we note that $L(y)$ is a subspace of co-dimension $\alpha n$ and so $|L(y)|=2^{n-\alpha n}$. It is easy to see that each bipartition $x\in L(y)$ is consistent with $y$, and furthermore
if we sample $(y^{(1)},\ldots,y^{(K)})\sim \mathcal{D}_{\text{yes}}$ conditioned on $y^{(1)} = y$, then
the distribution of $x$ is uniform over $L(y)$.

We may now associate a probability distribution over $\mathbb{F}_2^U$ with a given subset $A\subseteq \Omega = \Omega^{U,m}$:
\begin{definition}\label{def:conditional-probability}
    For a subset $A\subseteq \Omega^{U,m}$ and $x\in\mathbb{F}_{2}^{U}$, we define
    \[
    \Pr[x|A]:=\frac{1}{|A|}\sum_{y\in A}\frac{\mathbbm{1}\{x\in L(y)\}}{|L(y)|}.
    \]
\end{definition}
In words, each $y\in A$ induces the uniform distribution over $L(y)$, and the distribution associated with $A$ is the uniform mixture of all of these.

We will be interested in studying the distribution of $x|A$ 
in the case that $A$ is global, and to gain some intuition 
we first consider the case that the set $A\subseteq \Omega^{U,m}$ is chosen
randomly by including each element with probability $p$. 
In that case, one can show (using standard concentration 
bounds) that a typical $x\in \mathbb{F}_2^{U}$ lies in 
$(1\pm o(1))2^{-m}\cdot p|\Omega^{U,m}|$ many of the $L(y)$'s, in
which case the distribution of $x|A$ is close to being uniform.
We will show that something close in spirit holds in the case that $A$ is global (instead of random). 
Our notion of closeness is with respect to $K$-norms, where $K$ is fairly large (this is the number of players). This type of closeness is stronger than the more standard statistical distance closeness, and it is required in our application. The statement below is essentially such a statement, except that it applies in a slightly more general case that includes restrictions:
\begin{lemma}\label{lem:norm}
Let $A\subseteq \Omega_{z'}$ be a $z'$-global set, 
let $z$ be a constraint that subsumes $z'$ and define the function $h:L(z)\rightarrow \mathbb{R}$ by  $h(x)= 2^{n - |\supp(z')|} \Pr[x|A]-1$. 
Suppose that the following conditions hold for constants $\gamma,\eta\in (0,\frac{1}{10})$:
\begin{enumerate}
    \item $\supp(z)$ does not contain any cycles; 
    \item $|\supp(z)|\leq \gamma n^{1/3}$; 
    \item $|A|/|\Omega_{z'}|\geq 2^{-\eta n^{1/3}}$.
\end{enumerate}
Then we have that $\lVert h \rVert_K \leq \sqrt{4\eta\gamma^{2}K^{2}+4\eta + 2\gamma}+o(1)$. Here, the $K$-norm of $h$ is with respect to the uniform distribution on the subspace $L(z)$, i.e., 
$\lVert h \rVert_K = \left(\E_{x\in L(z)}|h(x)|^K\right)^{1/K}$.
\end{lemma}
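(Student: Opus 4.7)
The proof will be Fourier analytic: it reduces the $K$-norm of $h$ to an $L^K$-type bound on an expression involving the indicator of $A$ in the matching space $\Omega_{z'}$, which can then be controlled via the global hypercontractivity / level-$d$ inequality to be proved in \Cref{sec:global_hypercontractivity}.

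The first step is to put $h$ in a Fourier-friendly form. Set $f = (|\Omega_{z'}|/|A|)\mathbbm{1}_A$ on $\Omega_{z'}$, so that $\E_y[f]=1$, and let $g = f-1$. Using the identity $\mathbbm{1}\{x\in L(y)\} = 2^{-\alpha n}\prod_{e\in\supp(y)}(1+y_e\chi_e(x))$ with $\chi_e(x)=(-1)^{x_u+x_v}$ for $e=\{u,v\}$, and observing that for $y\in\Omega_{z'}$ and $x\in L(z)\subseteq L(z')$ the factors indexed by $\supp(z')$ trivialize to $2^{|\supp(z')|}$, one obtains
\[
h(x)=\E_{y\in\Omega_{z'}}\!\!\left[g(y)\prod_{e\in M(y)}(1+y_e\chi_e(x))\right],
\]
where $M(y)=\supp(y)\setminus\supp(z')$. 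Crucially, $g$ inherits the $z'$-globalness of $A$, which is what will later enable the application of global hypercontractivity.

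The second step is to expand $\|h\|_K^K=\E_{x\in L(z)}[h(x)^K]$ by introducing $K$ i.i.d.\ copies $y^{(1)},\dots,y^{(K)}$ of a uniform element of $\Omega_{z'}$, and expanding each factor $\prod_{e\in M(y^{(i)})}(1+y^{(i)}_e\chi_e(x))$ as a sum over subsets $T_i\subseteq M(y^{(i)})$ with characters $\chi_{\partial T_i}(x)$. After interchanging expectations, averaging over $x\in L(z)$ annihilates every product character $\chi_{\partial(T_1\oplus\cdots\oplus T_K)}$ whose index fails to lie in the span of the edge-indicator vectors of $\supp(z)$ inside $\mathbb{F}_2^n$. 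Acyclicity of $\supp(z)$ guarantees this span has dimension exactly $|\supp(z)|$ and admits a clean combinatorial description: the surviving $K$-tuples $(T_1,\dots,T_K)$ are precisely those for which $T_1\cup\cdots\cup T_K$ can be completed by a subset of $\supp(z)$ into an Eulerian multigraph, i.e., a ``cycle modulo $\supp(z)$''.

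The third step is to bound the surviving sum by grouping contributions according to the degree profile $(|T_1|,\dots,|T_K|)$ and applying the global hypercontractivity / level-$d$ inequality for global functions from \Cref{sec:global_hypercontractivity}. The density lower bound $|A|/|\Omega_{z'}|\geq 2^{-\eta n^{1/3}}$ translates into $\log(1/\mu(A))\leq \eta n^{1/3}$, which together with the globalness of $A$ feeds directly into the level-$d$ bound and controls the relevant Fourier norms of $g$. The three summands $4\eta\gamma^2K^2$, $4\eta$, and $2\gamma$ in the target inequality correspond, respectively, to configurations in which the $T_i$'s interact across all $K$ players through the vertices of $\supp(z)$ (hence the $|\supp(z)|^2K^2\leq \gamma^2 K^2 n^{2/3}$ scaling, combined with $\eta$ from globalness), ``diagonal'' self-interactions of a single player, and direct overlaps of $T_i$ with $\supp(z)$. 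A careful summation over degree profiles yields the stated exponential-type estimate.

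The principal difficulty lies in the bookkeeping of the third step: correctly enumerating the surviving configurations modulo $\supp(z)$'s cycle space, matching each configuration type with the appropriate level-$d$ Fourier bound, and extracting the exact constants in the target bound. The acyclicity of $\supp(z)$ is essential: without it, $V^\perp$ acquires extra dimensions from cycles already present in $\supp(z)$, which would allow far more characters to survive the averaging and destroy the bound. The second challenge is proving the level-$d$ inequality for the matching space $\Omega$, in the spirit of \cite{KLM23}; this is what is deferred to \Cref{sec:global_hypercontractivity}.
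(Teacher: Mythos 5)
Your first two steps are correct, and the approach is genuinely different from the paper's. Your identity $h(x)=\E_{y\in\Omega_{z'}}\bigl[g(y)\prod_{e\in M(y)}(1+y_e\chi_e(x))\bigr]$ checks out (using that $\E_y[\prod_{e\in M(y)}(1+y_e\chi_e(x))]=1$), and your characterization of the surviving $K$-tuples via $T$-joins in the forest $\supp(z)$ is the right description of which characters are constant on $L(z)$. This "replica expansion" of $\E_x[h^K]$ is a more direct attack than the paper's, which instead factors through three modular pieces: Lemma~\ref{lem:global-decay} proves Fourier \emph{decay} of the conditional-probability function on a Boolean cube identified with $L(z')$; Lemma~\ref{lem:override-constraint} transfers that decay to the restriction to $L(z)$ by tracking how characters split under the partition refinement $B'\to B$ (this is where acyclicity is used); and Lemma~\ref{lem:decay-to-K-norm} converts Fourier decay plus an $\ell^\infty$-norm bound into a $K$-norm bound using classical hypercontractivity for the low-degree part, Parseval for the $2$-norm of the high-degree tail, and H\"older to combine them.

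The gap in your proposal is essentially the entirety of your third step. You would have to bound $\sum_{\text{surviving }(T_1,\dots,T_K)}\prod_i\bigl|\langle g, y^{T_i}\rangle\bigr|$, and the two hard points are precisely what the paper's modular decomposition handles cleanly. First, the level-$d$ inequality from \Cref{sec:global_hypercontractivity} gives strong bounds on $\langle g,\chi_T\rangle$ only when $|T|\lesssim\log(1/\mu(A))$; for larger $|T|$ you only have the trivial $\left\|\varphi\right\|_2^2$-type bound, and you must compensate with the density lower bound $|A|/|\Omega_{z'}|\ge 2^{-\eta n^{1/3}}$ (equivalently $\|g\|_\infty\le 2^{\eta n^{1/3}}$). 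In the paper this is exactly the role of the $\left\|h\right\|_\infty$ factor multiplying $\|h^{>2\gamma k}\|_2$ in Lemma~\ref{lem:decay-to-K-norm}; in your expansion there is no evident place where this bound enters, and without it the sum over large $T_i$'s is uncontrolled. Second, the combinatorial sum over surviving tuples grouped by degree profile is not at all straightforward: the survivors are correlated across all $K$ coordinates through the $T$-join condition, there is no obvious orthogonality that lets you decouple the product, and your assertion that the three summands $4\eta\gamma^2K^2$, $4\eta$, $2\gamma$ "correspond" to three configuration types comes with no derivation. The paper sidesteps this by never expanding the $K$-th power over replicas at all -- it works with $\sum_{|S|=2d}\widehat h(S)^2$ and uses Parseval plus hypercontractivity, which turns the aggregation problem into a one-dimensional sum over $d$. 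Your route may well be executable, but as written the third step is where all the content lies and it is missing; you would in effect need to re-derive Lemmas~\ref{lem:override-constraint} and~\ref{lem:decay-to-K-norm} in a far less structured way.
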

\begin{proof}
    Deferred to~\Cref{sec:proof_of_norm_bounds}.
\end{proof}

\subsubsection{Bounding Discrepancy via $K$-norms}\label{sec:reduce_to_norm}
Before beginning the proof of~\Cref{lem:norm}, we first
show how it can be used to prove~\Cref{lem:main}. The following lemma shows that for every rectangle $R$ in $\Omega^{K}$, one may relate $\mathcal{D}_{\mathrm{yes}}(R)$ and $\mathcal{D}_{\mathrm{no}}(R)$ via the probability distribution associated with the components of $R$.

\begin{lemma}\label{lem:relation_between_yes_no}
    For $\Omega = \Omega^{[n],\alpha n}$ and for any rectangle $R = A^{(1)}\times \cdots\times A^{(K)}\subseteq \Omega^{K}$, we have
    \begin{align*}
        \mathcal{D}_{\mathrm{yes}}(R) &= \mathcal{D}_{\mathrm{no}}(R) \cdot 2^{(K-1)n}\cdot \sum_{x\in\F_{2} ^n}\prod_{i=1}^K \Pr[x|A^{(i)}].
    \end{align*}
\end{lemma}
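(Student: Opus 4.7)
The plan is a direct computation that unfolds both $\mathcal{D}_{\mathrm{no}}(R)$ and $\mathcal{D}_{\mathrm{yes}}(R)$ and then recognizes the factor $\Pr[x|A^{(i)}]$ inside the expression for $\mathcal{D}_{\mathrm{yes}}$. Since $\mathcal{D}_{\mathrm{no}}$ is uniform on $\Omega^{K}$ and $R$ is a product set, immediately
\[
\mathcal{D}_{\mathrm{no}}(R) = \prod_{i=1}^{K}\frac{|A^{(i)}|}{|\Omega|}.
\]
For $\mathcal{D}_{\mathrm{yes}}(R)$, I would condition on the bipartition $x\in\mathbb{F}_2^{n}$ drawn in the first step of the yes-sampling procedure. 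Given $x$, the $y^{(i)}$'s are mutually independent and each one is uniform on $\{y\in\Omega \colon x\in L(y)\}$, so
\[
\mathcal{D}_{\mathrm{yes}}(R) \;=\; \frac{1}{2^{n}}\sum_{x\in\mathbb{F}_{2}^{n}}\prod_{i=1}^{K}\frac{\bigl|\{y\in A^{(i)} : x\in L(y)\}\bigr|}{\bigl|\{y\in\Omega : x\in L(y)\}\bigr|}.
\]

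The next step is to compute the two cardinalities that appear. A labeled matching $y\in\Omega$ is determined by choosing an unlabeled matching of size $\alpha n$ together with a sign for each of its $\alpha n$ edges, so $|\Omega|$ equals the number of such matchings times $2^{\alpha n}$. For a fixed $x$, a labeled matching $y$ satisfies $x\in L(y)$ iff its label on every edge $\{u,v\}$ equals $(-1)^{x_u+x_v}$; hence the labeling is forced by the matching and $\bigl|\{y\in\Omega : x\in L(y)\}\bigr| = |\Omega|/2^{\alpha n}$, independent of $x$. Also, for any $y\in\Omega$ the subspace $L(y)$ has codimension $\alpha n$ (the constraints are on disjoint pairs, hence independent), so $|L(y)|=2^{n-\alpha n}$. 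Substituting into Definition~\ref{def:conditional-probability} gives
\[
\bigl|\{y\in A^{(i)} : x\in L(y)\}\bigr|
\;=\; |A^{(i)}|\cdot 2^{n-\alpha n}\cdot \Pr[x\mid A^{(i)}].
\]

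Plugging these two identities back into the displayed formula for $\mathcal{D}_{\mathrm{yes}}(R)$ yields
\[
\mathcal{D}_{\mathrm{yes}}(R) \;=\; \frac{1}{2^{n}}\sum_{x}\prod_{i=1}^{K}\frac{|A^{(i)}|\cdot 2^{n-\alpha n}\cdot\Pr[x\mid A^{(i)}]}{|\Omega|/2^{\alpha n}} \;=\; \frac{2^{nK}}{2^{n}}\cdot\prod_{i=1}^{K}\frac{|A^{(i)}|}{|\Omega|}\cdot\sum_{x}\prod_{i=1}^{K}\Pr[x\mid A^{(i)}],
\]
and recognizing the product $\prod_i |A^{(i)}|/|\Omega|$ as $\mathcal{D}_{\mathrm{no}}(R)$ gives the claimed identity. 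There is no real obstacle here; the only mildly delicate point is the bookkeeping to verify that $\bigl|\{y\in\Omega : x\in L(y)\}\bigr|$ is independent of $x$ and equals $|\Omega|/2^{\alpha n}$, which is what lets the prefactors collapse into $2^{n(K-1)}\mathcal{D}_{\mathrm{no}}(R)$.
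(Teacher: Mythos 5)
Your proof is correct and follows essentially the same route as the paper's: condition on the hidden bipartition $x$, observe that $\bigl|\{y\in\Omega : x\in L(y)\}\bigr| = |\Omega|/2^{\alpha n}$ and $|L(y)| = 2^{n-\alpha n}$, recognize $\Pr[x\mid A^{(i)}]$, and collect the constants. The bookkeeping and the way the prefactors collapse to $2^{(K-1)n}\mathcal{D}_{\mathrm{no}}(R)$ match the paper's computation line for line.
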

\begin{proof}
    Recall that in the sampling process of $\mathcal{D}_{\mathrm{yes}}$, we first uniformly sample a vector $x\in \mathbb{F}_{2}^n$, and for each $i\in [K]$, sample matchings $M^{(i)}$ of size $\alpha n$ independently and uniformly. The labeling $y^{(i)}$ is then 
    the unique $y^{(i)}\in\Omega$ with $\supp(y^{(i)})=M^{(i)}$ and $x\in L(y^{(i)})$. 
    Therefore, conditioned on $x\in\mathbb{F}_{2}^{n}$, the distribution of each $y^{(i)}$ is uniform over all $y\in \Omega$ such that $x\in L(y)$, and the number of such $y$ is equal to the number of matchings of size $\alpha n$, which is $|\Omega|/2^{\alpha n}$. Therefore, we have 
    \begin{align*}
        \mathcal{D}_{\mathrm{yes}}(R) = \sum_{x\in \mathbb{F}_{2}^n} \frac{1}{2^n} \prod_{i=1}^K \frac{\sum_{y\in A^{(i)}}\mathbbm{1}\{x\in L(y)\}}{|\Omega|/2^{\alpha n}}
        &=\sum_{x\in \mathbb{F}_{2}^n}\frac{1}{2^{n}}\prod_{i=1}^{K}\left(\frac{2^{n}\cdot |A^{(i)}|}{|\Omega|}\cdot \Pr[x|A^{(i)}]\right) \\
        &=\left(\prod_{i=1}^K \frac{|A^{(i)}|}{|\Omega|} \right) \cdot 2^{(K-1)n}\cdot\sum_{x\in \mathbb{F}_{2}^n}\prod_{i=1}^K\Pr[x|A^{(i)}]\\
        &=\mathcal{D}_{\mathrm{no}}(R) \cdot 2^{(K-1)n}\cdot \sum_{x\in\F_{2}^n}\prod_{i=1}^K \Pr[x|A^{(i)}],
    \end{align*}
    where in the second transition we used~\Cref{def:conditional-probability} and 
    $|L(y)|=2^{n-\alpha n}$.
\end{proof}

We now prove~\Cref{lem:main}, restated below.
\discrepancylemma*
\begin{proof}
    By Lemma \ref{lem:relation_between_yes_no}, it suffices to show 
    \begin{align*}
        \left|1- 2^{n(K-1)}\cdot \sum_{x\in \mathbb{F}_2^2}\prod_{i=1}^K \Pr[x|A^{(i)}]\right|\leq\exp\left(K\sqrt{4\eta\gamma^{2}K^{2}+4\eta + 2\gamma}+o(1)\right)-1 . 
    \end{align*}
    Define $z\in\{-1,0,1\}^{\binom{[n]}{2}}$ by 
    \[
    z_{uv}=\begin{cases}
    z^{(i)}_{uv}, &\text{if }\{u,v\}\in\supp(z^{(i)})\text{ for some }i\in[K],\\
    0, &\text{if }\{u,v\}\not\in\supp(z^{(i)})\text{ for all }i\in[K].
    \end{cases}
    \]
    We note that $z$ is well defined since the supports in the sequence $\left(\supp(z^{(i)})\right)_{i\in[K]}$ are disjoint. 
    Also, using this fact and the fact their union does not contain any cycles, we have 
    \begin{equation}\label{eq:yes-no-dim-relation}
    \dim(L(z))=n-\left|\supp(z)\right|=n-\sum_{i=1}^{K}\left|\supp(z^{(i)})\right|=-(K-1)n+ \sum_{i=1}^{K}\dim\left(L(z^{(i)})\right).
    \end{equation}
    Since $L(z)=\bigcap_{i=1}^{K}L(z^{(i)})$ and each distribution $\Pr[x|A^{(i)}]$ is supported on $L(z^{(i)})$, we have
    \begin{equation}\label{eq:yes-no-support-property}
    \sum_{x\in \mathbb{F}_{2}^n}\prod_{i=1}^K \Pr[x|A^{(i)}] =   \sum_{x\in L(z)}\prod_{i=1}^K \Pr[x|A^{(i)}].
    \end{equation}
    Defining $h_{i}:L(z)\rightarrow\mathbb{R}$ as in~\Cref{lem:norm} by $h_{i}(x):=\left|L(z^{(i)})\right|\cdot \Pr[x|A^{(i)}]-1$, we get
    
    \begin{align*}
    \left|1-2^{n(K-1)}\cdot \sum_{x\in\mathbb{F}_{2}^n}\prod_{i=1}^K \Pr[x|A^{(i)}]\right|
    &=\left|1-\frac{1}{\left|L(z)\right|}\sum_{x\in L(z)}\prod_{i=1}^{K}(1+h_{i}(x))\right|&(\text{using \eqref{eq:yes-no-dim-relation} and \eqref{eq:yes-no-support-property}})\\
    &=\left|\sum_{T\subseteq [K],\, T\neq \emptyset }\E_{x\in L(z)}\left[\prod_{i\in T}h_{i}(x)\right]\right|\\
    &\leq \sum_{T\subseteq [K],\, T\neq \emptyset }\E_{x\in L(z)}\left|\prod_{i\in T}h_{i}(x)\right|,
    \end{align*}
    where the last transition is by the triangle inequality. 
    Using H\"{o}lder's inequality, the last expression is at most
    \begin{align*}
    \sum_{\substack{T\subseteq [K]\\ T\neq\emptyset }}
    \prod_{i\in T}\|h_{i}\|_{|T|}
    \leq \sum_{\substack{T\subseteq [K]\\ T\neq\emptyset }}\prod_{i\in T}\|h_{i}\|_{K}
    =\prod_{i=1}^{K}\left(1+\|h_{i}\|_{K} \right)-1
    \leq \exp\left(K\sqrt{4\eta\gamma^{2}K^{2}+4\eta + 2\gamma}+o(1)\right)-1,
    \end{align*}
    where the last transition is by the inequality $1+s\leq \exp(s)$ and by~\Cref{lem:norm}.
\end{proof}

\subsection{Norm Bounds via Decay of Fourier Coefficients}\label{sec:proof_of_norm_bounds}
In this section we present tools that will be useful in the proof of~\Cref{lem:norm}. 

We begin by giving some high level motivation. The domain of the function $h$ in~\Cref{lem:norm} is a linear space $L(z)$ over $\mathbb{F}_2$, hence it can be identified with a Boolean cube $\mathbb{F}_2^{n'}$. With this in mind, the function $h$ 
can be thought of as being defined over $\mathbb{F}_2^{n'}$ and therefore it may be expanded to its discrete Fourier transform (see~\cite{o2014analysis}). In particular we may consider its degree decomposition 
$h = \sum_{i=0}^{n'}h^{=i}$. 
The analysis of the contribution of small $i$'s and large $i$'s 
is done in a similar way, but we carry it out separately. Indeed, while 
both bounds boil down to Fourier coefficients estimates,
the argument for large $i$'s requires one additional trick
(in the form of an application of H\"{o}lder's inequality). 
Hence we focus the discussion on small $i$'s.

For small $i$'s, we may use the (standard) hypercontractive inequality to argue that $\|h^{=i}\|_{K}$ is comparable to
$\|h^{=i}\|_{2}$. The benefit of $2$-norms is that, using Parseval's equality, they can be expressed using the Fourier coefficients of $h$ of level $i$. Thus, upper bounding the $K$-norm 
of $h^{=i}$ amounts to sufficiently good understanding of the 
degree $i$ Fourier coefficients of $h$. 

Establishing these 
Fourier coefficients bounds is the main content of Lemma~\ref{lem:global-decay}. While the proof of that lemma is the main subject of~\Cref{sec:global_hypercontractivity}, it is useful to have a high level picture of how that lemma is proved. Ideally, we would
have liked to conclude such bounds by appealing to the level $i$ inequality, which asserts that a Boolean-valued function with small average has small level $i$ Fourier weight (see~\cite[page 259]{o2014analysis} for example). However, the function $h$ in our case is not Boolean-valued, so there is no such level $i$ inequality for $h$. Instead of appealing to the level $i$ inequality directly, we relate the Fourier coefficients of 
$h$ to (sort of) Fourier coefficients of $1_A\colon \Omega\to\{0,1\}$, and then prove a variant of the level $i$ inequality for functions over $\Omega$. 

We remark that to do the relation between the Fourier coefficients of $h$ and $1_A$ effectively, namely for degree $i$ Fourier coefficients of $h$ to be related to degree $i$ Fourier coefficients of $1_A$, it is important 
that the identification between $L(z)$ and a Boolean cube $\mathbb{F}_2^{n'}$ is meaningful (as opposed to arbitrary). This motivates our discussion of partitions $B$ 
and subspaces $V^{B,b}$ defined by them. 
Next, a closer inspection of the function $h$ in~\Cref{lem:norm} suggests to view it as a function over $L(z')\supseteq L(z)$, as it is more related to the domain where $A$ lives. Indeed, our arguments
naturally give Fourier coefficients bounds for that function, and we then need to translate them to
the restriction of the function to $L(z)$. Towards this end, 
using the fact that $L(z) \subseteq L(z')$ and using their identification with Boolean cubes, we conclude a meaningful embeddings between the Boolean cubes they are identified with.
This provides us relations between the Fourier coefficients of $h$ as a function over $L(z')$ and its Fourier coefficients as a function over $L(z)$, and allows us to translate Fourier decay bounds from the former function to the latter function. This motivates our discussion of
``refinements'' and ``unrefinements'' of partitions, and 
of their effect on Fourier decay bounds.

\subsubsection{The Fourier Analytic Setup}
To present a meaningful identification of 
the subspace $L(z)$ with a Boolean cube, we 
give a different way of presenting 
subspaces of the form of $L(z)$.

\begin{definition}\label{def:subspace-using-partitions}
Let $B=(B_{1},\dots,B_{k})$ be a partition of $[n]$, and let $b=(b_{1},\dots,b_{n})\in\F_{2}^{n}$. We define the affine subspace $V^{B,b}\subseteq\mathbb{F}_{2}^{n}$ by
\[
V^{B,b}:=\bigcap_{\ell=1}^{k}\left\{x\in\mathbb{F}_{2}^{n}:x_{i}+b_{i}=x_{j}+b_{j},\;\forall i,j\in B_{\ell}\right\}.
\]
\end{definition}
We note that not every affine subspaces of $\F_{2}^{n}$ can be represented by a partition and a string as in~\Cref{def:subspace-using-partitions}. However, 
it is easy to see that for any constraint $z\in\{-1,0,1\}^{\binom{[n]}{2}}$, the subspace $L(z)$ can indeed be expressed in this form.\footnote{In fact, it is easily seen that the family of subspaces that can be expressed as in~\Cref{def:L(y)} coincides with the family of subspaces that can be expressed as in~\Cref{def:subspace-using-partitions}.} 
The main benefit of working with~\Cref{def:Fourier-on-partition-subspaces} 
is that it naturally gives rise to a canonical identification
between $V^{B,b}$ and a Boolean cube, and thus with a canonical
Fourier basis for functions over $V^{B,b}$.
\begin{definition}\label{def:canonical_map}
   Define the canonical identification map 
$\id:V^{B,b}\rightarrow \mathbb{F}_{2}^{k}$ that maps $x\in V^{B,b}$ to  $z\in\mathbb{F}_{2}^{k}$ defined by $x_{i}+b_{i}=z_{\ell}$, where $\ell\in [k]$ is such that $i\in B_{\ell}$.  
\end{definition}
We note that $\id$ is well defined, as by definition the value of 
$x_i+b_i$ is the same for all $i\in B_{\ell}$. We also note that $\id$ is a $1$-to-$1$ map, as $x$ can be recovered from $z$ (when $b$ and $B$ are thought of as fixed). Finally, we note that as the 
sets $B_1,\ldots,B_k$ are disjoint, sampling $x\in V^{B,b}$ uniformly, 
the distribution of $\id(x)$ is uniform over $\mathbb{F}_2^k$. Hence it
makes sense to define the Fourier basis of the space $V^{B,b}$ using
the Fourier basis over $\mathbb{F}_2^k$.
\begin{definition}\label{def:Fourier-on-partition-subspaces}
For a subset $S\subseteq [k]$, we define the character function $\chi_{S}:V^{B,b}\rightarrow\{-1,1\}$ by
\[
\chi_{S}(x):=\prod_{\ell\in S}(-1)^{(\id(x))_{\ell}}.
\]
\end{definition}
\begin{definition}
    For a function $f:V^{B,b}\rightarrow \mathbb{R}$ and a subset $S\subseteq [k]$, we define the corresponding Fourier coefficient of $f$ by
    \[
    \widehat{f}(S):=\frac{1}{2^{k}}\sum_{x\in V^{B,b}}f(x)\chi_{S}(x).
    \]
\end{definition}

Thus, the Fourier expansion of a function $f:V^{B,b}\rightarrow \mathbb{R}$ is given as $f(x) = \sum\limits_{S\subseteq [k]}\widehat{f}(S)\chi_S(x)$. Fourier analysis on $L(z)$ is in essence identical to the analysis on Boolean cubes $\F_{2}^{k}$. We will therefore adopt many of the notations therein, and
specifically define the degree $d$ part of a function $f$ as $f^{=d}(x) = \sum\limits_{|S|=d}\widehat{f}(S)\chi_S(x)$.

\subsubsection{Decay of Fourier Coefficients and the Main Decay Lemma}
The following definition of decay of Fourier coefficients is specifically designed for our context. 

\begin{definition}
Let $w$ be a real number in the range $(0,n)$, and let $c$ be a positive real number. 
We say a function $f:\mathbb{F}_{2}^{n}\rightarrow\mathbb{R}$ is $(w,\delta,c)$-decaying if 
\begin{enumerate}
\item $\widehat{f}(\emptyset)^{2}\leq \delta$,
\item for every $d\geq 1$, $f^{=2d-1}=0$, and
\item for every $1\leq d\leq n/2$, $\left\|f^{=2d}\right\|_{2}^{2}\leq c^{-d}F(n,d,w)$, where $F(n,d,w)$ is defined by
$$F(n,d,w)= 
\begin{cases}
\left(\frac{w}{n}\right)^{d}, &\text{if }0\leq d\leq w,\\
\left(\frac{d}{4n}\right)^{d}\cdot 2^{2w}, &\text{if }d>w.
\end{cases}
$$
We remark that for $d=0$ the value $F(n,d,w)$ is defined to be 1. Although this case is irrelevant for the definition of decaying functions, adopting this convention will be useful for subsequent analysis.
\end{enumerate}
\end{definition}

With setup, we now state the main decay lemma, asserting that 
if $A$ is global, then (an appropriate normalization of) $\Pr[x|A]$ is decaying:
\begin{restatable}{lemma}{globalimpliesdecay}\label{lem:global-decay}
Suppose that $U$ has $|U|\geq 10^{7}m$ where $m\geq 10(w+1)$, and let $A\subseteq \Omega^{U,m}$ be a global set with $|A|= 2^{-w}\cdot \left|\Omega^{U,m}\right|$. Then the function $f:\mathbb{F}_{2}^{U}\rightarrow \mathbb{R}$ defined by $f(x):=2^{|U|}\cdot\Pr[x|A]-1$ is $(w/2,0,2)$-decaying.
\end{restatable}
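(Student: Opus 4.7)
The three defining properties of $(w/2, 0, 2)$-decay are: (i) $\widehat f(\emptyset) = 0$, (ii) every odd-level part $f^{=2d-1}$ vanishes, and (iii) $\|f^{=2d}\|_2^2 \leq 2^{-d}F(|U|, d, w/2)$ for every $1 \leq d \leq |U|/2$. Conditions (i) and (ii) I would dispose of immediately. For (i), the map $x\mapsto \Pr[x\mid A]$ is a probability distribution on $\mathbb{F}_2^{|U|}$, so its uniform average equals $2^{-|U|}$ and hence $\E_x[f(x)] = 0$. For (ii), observe that the involution $x\mapsto x+\mathbf 1$ preserves each affine subspace $L(y)$ (adding $\mathbf 1$ to the endpoints of an edge does not change the parity of their sum), so $\Pr[x+\mathbf 1\mid A] = \Pr[x\mid A]$ and $f$ is invariant under the flip. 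Every character $\chi_S$ with $|S|$ odd is anti-invariant under this flip, which forces $\widehat f(S) = 0$.

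The substance of the lemma is item (iii). The first step is to obtain a clean formula for $\widehat f(S)$ by expanding the indicator $\mathbbm{1}\{x\in L(y)\} = \prod_{\{u,v\}\in\supp(y)}\frac{1+y_{uv}(-1)^{x_u+x_v}}{2}$ inside Definition~\ref{def:conditional-probability}. The $T=\emptyset$ term of the expansion cancels the $-1$ in the definition of $f$, leaving
\[
f(x) \;=\; \sum_{\emptyset\neq T\text{ matching}}\chi_{V(T)}(x)\cdot \frac{1}{|A|}\sum_{y\in A:\,T\subseteq\supp(y)}\prod_{e\in T}y_e.
\]
Because $\supp(y)$ is itself a matching, for a fixed $S\subseteq U$ of even size $2d$ there is at most one perfect matching on $S$ that is contained in $\supp(y)$; call it $T_y(S)$ when it exists. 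This yields
\[
\widehat f(S) \;=\; \E_{y\sim \mu_A}\!\left[\mathbbm{1}\{\supp(y)\text{ perfectly matches }S\}\cdot \prod_{e\in T_y(S)}y_e\right],
\]
identifying the level-$2d$ Fourier coefficients of $f$ over $\mathbb{F}_2^{|U|}$ with natural ``edge characters'' of the density $1_A/\mu(A)$ over $\Omega^{U,m}$. The two-for-one relation between vertex level $2d$ and edge level $d$ is exactly what produces the $w/2$, rather than $w$, in the target bound.

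Summing $\widehat f(S)^2$ over $|S|=2d$ then expresses $\|f^{=2d}\|_2^2$ as the squared norm of the ``degree-$d$ part'' of $1_A/\mu(A)$ in this edge-basis, so the desired estimate reduces to a level-$d$ inequality for the global density $1_A/\mu(A)$ on $\Omega^{U,m}$. I would establish such an inequality by importing the global hypercontractivity theorem of Keller--Lifshitz--Marcus~\cite{KLM23}. The two branches of the piecewise function $F(|U|, d, w/2)$ correspond to the two natural regimes: for $d\leq w/2$ the global level-$d$ inequality applies directly and yields a bound of the shape $(w/|U|)^d$; for $d > w/2$ the global level-$d$ bound becomes weaker than the trivial $L^2$-bound $\|1_A/\mu(A)\|_2^2 = 2^w$, and one combines the two to obtain a bound of the shape $(d/|U|)^d\cdot 2^w$.

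The principal technical obstacle is that $\Omega^{U,m}$ is not a standard product space but a ``matching slice'' in which edge-disjointness couples the coordinates, so the KLM theorem cannot be applied verbatim. The sparsity hypothesis $|U|\geq 10^7 m$ is presumably what makes the passage tractable: one can couple $\Omega^{U,m}$ with a genuine product space (for instance the space in which each potential edge is independently either absent or labeled by $\pm 1$ with matched probabilities) and transfer hypercontractivity through this coupling at only a constant-factor cost, which is absorbed into the factor $c=2$ appearing in the $(w/2, 0, 2)$-decay. Verifying that the globalness of $A$ transfers faithfully through such a coupling, while keeping quantitative control sharp enough to yield the exponents built into $F$, is the step I expect to be the most delicate.
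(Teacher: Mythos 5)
Your treatment of the reduction is correct and is essentially the paper's own route. Items (i) and (ii) are fine (your flip argument $x\mapsto x+\mathbf 1$ for the vanishing of odd levels is a slightly cleaner alternative to the paper's explicit computation, which obtains the same conclusion from the formula for $\widehat f(S)$). Your formula for $\widehat f(S)$ via expanding $\mathbbm{1}\{x\in L(y)\}$ as a product agrees with the paper's identity relating $\widehat f(S)$ to $\Psi(|U|,m,d)^{1/2}\sum_{M\in\calM(S)}\chi_M(y)$, and the split into the regimes $d\leq w/2$ (level-$d$ inequality) and $d> w/2$ (trivial $L^2$ bound) is exactly how the paper closes the argument. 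One small imprecision: summing $\widehat f(S)^2$ over $|S|=2d$ does \emph{not} directly give the squared norm of the degree-$d$ edge-level part of $1_A/\mu(A)$; since $\widehat f(S)$ is a sum over all $(2d-1)!!$ perfect matchings of $S$, you need a Cauchy--Schwarz step that costs a factor $(2d-1)!!$, and verifying that this factor (together with $\Psi(|U|,m,d)$ and the count $\binom{m}{d}\binom{|U|}{2d}^{-1}$) is absorbed into $F(|U|,d,w/2)$ is part of the bookkeeping.

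The genuine gap is the level-$d$ inequality on $\Omega^{U,m}$ itself, which you leave as an ``import'' from \cite{KLM23} via an unspecified coupling with a product space. This is the heart of the lemma and cannot be cited: the KLM level-$d$ inequality is stated for product measures, and $\Omega^{U,m}$ is a slice-like space with coupled coordinates. The paper's proof of this step occupies all of Section~5 before the final subsection: it sets up an orthonormal system of matching-indexed characters and projections $P_{\calC}^{=d}$, defines discrete derivatives $D_S$ and a derivative-based notion of globalness, proves that set-globalness of $A$ implies $(2^{1/p},\|\varphi\|_p,m)$-$L^p$-globalness of $\varphi=1_A$, establishes a hypercontractive inequality on $\Omega^{U,m}$ by a term-by-term moment comparison with the independent-edges model (reducing to nonnegative coefficients and using the near-multiplicativity $p^s\leq\Psi(|U|,m,s)\leq(2p)^s$, which is where $|U|\geq 10^7 m$ enters), and then runs the KLM-style induction on $d$ from scratch to get $\|P_{\calC}^{=d}\varphi\|_2^2\leq\lambda_1^2(10^5r^2\log(\lambda_2/\lambda_1)/d)^d$. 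Your suggestion that the non-productness costs only ``a constant factor absorbed into $c=2$'' is not how it works: the $2^{-d}$ slack comes from the room left in the level-$d$ constants and the hypothesis $m\leq 10^{-7}|U|$, not from a coupling loss, and no black-box transfer of hypercontractivity through a coupling is performed. Without this machinery the proposal does not establish item (iii).
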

\begin{proof}
    Deferred to~\Cref{sec:global_hypercontractivity}.
\end{proof}

\subsubsection{Useful Properties of $F(n,d,w)$}
We will need the following proposition ensuring that the piecewise definition of the Fourier weight bound $F(n,d,w)$ is well-behaved and convenient to work with.
\begin{proposition}\label{prop:property-of-F}
The function $F(n,d,w)$ has the following properties:
\begin{enumerate}[label=(\arabic*)]
\item For fixed $n$ and $d$, the bound $F(n,d,w)$ is increasing in $w$, and for all $t\geq 1$
$$\frac{F(n,d,tw)}{F(n,d,w)}\geq t^{\min\{d,w\}}.$$
\item For fixed $n$ and $w<n$, the bound $F(n,d,w)$ is decreasing in $d$, and
$$\frac{F(n,d,w)}{F(n,d-1,w)}\leq \frac{\max\{d,w\}}{n}.$$ 
\end{enumerate}
\end{proposition}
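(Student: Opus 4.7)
The plan is to proceed by case analysis indexed by which side of the piecewise boundary each argument falls on. A useful preliminary observation is that the two pieces of $F(n,d,w)$ agree at the boundary $d=w$: both $(w/n)^d$ and $(d/(4n))^d \cdot 2^{2w}$ evaluate to $(d/n)^d$ there. Combining this continuity with the obvious per-piece monotonicity (in the first piece $F$ is the power $(w/n)^d$; in the second piece $w$ appears only through $2^{2w}$) immediately yields the qualitative monotonicity claim in $w$, so the bulk of the work is establishing the two quantitative ratio bounds.

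For the ratio bound in (1), I would split into three subcases according to where $d$ sits relative to $w$ and $tw$. When $d \le w$, both $F(n,d,tw)$ and $F(n,d,w)$ lie in the first piece and the ratio is exactly $t^{d} = t^{\min\{d,w\}}$. When $d > tw$, both lie in the second piece, the ratio equals $4^{(t-1)w}$, and the inequality reduces to the one-variable fact $4^{t-1}\ge t$ for $t\ge 1$, which I would verify from equality at $t=1$, the positive derivative $\ln 4 - 1$ there, and convexity of $4^{t-1}-t$. The middle subcase $w < d \le tw$, where numerator and denominator use different formulas, is the main obstacle. Setting $s := d/w \in (1,t]$ and taking logarithms, the required inequality becomes $(4t)^{s-1} \ge s^s$; I would analyze this by studying $g(s) := (s-1)\ln(4t) - s\ln s$ on $[1,t]$, noting $g(1)=0$ and $g'(s)=\ln(4t)-\ln s -1$. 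The derivative vanishes only at $s^\ast = 4t/e$, and since $4>e$ we have $s^\ast > t$, so $g'>0$ on $[1,t]$, which gives $g\ge 0$ there and hence the desired bound.

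For (2), I would compute $F(n,d,w)/F(n,d-1,w)$ directly in the three cases $d\le w$, $d = w+1$, and $d > w+1$. The first case gives a ratio of exactly $w/n = \max\{d,w\}/n$. In the other two the factor $2^{2w}$ cancels and the ratio collapses to an expression of the form $\frac{d}{4n}\cdot(1+1/m)^m$ with $m=w$ and $m=d-1$ respectively; in both cases the elementary estimate $(1+1/m)^m \le e$ bounds this by $(e/4)\cdot d/n \le d/n = \max\{d,w\}/n$. Monotonicity in $d$ then follows since $\max\{d,w\}/n \le 1$ whenever $d\le n$ and $w<n$, which covers the regime relevant to all applications of the proposition.
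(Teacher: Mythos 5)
Your part (1) is correct but takes a genuinely different route from the paper. The paper handles all of (1) in one stroke by bounding the logarithmic derivative, $\frac{\d}{\d w}\ln F(n,d,w)\geq \min\{d,w\}/w$, and integrating from $w$ to $tw$; this avoids any case analysis on where $d$ sits relative to $w$ and $tw$, and in particular the mixed-piece regime never has to be confronted explicitly. Your explicit three-case decomposition buys transparency (the pure cases reduce to exact identities) at the cost of having to prove the inequality $(4t)^{s-1}\geq s^{s}$ on $[1,t]$ in the mixed case, which you dispatch correctly: $g(s)=(s-1)\ln(4t)-s\ln s$ has $g(1)=0$ and $g'(s)=\ln(4t)-\ln s-1>0$ on $[1,t]$ because the critical point $4t/e$ exceeds $t$. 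Both arguments ultimately rest on the same fact ($4>e$, i.e.\ $\ln 4>1$), and your continuity observation at $d=w$ (both pieces equal $(d/n)^d$ there) is exactly what makes the paper's integration across the breakpoint legitimate, so the two proofs are morally equivalent even though yours is structured quite differently.

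In part (2) there is a small but genuine incompleteness: your case split $d\leq w$, $d=w+1$, $d>w+1$ implicitly treats $w$ as an integer, but $w$ is a real parameter (in the applications it takes values like $\eta n^{1/3}$ or $4\eta\gamma^{2}k$), so the generic transitional case is $d-1<w<d$ with $w\notin\mathbb{Z}$. There the ratio does \emph{not} collapse to $\frac{d}{4n}(1+1/m)^{m}$; it equals $\frac{d}{n}\left(\frac{d}{w}\right)^{d-1}4^{w-d}$, and one must check that $\left(\frac{d}{w}\right)^{d-1}4^{w-d}\leq 1$. This does hold: writing $u=d-w\in(0,1)$, the logarithm is at most $(d-1)\cdot\frac{u}{d-u}-u\ln 4\leq u(1-\ln 4)<0$, using $d-u\geq d-1$. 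So the conclusion survives, but the step as written would fail for non-integer $w$ and needs this extra computation. (The paper's own proof of (2) is terse on exactly this point as well; its displayed bound $\max\left\{\frac{w}{n},\frac{d^{d}}{4(d-1)^{d-1}n}\right\}$ really only addresses the two pure cases.) Your final remark that monotonicity in $d$ follows from the ratio bound together with $\max\{d,w\}/n<1$ on the relevant range $d\leq n/2$, $w<n$ is fine.
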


\begin{proof}
We begin with the first item, and we fix $n$ and $d$. 
The function $F(n,d,w)$ is continuous and piecewise differentiable in $w$, in the range $w\in(0,n)$. We have
\begin{align*}
\frac{\d}{\d w}\ln F(n,d,w)&=\begin{cases}
\ln 4, &\text{if }0<w< d\\
d/w, &\text{if }w>d 
\end{cases}\\
&\geq \min\{d,w\}/w.
\end{align*}

It follows that
\[
\int_{w}^{tw}\frac{\d}{\d r}\ln F(n,d,r) dr
\geq 
\int_{w}^{tw}\frac{\min\{d,r\}}{r} dr
\geq
\int_{w}^{tw}\frac{\min\{d,w\}}{r} dr
=
\min\{d,w\}\cdot \ln t,
\]
which translates to $F(n,d,tw)/F(n,d,w)\geq t^{\min\{d,w\}}$, giving
the first item. 

For the second item, we have 
\[
\frac{F(n,d,w)}{F(n,d-1,w)}\leq \max\left\{\frac{w}{n},\frac{d^d}{4(d-1)^{d-1}\cdot n}\right\}\leq \frac{\max\{d,w\}}{n},
\]
as desired.
\end{proof}

\subsubsection{Fourier Decay implies $K$-norm Bounds}
The following results shows that fast decay of Fourier coefficients coupled with mild bounds on the $L^{\infty}$-norm implies good bounds for the $K$-norm of a function.
\begin{lemma}\label{lem:decay-to-K-norm}
Let $K\geq 2$ be an integer and let $\gamma\in(0,\frac{1}{4})$ be a constant. Suppose $h:\mathbb{F}_{2}^{k}\rightarrow\mathbb{R}$ is $(\gamma k,\delta,1)$-decaying, and that $\left\|h\right\|_{\infty}\leq 2^{o(k)}$. Then $\|h\|_{K}\leq \sqrt{\delta+2\gamma K^{2}}+o_{k}(1)$, where $o_k(1)$ hides a term that tends to 0 as $k\rightarrow+\infty$.
\end{lemma}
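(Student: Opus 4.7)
The plan is to prove $\|h\|_K^2 \leq \delta + 2\gamma K^2 + o_k(1)$, from which the claim follows by taking square roots. The key tool is the Fourier form of the Bonami--Beckner hypercontractive inequality,
\[
\|f\|_K^2 \leq \sum_{d\geq 0}(K-1)^d\|f^{=d}\|_2^2,
\]
which follows from the standard bound $\|T_{1/\sqrt{K-1}} g\|_K \leq \|g\|_2$ applied with $\widehat{g}(S)=(K-1)^{|S|/2}\widehat{f}(S)$, so that $f = T_{1/\sqrt{K-1}} g$ and $\|g\|_2^2 = \sum_d (K-1)^d\|f^{=d}\|_2^2$.

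Plugging in the decay hypothesis (odd-degree parts vanish, $\widehat{h}(\emptyset)^2\leq \delta$, and $\|h^{=2d}\|_2^2\leq F(k,d,\gamma k)$) gives $\|h\|_K^2 \leq \delta + \sum_{d\geq 1}(K-1)^{2d}F(k,d,\gamma k)$. I would then split this sum at $d=\gamma k$. For $1\leq d\leq \gamma k$, $F=\gamma^d$, and the contribution is a geometric series with ratio $(K-1)^2\gamma$ that sums to $\leq 2(K-1)^2\gamma\leq 2\gamma K^2$ whenever $(K-1)^2\gamma\leq 1/2$. For $d>\gamma k$, the piecewise formula $F=(d/(4k))^d\cdot 4^{\gamma k}$ is decreasing on $(\gamma k, k/2]$ with maximum $\gamma^{\gamma k}$ attained at $d=\gamma k$ (using Proposition \ref{prop:property-of-F}), and a direct computation shows $(K-1)^{2d}F(k,d,\gamma k)$ is dominated by $((K-1)^2\gamma)^{\gamma k}=2^{-\Omega(k)}$ in the same regime, so the entire tail contributes $o_k(1)$. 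In the remaining regime where $(K-1)^2\gamma\geq 1$, I would truncate $h$ at a degree $D$ just past the convergence region of the geometric series and invoke the hypothesis $\|h\|_\infty\leq 2^{o(k)}$ through the interpolation inequality $\|h_{>D}\|_K^K \leq \|h_{>D}\|_\infty^{K-2}\|h_{>D}\|_2^2$, using the super-exponential $L^2$-decay of $F$ past $\gamma k$ to dominate both $\|h\|_\infty$ and the Fourier-support estimate $\|h_{\leq D}\|_\infty\leq 2^{O(\gamma k \log(1/\gamma))}\|h_{\leq D}\|_2$ coming from Cauchy--Schwarz and $\binom{k}{\leq D}\leq 2^{H(D/k)k}$.

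The main obstacle is matching these two regimes. The dual hypercontractive bound yields the target constant $2\gamma K^2$ precisely when the geometric series converges, but the bound is vacuous when $(K-1)^2\gamma\geq 1$; meanwhile, the interpolation-based tail estimate requires a delicate choice of truncation degree so that the exponential $L^\infty$ cost on $h_{\leq D}$ is beaten by the super-exponential $L^2$-decay of $F$, without inflating the main term past $\sqrt{\delta+2\gamma K^2}+o_k(1)$. The piecewise structure of $F$ makes $d=\gamma k$ the natural transition point, since $((K-1)^2\gamma)^{\gamma k}$ and $\gamma^{\gamma k}$ both appear there, aligning the head and tail bounds cleanly.
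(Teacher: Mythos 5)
Your treatment of the low degrees matches the paper exactly: hypercontractivity in Fourier form on the part of degree at most $2\gamma k$, then the geometric series $\sum_{d\ge 1}((K-1)^2\gamma)^d\le 2\gamma K^2$ (both you and the paper implicitly work in the convergent regime $(K-1)^2\gamma\le 1/2$, which is the only regime the lemma is ever invoked in). The gap is in the tail. Your primary route applies the bound $\|h\|_K^2\le\sum_d (K-1)^{2d}\|h^{=2d}\|_2^2$ to \emph{all} of $h$ and claims $\sum_{d>\gamma k}(K-1)^{2d}F(k,d,\gamma k)=o_k(1)$ because the summand is "dominated by $((K-1)^2\gamma)^{\gamma k}$". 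That is false for $K\ge 4$: by Proposition~\ref{prop:property-of-F}(2) the ratio of consecutive terms is about $(K-1)^2 d/k$, which exceeds $1$ once $d>k/(K-1)^2$, so the sum is not dominated by its first term. Concretely, at $d=\lfloor k/2\rfloor$ the summand is $\bigl((K-1)^2/8\bigr)^{k/2}4^{\gamma k}$, which is exponentially \emph{large} whenever $\log_2(K-1)>3/2-2\gamma$, i.e.\ for every $K\ge 4$. The decay $F(k,d,\gamma k)\approx(d/4k)^d$ is simply not strong enough at top degrees to absorb $(K-1)^{2d}$. Your backup interpolation $\|h_{>D}\|_K^K\le\|h_{>D}\|_\infty^{K-2}\|h_{>D}\|_2^2$ does not rescue this: truncation destroys the $L^\infty$ hypothesis, and your own estimate $\|h_{\le D}\|_\infty\le 2^{O(\gamma k\log(1/\gamma))}$ (hence the same for $\|h_{>D}\|_\infty$) raised to the power $K-2$ gives $2^{\Theta((K-2)\gamma\log(1/\gamma)k)}$, which overwhelms $\|h_{>D}\|_2^2\approx\gamma^{\gamma k}=2^{-\gamma\log(1/\gamma)k}$ as soon as $K\ge 4$.

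The paper's fix is to never take a $K$-norm (or an $\infty$-norm) of a truncated function. Write
\[
\|h\|_K^K=\bigl\langle h^{K-1},h^{\le 2\gamma k}\bigr\rangle+\bigl\langle h^{K-1},h^{>2\gamma k}\bigr\rangle
\le\|h\|_K^{K-1}\,\bigl\|h^{\le 2\gamma k}\bigr\|_K+\|h\|_\infty^{K-1}\,\bigl\|h^{>2\gamma k}\bigr\|_2,
\]
using H\"older with exponents $\bigl(\tfrac{K}{K-1},K\bigr)$ on the first term and Cauchy--Schwarz on the second. The first factor of the second term is $2^{o(k)(K-1)}=2^{o(k)}$ because it is the $\infty$-norm of the \emph{untruncated} $h$ and $K$ is a constant, while $\|h^{>2\gamma k}\|_2\le\sqrt{2\gamma^{\gamma k}}=2^{-\Omega(k)}$ by Parseval; so the second term is $o_k(1)$, and rearranging $\|h\|_K^K\le a\|h\|_K^{K-1}+o(1)$ with $a=\sqrt{\delta+2\gamma K^2}$ gives the claim. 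You correctly identified that the $L^\infty$ hypothesis must carry the tail; the missing idea is to pair the high-degree piece against $h^{K-1}$ rather than to estimate the $K$-norm of the high-degree piece itself.
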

\begin{proof}
By classical hypercontractivity (see~\cite{o2014analysis}, proof of Theorem 9.21), we have
\[
\left\|h^{\leq 2\gamma k}\right\|_{K}\leq \left(\sum_{d=0}^{\lfloor \gamma k\rfloor }(K-1)^{2d}\left\|h^{=2d}\right\|_{2}^{2}\right)^{1/2}
\leq \left(\delta+\sum_{d=1}^{\lfloor \gamma k\rfloor }K^{2d}\gamma^{d}\right)^{1/2}
\leq \sqrt{\delta+2\gamma K^{2}}.
\]
Using Parseval and~\Cref{prop:property-of-F} we also get
\[
\left\|h^{> 2\gamma k}\right\|_{2}\leq\left(\sum_{d=\lceil \gamma k\rceil}^{k/2}F(k,d,\gamma k)\right)^{1/2}\leq \sqrt{2F(k,\lceil \gamma k\rceil,\gamma k)}\leq \sqrt{2\gamma^{\gamma k}}.
\]
Therefore, using H\"{o}lder's inequality and Cauchy-Schwarz we get that
\begin{align*}
\left\|h\right\|_{K}^{K}=\left\langle h^{K-1}, h\right\rangle
&= \left\langle h^{K-1}, h^{\leq 2\gamma k}\right\rangle +
\left\langle h^{K-1}, h^{>2\gamma k}\right\rangle\\
&\leq \left\|h^{K-1}\right\|_{K/(K-1)}\left\|h^{\leq 2\gamma k}\right\|_{K}+\left\|h^{K-1}\right\|_{2}\left\|h^{> 2\gamma k}\right\|_{2}\\
&\leq \sqrt{\delta+2\gamma K^{2}}\left\|h\right\|_{K}^{K-1}+\sqrt{2\gamma^{\gamma k}}\|h\|_{\infty}^{K-1}\\
&\leq \sqrt{\delta+2\gamma K^{2}}\left\|h\right\|_{K}^{K-1}+o_{k}(1),
\end{align*}
and the conclusion follows by rearranging.
\end{proof}

\subsubsection{Refinements, Unrefinements and Fourier Decay}
In this section we present the notion of unrefinement and study its effect on Fourier decay bounds.

Suppose we have $z,z'\in \{-1,0,1\}^{\binom{[n]}{2}}$ where $z$ subsumes $z'$, 
and suppose we have some function $g\colon L(z')\to \mathbb{R}$ 
for which we managed to prove that it has a strong Fourier decay.
What can we say about the function $h|_{L(z)}$? Does it 
also have a Fourier decay? Such questions arise in our 
argument for~\Cref{lem:norm}, and in this section we 
develop some machinery to answer to them.

Recall that the definition of 
the Fourier transform proceeds by thinking of $L(z')$ as a space $V^{B',b}$, and then considering the cube given by the image of the map $\id$. Each block $B_i'$ then corresponds to a coordinate, and the Fourier expansion is defined correspondingly. Thinking of $\supp(z')$ as a graph, the blocks $\{B_i'\}$ are its connected components. 
To get $L(z)$ into the picture we need to explain how its representation relates to $V^{B',b}$. The simplest example for $z$ that subsumes $z'$ is $z$ that is identical to $z'$, except that $z_{uv}$ is non-zero for some $\{u,v\}\not\in \supp(z')$. 
In that case, the graph of $z$ is the graph of $z'$ with 
one additional edge, which may lead to a merge of two of the connected component of $z'$. More generally, the graph of $z$ that 
subsumes $z'$ is the graph of $z$ with additional edges, which may lead to merges between connected components of $z'$s. 
In other words, the connected components of $z$ are unions of $B_i'$'s, and this is what refer to 
as ``unrefinement''. The unrefinement operation 
gives a representation of $L(z)$ as the space $V^{B,b}$ where
each $B_i$ is a union of possibly several $B_j'$'s. In the other direction, $B'$ may be viewed as a refinement of $B$.

The following result asserts that for a function $g$ as above that has Fourier decay, applying a mild unrefinement
operation one gets a function that also has a decent Fourier decay.
\begin{lemma}\label{lem:override-constraint}
Suppose $B$ and $B'$ are partitions of the set $[n]$, and $b\in\F_{2}^{n}$ is a string of bits. Suppose $B'$ is a refinement of $B$, that is, every component of $B$ is a union of components of $B'$. Let $\gamma,\eta\in (0,\frac{1}{10})$, and suppose that 
\[
\sum_{\ell}|B_{\ell}|\cdot\mathbbm{1}\{|B_{\ell}|\geq 2\}\leq \gamma n^{1/3}.
\]
If $g:V^{B',b}\rightarrow\mathbb{R}$ is $\left(\eta n^{1/3}, 0 ,1\right)$-decaying, then $h=g|_{V^{B,b}}$ is $\left(4\eta\gamma^{2}|B|,8\eta + 2\gamma,1\right)$-decaying. 
\end{lemma}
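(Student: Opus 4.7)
The plan is to leverage the Fourier identity that arises from viewing $V^{B,b}$ as a subspace of $V^{B',b}$. Since $B'$ refines $B$, under the identifications of~\Cref{def:canonical_map} the inclusion $V^{B,b}\hookrightarrow V^{B',b}$ becomes the linear map $\phi\colon\mathbb{F}_2^k\to\mathbb{F}_2^{k'}$ with $\phi(z)_{\ell'}=z_{\pi(\ell')}$, where $\pi\colon[k']\to[k]$ sends each $B'$-block to the $B$-block containing it. A short calculation yields $\chi_{S'}\circ\phi=\chi_{T(S')}$, where $T(S'):=\{\ell\in[k]:|S'\cap\pi^{-1}(\ell)|\text{ odd}\}$, and therefore
\[
\widehat{h}(T)\;=\;\sum_{S'\subseteq[k']:\,T(S')=T}\widehat{g}(S').
\]
Two immediate consequences follow: $|T(S')|\equiv|S'|\pmod{2}$, which together with $g^{=2e-1}\equiv 0$ forces $h^{=2d-1}\equiv 0$; and $|T(S')|\leq|S'|$, so $\widehat{h}(T)$ at level $|T|=2d$ only receives contributions from $\widehat{g}(S')$ with $|S'|\geq 2d$.

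For the level-$2d$ bound with $d\geq 1$, I would partition the $S'$-contributions by the excess $j:=(|S'|-|T|)/2\geq 0$, noting that the ``extra'' coordinates must live inside $P:=\bigcup_{\ell\in[k]_{\mathrm{nt}}}\pi^{-1}(\ell)$ and that $|P|\leq\sum_{\ell:|B_\ell|\geq 2}|B_\ell|\leq\gamma n^{1/3}$. Consequently, $|\{S':T(S')=T,\,|S'|=2(d+j)\}|\leq\binom{\gamma n^{1/3}}{2(d+j)}$. Writing $\widehat{h}(T)=\sum_{j\geq 0}W_j(T)$ with $W_j(T):=\sum_{|S'|=2(d+j),\,T(S')=T}\widehat{g}(S')$, a weighted Cauchy--Schwarz with weights $\alpha_j=2^{-j}$, together with an inner Cauchy--Schwarz on each $W_j(T)^2$, and summing over $|T|=2d$, yields
\[
\|h^{=2d}\|_2^2\;\leq\;2\sum_{j\geq 0}2^j\binom{\gamma n^{1/3}}{2(d+j)}\,F\bigl(k',\,d+j,\,\eta n^{1/3}\bigr).
\]
By~\Cref{prop:property-of-F}, consecutive terms have ratio $O(\gamma^2\eta)$, so the series is dominated by its $j=0$ term; together with $k'\geq k\geq n-\gamma n^{1/3}$, a direct computation matches this against the target $F(k,d,4\eta\gamma^2k)$ both in the small-$d$ regime (where $F=(4\eta\gamma^2)^d$) and the large-$d$ regime (where the $2^{8\eta\gamma^2k}$ prefactor kicks in). The same argument applied to $\widehat{h}(\emptyset)^2$ (where $\widehat{g}(\emptyset)=0$ forces the sum to start at $e\geq 1$ with $S'\subseteq P$) gives $\widehat{h}(\emptyset)^2\leq\sum_{e\geq 1}2^e\binom{\gamma n^{1/3}}{2e}F(k',e,\eta n^{1/3})$; the $e=1$ term is $O(\gamma^2\eta)$, comfortably inside the generous slack $8\eta+2\gamma$.

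The main technical obstacle I anticipate is the careful interplay between the weighted Cauchy--Schwarz and the piecewise structure of $F(k',e,\eta n^{1/3})$ at the threshold $e=\eta n^{1/3}$. In the regime $e>\eta n^{1/3}$, the bound $F(k',e,\eta n^{1/3})=(e/(4k'))^e\cdot 2^{2\eta n^{1/3}}$ carries an exponential prefactor, which needs to be matched against the (potentially larger) exponential prefactor in the target $F(k,d,4\eta\gamma^2k)$. \Cref{prop:property-of-F} is the key tool here, providing monotonicity and the right ratio estimates at the transition between regimes so that the accounting produces the desired bound on both branches.
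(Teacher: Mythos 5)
Your proposal follows essentially the same route as the paper's proof: the same Fourier identity $\widehat{h}(T)=\sum_{S'\,:\,T(S')=T}\widehat{g}(S')$ coming from the inclusion $V^{B,b}\subseteq V^{B',b}$, the same parity observation killing odd levels, the same stratification by the excess $j=(|S'|-|T|)/2$ with the extra coordinates confined to the blocks sitting inside nontrivial parts of $B$, a weighted Cauchy--Schwarz with geometrically decaying weights (you use $2^{-j}$, the paper uses $n^{-2j/3}$; both work), and \Cref{prop:property-of-F} to sum the resulting series and convert $(\gamma n^{1/3})^{2d}F(k',d,\eta n^{1/3})$ into $F(k,d,4\eta\gamma^2 k)$.

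One concrete slip: your count $\bigl|\{S':T(S')=T,\ |S'|=2(d+j)\}\bigr|\leq\binom{\gamma n^{1/3}}{2(d+j)}$ is not a valid upper bound. A contributing $S'$ is determined by $S'\cap Q$, where $Q$ is the set of $B'$-blocks lying inside nontrivial $B$-blocks, and $|S'\cap Q|=2j+s$ with $s$ the number of nontrivial $\ell\in T$ — so the correct count is $\binom{|Q|}{2j+s}$, which is controlled by $|Q|\leq\gamma n^{1/3}$ and by $2j+s$, \emph{not} by $|S'|=2(d+j)$. In particular your binomial vanishes once $2(d+j)>\gamma n^{1/3}$, i.e.\ it would wrongly give $\|h^{=2d}\|_2^2=0$ for all $d>\gamma n^{1/3}/2$, precisely the high-degree regime you flag as the delicate one; and in intermediate regimes $\binom{|Q|}{2j+s}\leq\binom{|Q|}{2(d+j)}$ can fail outright (binomials are not monotone past the midpoint). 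The paper handles this by bounding the count as $\binom{m}{2j+s}\leq\binom{m}{2j}\,m^{|T|}$ in the low-degree case ($d\leq\eta n^{1/3}$), where the $m^{|T|}=m^{2d}$ factor is exactly what gets absorbed into $F(k,d,4\eta\gamma^2k)$ via \Cref{prop:property-of-F}(1), and by the crude bound $2^{m}$ on $|\mathcal{T}(S)|$ in the high-degree case, where a plain (unweighted) Cauchy--Schwarz suffices. With that correction your argument goes through as in the paper; your treatment of $\widehat{h}(\emptyset)$ is fine as written since there $s=0$.
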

\begin{proof}
Suppose $|B|=k$ and $|B'|=k'$. Note that the conditions imply $k'\geq k\geq n/2$. For any $S\subseteq[k]$, we define $\mathcal{T}(S)$ to be the collection of subsets $T\subseteq[k']$ such that: 
\begin{enumerate}
\item for every $\ell\in S$, the number of elements $j\in T$ such that $B'_{j}\subseteq B_{\ell}$ is odd;

\item for every $\ell\in[k]\setminus S$, the number of elements $j\in T$ such that $B'_{j}\subseteq B_{\ell}$ is even.
\end{enumerate}
We will first show that the Fourier coefficient $\widehat{h}(S)$ is equal
to the sum of the coefficients $\widehat{g}(T)$ for $T\in\mathcal{T}(S)$. Towards this end, 
note that if $\chi'_{T}:V^{B',b}\rightarrow \mathbb{R}$ is character 
on $V^{B',b}$ for $T\subseteq[k']$, and $\chi_{S}:V^{B,b}\rightarrow \mathbb{R}$ is the character on $V^{B,b}$ for $S\subseteq[k]$, then
$$\sum_{x\in V^{B,b}}\chi'_{T}(x)\chi_{S}(x)=
\begin{cases}
2^{k},&\text{if }T\in \mathcal{T}(S),\\
0, &\text{if }T\not\in\mathcal{T}(S).
\end{cases}$$
Therefore,
\begin{align}\label{eq:refine_fourier}
\widehat{h}(S)
=\frac{1}{2^{k}}\sum_{x\in V^{B,b}}h(x)\chi_{S}(x)
=\frac{1}{2^{k}}\sum_{x\in V^{B,b}}g(x)\chi_{S}(x)
&=\frac{1}{2^{k}}\sum_{x\in V^{B,b}}\sum_{T\subseteq[k']}\widehat{g}(T)\chi'_{T}(x)\chi_{S}(x)\notag\\
&=\frac{1}{2^{k}}\sum_{T\subseteq[k']}\widehat{g}(T)\sum_{x\in V^{B,b}}\chi'_{T}(x)\chi_{S}(x)\notag\\
&=\sum_{T\in\mathcal{T}(S)}\widehat{g}(T).
\end{align}
Next, to relate the squares of coefficients we want to have an upper bound
on $|\mathcal{T}(S)|$.
Denote $m=\sum_{\ell=1}^{k}|B_{\ell}|\cdot\mathbbm{1}\{|B_{\ell}|\geq 2\}$,
so that by assumption $m\leq \gamma n^{1/3}$. Note that the collections $\{\mathcal{T}(S)\}_{S\subseteq[n]}$ have the following property:
\begin{enumerate}
\item For $S_{1},S_{2}\subseteq[k]$ such that $S_{1}\neq S_{2}$, we have $\mathcal{T}(S_{1})\cap\mathcal{T}(S_{2})=\emptyset$.
\end{enumerate}

Next, let $Q\subseteq [k']$ be the set of all indices $j\in [k']$ such that $B'_{j}$ 
is contained in the set $\bigcup_{\ell \in [k]:\; |B_{\ell}|\geq 2}B_{\ell}$. Since the latter set has size $m$, we know that $Q$ has size at most $m$. By definition of $\mathcal{T}(S)$, it is not hard to see that a member $T\in\mathcal{T}(S)$ is uniquely determined by the set $Q\cap T$. 
Also, $|Q\cap T|\leq |T|-|S|+s$ where $s$ is the number of elements $\ell \in S$ such that $|B_{\ell}|\geq 2$. We thus get the following additional properties of $\mathcal{T}(S)$:
\begin{enumerate}
\setcounter{enumi}{1}
\item Fix $T\in\mathcal{T}(S)$. 
As each $\ell\in S$ contributes an odd number of elements to $T$ and each $\ell\not\in S$ contributes an even number of elements to $T$, we get that $|T|-|S|$ is an even integer. 
By the above, we get that it is in the range $\big[0,|Q|\big]\subseteq [0,m]$.
\item  For integer $j\in [0,m/2]$, the number of sets $T$ in $\mathcal{T}(S)$ with $|T|-|S|=2j$ is at most $\binom{|Q|}{2j+s}$, which we upper bound as $\binom{|Q|}{2j+s}\leq \binom{m}{2j}\cdot m^{|S|}$. Indeed, this is an upper bound on the number of choices for $Q\cap T$, and each such choice
determines a unique element in $\mathcal{T}(S)$.
\end{enumerate}

From this point onward in the proof, we will refer to the three properties mentioned listed above as the first, second and third properties of $\mathcal{T}(S)$.

By the second property of $\mathcal{T}(S)$, we know that if $|S|$ is odd then $|T|$ is odd for every $T\in\mathcal{T}(S)$, which implies $\widehat{g}(T)=0$ by the assumption on $g$, and thus $\widehat{h}(S)=0$ by~\eqref{eq:refine_fourier}. We are now ready to show that $h$ satisfies the desired Fourier weight bound on even degrees, and we split into cases.

\paragraph{Case 1: the low degree case.} Suppose that $|S|=2d$ where $0\leq d\leq \eta n^{1/3}$. Using~\eqref{eq:refine_fourier}, Cauchy-Schwarz and the third property of $\{\mathcal{T}(S)\}$ above we have:
\begin{align*}
\widehat{h}(S)^{2}&\leq
\left(\sum_{T\in\mathcal{T}(S)}n^{-(|T|-|S|)/3}\right)
\left(\sum_{T\in\mathcal{T}(S)}\widehat{g}(T)^{2}\cdot n^{(|T|- |S|)/3}\right)\\
&\leq 
\left(\sum_{j=0}^{\lfloor m/2\rfloor}\binom{m}{2j}m^{|S|}n^{-2j/3}\right)\left(\sum_{T\in\mathcal{T}(S)}\widehat{g}(T)^{2}\cdot n^{(|T|-|S|)/3}\right)\\
&\leq m^{|S|}(1+n^{-1/3})^{m}\sum_{T\in\mathcal{T}(S)}\widehat{g}(T)^{2}\cdot n^{(|T|-|S|)/3}\\
&\leq e^{\gamma}m^{|S|}\sum_{T\in\mathcal{T}(S)}\widehat{g}(T)^{2}\cdot n^{(|T|-|S|)/3},
\end{align*}
where we used $m\leq \gamma n^{1/3}$ in the last transition.
Using the first property of $\{\mathcal{T}(S)\}$, the decaying assumption of $g$, and then \Cref{prop:property-of-F}(2), we get
\begin{align*}
\sum_{|S|=2d}\widehat{h}(S)^{2}&\leq e^{\gamma}m^{2d}\sum_{j=0}^{\lfloor m/2\rfloor }n^{2j/3}\cdot\left(\sum_{|T|=2d+2j}\widehat{g}(T)^{2}\right)\\
&\leq e^{\gamma}m^{2d}\left\|g^{=2d}\right\|_{2}^{2}+e^{\gamma}m^{2d}\sum_{j=1}^{\lfloor m/2\rfloor }n^{2j/3}\cdot F\left(k',d+j,\eta n^{1/3}\right).\\
&\leq e^{\gamma}m^{2d}\left\|g^{=2d}\right\|_{2}^{2}+e^{\gamma}m^{2d}F\left(k',d,\eta n^{1/3}\right)\cdot\sum_{j=1}^{\lfloor m/2\rfloor }n^{2j/3}\left(\frac{2\eta n^{1/3}}{k'}\right)^{j} \\
&\leq e^{\gamma}m^{2d}\left\|g^{=2d}\right\|_{2}^{2}+e^{\gamma}m^{2d}\frac{4\eta}{1-4\eta}\cdot F\left(k',d,\eta n^{1/3}\right),
\end{align*}
where we used $k'\geq k\geq n/2$ in the last inequality. For the special case $d=0$ (recall that we defined $F(\cdot,0,\cdot)=1$) we have
$$\left\|h^{=0}\right\|_{2}^{2}\leq e^{\gamma}\cdot \frac{4\eta }{1-4\eta}\leq 8\eta + 2\gamma,$$
using $\gamma,\eta\in (0,\frac{1}{10})$. For $d>0$, using
$\left\|g^{=2d}\right\|_{2}^{2}\leq F\left(k',d,\eta n^{1/3}\right)$
we get
\begin{align*}
\left\|h^{=2d}\right\|_{2}^{2}
\leq \frac{e^{\gamma}m^{2d}}{1-4\eta}F\left(k',d,\eta n^{1/3}\right)
&\leq 2(\gamma n^{1/3})^{2d}\cdot F\left(k',d,\eta n^{1/3}\right)\\
&\leq F\left(k',d,2\eta\gamma^{2} n\right)&(\text{by \Cref{prop:property-of-F}(1)})\\
&\leq F\left(k',d,4\eta\gamma^{2}k\right) &(\text{using }k'\geq k\geq n/2)\\
&\leq F(k,d,4\eta\gamma^{2}k),
\end{align*}
where the last inequality is because $k'\geq k$.

\paragraph{Case 2: the high degree case.} 
Suppose that $|S|=2d$ where $d>\eta n^{1/3}$. Then by~\eqref{eq:refine_fourier}, Cauchy-Schwarz and the third property of $\{\mathcal{T}(S)\}$ above we have:
\begin{align*}
\widehat{h}(S)^{2}&\leq \left(\sum_{T\in\mathcal{T}(S)}\widehat{g}(T)^{2}\right)\left(\sum_{T\in\mathcal{T}(S)}1\right)
\leq \left(\sum_{T\in\mathcal{T}(S)}\widehat{g}(T)^{2}\right)\left(\sum_{j=1}^{\lfloor m/2\rfloor}\binom{m}{2j+s}\right)
\leq 2^{m-1}\sum_{T\in\mathcal{T}(S)}\widehat{g}(T)^{2}.
\end{align*}
So using the first property of $\{\mathcal{T}(S)\}$, we get:
\begin{align*}
\sum_{|S|=2d}\widehat{h}(S)^{2}&\leq 2^{m-1}\sum_{j=0}^{\lfloor m/2\rfloor}\sum_{|T|=2d+2j}\widehat{g}(T)^{2}
\\
&\leq 2^{m-1}\sum_{j=0}^{\lfloor m/2\rfloor}F\left(k',d+j,\gamma n^{1/3}\right)&(\text{by assumption on }g)\\
&\leq 2^{m-1}F\left(k',d,\gamma n^{1/3}\right)\sum_{j=0}^{\lfloor m/2\rfloor}2^{-j}&(\text{by \Cref{prop:property-of-F}(2)})\\
&\leq 2^{m}F\left(k',d,\gamma n^{1/3}\right)
\leq F\left(k',d,2^{\gamma/\min\{\gamma,\eta\}}\cdot\gamma n^{1/3}\right)&(\text{by \Cref{prop:property-of-F}(1)})\\
&\leq F\left(k,d,2^{\gamma/\min\{\gamma,\eta\}}\cdot\gamma n^{1/3}\right)&(\text{using }k'\geq k).
\end{align*}
Since $\gamma,\eta$ are constants, $k\geq n/2$ and $n$ is assumed to be sufficiently large, the bound obtained is at most $F(k,d,4\eta \gamma^{2}k)$.
\end{proof}

\subsection{Proof of \Cref{lem:norm}}
We are now ready to finish the proof of~\Cref{lem:norm}.
Fix a restriction $z'$, let $A\subseteq\Omega_{z'}$ be 
a $z'$-global set, let $z$ be a constraint that subsumes
$z'$ and let $h$ be as in the statement of the lemma.

\paragraph{Bringing the subspaces $L(z), L(z')$ to form:} 
let $G=([n],\supp(z))$ be the graph formed by the constraint $z$, 
and let $B_{1},B_{2},\dots,B_{k}$ be the connected components of $G$. For each $\ell\in[k]$, pick an arbitrary vertex $v_{\ell}\in B_{\ell}$ and set $b_{v_{\ell}}=0$. For any other vertex $u\in B_{\ell}$, pick the unique simple path $(u_{0},u_{1},\dots,u_{j})$ in $G$ such that $u_{0}=u$, $u_{j}=v_{\ell}$, and define
$$b_{u}=\sum_{i=1}^{j}z_{u_{i-1}u_{i}}.$$
By inspection, the affine subspace $V^{B,b}$ associated with the partition $B$ and string $b=(b_{1},\dots,b_{n})$ is exactly the affine subspace $L(z)$. 

Similarly, letting $G'=([n],\supp(z'))$ be the graph formed by the constraint $z'$, and letting $B'_{1},\dots,B'_{k'}$ be the connected components of $G'$, the graph $G'$ is a subgraph of $G$ and so $B'$ is a refinement of $B$. Using a similar reasoning to before we see that $V^{B',b}$ coincides with $L(z')$.

Since $z'$ is a restriction in the sense of \Cref{def:restrictions}, we have that each one of $B_{1}',\dots,B'_{k'}$ is a set of cardinality $1$ or $2$. Denoting $t=|\supp(z')|$, we assume without loss of generality that $B'_{1},\dots,B'_{n-2t}$ are the singletons among them, and $B'_{n-2t+1},\ldots,B'_{n-t}$ have size $2$. We also assume without loss of generality that $B'_{j}=\{j\}$ for every $j\in[n-2t]$.

\skipi
 Noting that $V^{B,b}\subseteq V^{B',b}$, we let $\iota\colon V^{B,b}\to V^{B',b}$ be the identity map. We recall the map $\id\colon V^{B',b}\colon \mathbb{F}_2^{k'}$ as in~\Cref{def:canonical_map}. Finally, consider the projection map $\pi:\mathbb{F}_{2}^{k'}\rightarrow \mathbb{F}_{2}^{n-2t}$ defined by $(x_{1},\dots,x_{k'})\mapsto (x_{1}+b_{1},\dots,x_{n-2t}+b_{n-2t})$. These maps give rise
 to the following diagram:
\begin{figure}[H]
\centering
\begin{tikzcd}
L(z)=V^{B,b} \arrow[r, "\iota"] \arrow[rrrd, "h", bend right = 10] & L(z') = V^{B',b}\arrow[r, "\id"] \arrow[rrd, "g"]& \mathbb{F}_{2}^{k'} \arrow[r, "\pi"] &\mathbb{F}_{2}^{n-2t} \arrow[d, "f"] \\
& & &\mathbb{R}
\end{tikzcd}
\caption{A commutative diagram of maps between sets}
\label{fig:commutative-diagram}
\end{figure}

\paragraph{Working in $\Omega_{z'}$:} 
Recall that we may identify $\Omega_{z'}$ with the space $\Omega^{[n-2t],\,\alpha n-t}$. This identification will 
enable us to think of $\Omega_{z'}$ as a matching space, and thus apply~\Cref{lem:global-decay} on it. 
Towards this end we adapt the definition of the linear spaces associated with restrictions and the probability distribution a set of edges induces. More precisely, for each $y\in \Omega_{z'}$ we define the affine subspace $L(y|z')\subseteq\mathbb{F}_{2}^{n-2t}$ by
    \[
    L(y|z'):=\bigcap_{\substack{\{u,v\}\subseteq [n-2t]\\ y_{uv}=1}}\left\{x\in\mathbb{F}_{2}^{n-2t}:x_{u}+ x_{v}=0\right\}\cap\bigcap_{\substack{\{u,v\}\subseteq [n-2t]\\ y_{uv}=-1}}\left\{x\in\mathbb{F}_{2}^{n-2t}:x_{u}+ x_{v}=1\right\},
    \]
and define a distribution $\Pr[\cdot |A,z']$ on $\mathbb{F}_{2}^{n-2t}$ by
\[\Pr[x|A,z']=\frac{1}{|A|}\sum_{y\in A}\frac{\mathbbm{1}\{x\in L(y|z')\}}{|L(y|z')|}.\]
Applying \Cref{lem:global-decay} to $A\subseteq \Omega_{z'}$, we see that the function $f:\mathbb{F}_{2}^{n-2t}\rightarrow\mathbb{R}$ defined by $f(x):=2^{n-2t}\Pr[x|A,z']-1$ is $(\eta n^{1/3}/2,0,2)$-decaying. Also, we note that
\begin{align*}
-1\leq f(x)\leq
\frac{2^{n-2t}}{|A|}\sum\limits_{y\in \Omega_{z'}}\frac{\mathbbm{1}\{x\in L(y|z')\}}{|L(y|z')|}
&=\frac{2^{\alpha n-t}}{|A|}
\sum\limits_{y\in \Omega_{z'}}\mathbbm{1}\{x\in L(y|z')\}\\
&=\frac{2^{\alpha n-t}\cdot\left|\Omega_{z'}\right|}{|A|}
\E_{y\in \Omega_{z'}}[\mathbbm{1}\{x\in L(y|z')\}].
\end{align*}
Sampling $y\in \Omega_{z'}$, for each $\{u,v\}\in \supp(y)\setminus \supp(z')$ the bit 
$y_{uv}$ is independent and is hence equal to 
$(-1)^{x_u+x_v}$ with probability $1/2$. It follows that the last expectation is equal to $2^{-(\alpha n - t)}$, and so $f(x)\leq |\Omega_{z'}|/|A|$. Summarizing, we get
\begin{equation}\label{eq:infty_norm_bd}
\norm{f}_{\infty}  
\leq
\frac{\left|\Omega_{z'}\right|}{|A|}
\leq 2^{\eta n^{1/3}}.
\end{equation}

\paragraph{Returning to the space $L(z')$:} 
we want to convert the information we have on $f$ to information about the function $h$ in the statement of the lemma.
Towards that end, we first note that $A\subseteq \Omega_{z'}$ directly induces a distribution $\Pr[\cdot|A]$ supported on $L(z')=V^{B',b}$, and we define $g(x):=2^{n-t}\Pr[x|A]-1$. Observe that 
\[
L(y)=(\pi\circ\id)^{-1}\left(L(y|z')\right),
\]
and as $\pi\circ \id$ is a linear map of rank $n-2t$ we get that $|L(y)|=2^{t}\left|L(y|z')\right|$. Thus, we get by definitions that 
\[
\Pr[x|A]=2^{-t}\Pr[\pi\circ\id(x)|A,z']
\]
for all $x\in V^{B',b}$, and hence $g(x)=f(\pi(\id(x)))$ 
(as also indicated by~\Cref{fig:commutative-diagram}). Considering the Fourier expansion, we get that for all $S\subseteq [k']$
\begin{align*}
\widehat{g}(S)
=2^{-n+t}\sum_{x\in V^{B',b}}g(x)\chi_{S}(\id(x))
&= 2^{-n+t}\sum_{\xi\in\mathbb{F}_{2}^{n-2t}}f(\xi)\left(\sum_{x\in \mathbb{F}_{2}^{n-t}}\chi_{S}(x)\cdot\mathbbm{1}\{\pi(x)=\xi\}\right)\\
&= 2^{-n+t}\sum_{\xi\in\mathbb{F}_{2}^{n-2t}}f(\xi)\left(\sum_{x\in \pi^{-1}(\xi)}
\chi_{S}(x)\right)\\
&=\begin{cases}
2^{-n+t}\sum_{\xi\in \mathbb{F}_{2}^{n-2t}}f(\xi)\cdot 2^{t}\chi_{S}(\xi+b) &\text{if }S\subseteq[n-2t]\\
0 &\text{if }S\not\subseteq[n-2t]
\end{cases}\\
&=\begin{cases}
\chi_{S}(b)\widehat{f}(S), &\text{if }S\subseteq [n-2t]\\
0, &\text{if }S\not\subseteq [n-2t].
\end{cases}
\end{align*}
Therefore, for every $d\geq 0$, we have $\left\|g^{=d}\right\|_{2}=\left\|f^{=d}\right\|_{2}$. 
Since $f$ is $(\eta n^{1/3}/2,0,2)$-decaying, we know that $f^{=d}\equiv 0$ for $d$ which is either $0$ or odd, 
yielding that $g^{=d}\equiv 0$ for such $d$'s. 
For even $d$'s, we get 
\[
\left\|g^{=2d}\right\|_{2}^{2}=
\left\|f^{=2d}\right\|_{2}^{2}\leq 2^{-d}F\Big(n-2t,d,\eta n^{1/3}/2\Big)\leq F\Big(n-t,d,\eta n^{1/3}/2\Big),
\]
where the last inequality follows by the definition of $F$.
This means $g$ is $(\eta n^{1/3}/2,0,1)$-decaying. 

To finish the proof, we note that $h:=g|_{V^{B,b}}$, so by~\Cref{lem:override-constraint} we get that $g$ is $(2\eta \gamma^{2} k,4\eta +2\gamma ,1)$-decaying. By~\eqref{eq:infty_norm_bd} we get that $\|h\|_{\infty}=\|f\|_{\infty}\leq 2^{\eta n^{1/3}}=2^{o(k)}$, so applying \Cref{lem:decay-to-K-norm} we conclude that 
\[\|h\|_{K}\leq\sqrt{4\eta\gamma^{2}K^{2}+4\eta + 2\gamma}+o(1).
\tag*{\qed}
\]
\section{Global Hypercontractivity in $\Omega$}\label{sec:global_hypercontractivity}

The goal of this section is to prove~\Cref{lem:global-decay}.

As discussed earlier, the proof proceeds by relating the Fourier coefficients of $h$ in the statement to Fourier-style coefficients of $1_A\colon \Omega \to\{0,1\}$, and then 
etablishing a type of level-$d$ inequality for functions 
over $\Omega$. Indeed, the bulk of the effort in our argument is devoted to the proof of a ``projected level-$d$ inequality'' for global functions over $\Omega$ (\Cref{thm:level-d-inequality}). Towards this end we first have to establish an appropriate global hypercontractive inequality, and we do so
using analogous results for product spaces from~\cite{KLM23}.


\subsection{Fourier Analytic Setup for Functions over $\Omega$}\label{subsec:Fourier-setup}
To facilitate Fourier-type analysis over $\Omega$, our first task is to introduce a collection of character functions on $\Omega^{U,m}$. The characters we define are indexed by ``partial matchings'' on the ground set $U$, that is, matchings of size smaller than or equal to $m$. To facilitate the definition, we first introduce the following notations.

\begin{definition}
For a ground set $U$ and an integer $d\geq 0$, we let $\calM_{U,d}$ denote the collection of all matchings over $U$ of size exactly $d$, and let $\calM_{U,\leq d}:=\bigcup_{s=0}^{d}\calM_{U,s}$.
\end{definition}
\begin{definition}\label{def:Psi}
For integers $n,m$ such that $n\geq 2m\geq 0$, we define $\Psi(n,m,0):=1$, and for $1\leq d\leq m$ we define inductively $\Psi(n,m,d):=m\binom{n}{2}^{-1}\cdot \Psi(n-2,m-1,d-1)$.
\end{definition}

It is easy to see that $\Psi(n,m,d)$ is equal to the probability that a fixed matching of size $d$ over a ground set of size $n$ is contained in a uniformly random matching of size $m$. We now define an associated collection of character:

\begin{definition}\label{def:characters}
For a matching $S\in\calM_{U,\leq m}$ and an element $y\in \Omega^{U,m}$, we define $y^{S}:= \prod_{\{u,v\}\in S}y_{uv}$. We then define the character function $\chi_{S}:\Omega^{U,m}\rightarrow\mathbb{R}$ by
$$\chi_{S}(y):=\Psi(|U|,m,|S|)^{-1/2}\cdot y^{S}.$$
\end{definition}

Note that the dimension of the inner product space $L^{2}(\Omega^{U,m})$ is much larger than the number of partial matchings on $U$. In particular, the character functions we defined do not form a basis for $L^{2}(\Omega^{U,m})$. Nevertheless, they will be sufficient for us, as Fourier coefficients coming from $h$ in the context of \Cref{lem:global-decay} will only be related to correlations with these character functions.

The next proposition shows the functions from~\Cref{def:characters} form an orthonormal set.
\begin{proposition}\label{prop:ortho_set}
For matchings $S,T\in\calM_{U,\leq m}$, we have $\langle \chi_{S},\chi_{T}\rangle=\mathbbm{1}\{S=T\}$.
\end{proposition}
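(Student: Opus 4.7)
The plan is to expand the inner product directly from~\Cref{def:characters} and do a case analysis on whether $S = T$. With respect to the uniform measure on $\Omega^{U,m}$ we have
\[
\langle \chi_S, \chi_T \rangle \;=\; \Psi(|U|,m,|S|)^{-1/2}\,\Psi(|U|,m,|T|)^{-1/2}\cdot \E_{y\in \Omega^{U,m}}\!\left[y^{S}\, y^{T}\right].
\]
The key observation is that sampling $y \in \Omega^{U,m}$ uniformly is equivalent to first sampling a uniformly random matching $M$ of size $m$ on $U$ and then assigning each edge of $M$ an independent uniform $\pm 1$ label. Under this decomposition, $y_e^2 = \mathbbm{1}\{e \in \supp(y)\}$, so
\[
y^S y^T \;=\; \mathbbm{1}\{(S\cap T)\subseteq \supp(y)\}\cdot \prod_{e\in S\triangle T} y_e.
\]
Moreover, $y_e$ is nonzero only when $e \in \supp(y)$, so $y^S y^T$ vanishes unless $S \cup T \subseteq \supp(y)$, which in particular forces $S \cup T$ to be a matching on $U$.

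For the diagonal case $S = T$, we get $y^S y^T = \prod_{e\in S} y_e^2 = \mathbbm{1}\{S\subseteq \supp(y)\}$, whose expectation is exactly the probability that a uniformly random matching of size $m$ contains the fixed $|S|$-matching $S$. A straightforward induction on $|S|$ verifies that $\Psi(|U|,m,|S|)$ as defined equals this probability: conditioning on a fixed edge $e\in S$ belonging to $M$ occurs with probability $m/\binom{|U|}{2}$ (by symmetry over edges, using the count $N(n,m) = n!/(2^m m!(n-2m)!)$), and the conditional distribution of $M\setminus\{e\}$ is uniform on matchings of size $m-1$ over the remaining $|U|-2$ vertices. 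Hence $\E[y^Sy^S] = \Psi(|U|,m,|S|)$, and the normalization yields $\langle\chi_S,\chi_S\rangle = 1$.

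For the off-diagonal case $S \neq T$, either $S\cup T$ is not a matching, in which case $y^Sy^T \equiv 0$, or $S\cup T$ is a matching. In the latter case, conditional on $S\cup T \subseteq \supp(y)$, the two-step sampling makes $\{y_e\}_{e\in S\cup T}$ i.i.d.\ uniform on $\{-1,1\}$, so
\[
\E\!\left[\prod_{e\in S\triangle T} y_e \,\Bigm|\, S\cup T \subseteq \supp(y)\right]\;=\;\prod_{e\in S\triangle T}\E[y_e]\;=\;0,
\]
since $S\triangle T$ is nonempty. Thus $\E[y^S y^T] = 0$, completing the orthonormality claim. The argument is essentially bookkeeping; the main thing to keep straight is the two-step sampling decomposition (support then labels) and the combinatorial identification of $\Psi$ as the containment probability, so there is no substantive obstacle.
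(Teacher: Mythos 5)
Your proof is correct and follows essentially the same route as the paper's: split into the cases $S\neq T$ (where either $S\cup T$ fails to be a matching or the labels on $S\cup T$ are conditionally i.i.d.\ uniform, killing the expectation) and $S=T$ (where $\E[y^S y^S]=\Pr[S\subseteq\supp(y)]=\Psi(|U|,m,|S|)$). The only difference is that you spell out the inductive verification that $\Psi$ equals the containment probability, which the paper states without proof just after its definition.
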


\begin{proof}
First consider the case where $S\neq T$. If $S\cup T$ is not a matching, then $y^{S}\cdot y^{T}=0$ for all $y\in\Omega^{U,m}$, and therefore $\langle \chi_{S},\chi_{T}\rangle =0$. If $S\cup T$ is a matching, then conditioned on $S\cup T\subseteq \supp(y)$ (which is equivalent to $y^{S}\cdot y^{T}\neq 0$), the coordinates $\{y_{e}\}_{e\in S\cup T}$ are jointly uniformly distributed on $\{-1,1\}^{S\cup T}$. Therefore $S\neq T$ implies $\E_{y\in\Omega^{U,m}}[y^{S}\cdot y^{T}]=0$, and hence $\langle \chi_{S},\chi_{T}\rangle =0$.

Next, consider the case where $S=T$. Note that $\E_{y\in\Omega^{U,m}}[y^{S}\cdot y^{S}]=\Pr_{y\in\Omega^{U,m}}[S\subseteq \supp(y)]$, which equals the probability that a fixed matching of size $|S|$ is contained in a uniformly random matching of size $m$. Thus 
\[
\E_{y\in\Omega^{U,m}}\left[\chi_{S}(y)^{2}\right]= \Psi(|U|,m,|S|)^{-1}\cdot \E_{y\in\Omega^{U,m}}\left[y^{S}\cdot y^{S}\right]= \Psi(|U|,m,|S|)^{-1}\cdot \Psi(|U|,m,|S|)=1. \qedhere
\]
\end{proof}

\subsection{Discrete Derivatives and Derivated-based Globalness}\label{subsec:derivatives}
In this subsection we introduce the notion of discrete derivatives
for functions over $\Omega$, as well a related notion of globalness. These concepts are important in our subsequent 
argument, and more precisely in the derivation of the level-$d$ inequality from our global hypercontractive inequality.

For these purposes, we will want to further study ``restrictions'' as in \Cref{def:restrictions}. 
While in the context of~\Cref{def:restrictions}, restrictions on $\Omega^{U,m}$ are viewed as fixing a set of coordinates to certain values in $\{-1,1\}$, in this section, we are primarily interested in restrictions that require some coordinates to take values in $\{-1,1\}$ (i.e. not 0), without specifying which ones. 
Let $S$ be a matching on $U$ of size at most $m$, and consider the restricted domain $\{y\in \Omega^{U,m}:\supp(y)\supseteq S\}$. An element in this domain is determined by two choices: 
\begin{enumerate}
    \item Assigning labels to the edges in \( S \), which corresponds to selecting an element from \( \{-1,1\}^{S} \).
    \item Choosing the remaining labeled matching on \( U \setminus N(S) \), which corresponds to an element in \( \Omega^{U\setminus N(S),\,m-|S|} \).
\end{enumerate}
This leads to the following definition:

\begin{definition}\label{def:embedding}
For a matching $S\in \calM_{U,\leq m}$, there is a canonical embedding 
\[
\mathfrak{i}:\Omega^{U\setminus N(S),\,m-|S|}\times\{-1,1\}^{S}\hookrightarrow \Omega^{U,m}.
\]
This embedding proceeds by mapping a pair $(y,z)$ from the left hand side to the vector $\xi\in \Omega^{U,m}$ defined by $\xi_{uv}=y_{uv}$ for $\{u,v\}\subseteq U\setminus N(S)$, $\xi_{uv}=z_{uv}$ for $\{u,v\}\in S$, and $\xi_{uv}=0$ for all other pairs $\{u,v\}$. We will also use shorthand $\Omega^{U,m}_{\setminus S}$ to denote the space $\Omega^{U\setminus N(S),\,m-|S|}$.
\end{definition}

We now define derivative operators on $L^{2}(\Omega^{U,m})$:
\begin{definition}[Derivatives]\label{def:derivative}
Consider a function $f:\Omega^{U,m}\rightarrow\mathbb{R}$. For a matching $S\in \calM_{U,\leq m}$, we define a function $D_{S}[f]:\Omega^{U,m}_{\setminus S}\rightarrow\mathbb{R}$ by 
\[
D_{S}[f](y):=\E_{z\in\{-1,1\}^{S}}\left[z^{S}\cdot f(\mathfrak{i}(y,z))\right],
\]
where the embedding $\fraki:\Omega^{U,m}_{\setminus S}\times \{-1,1\}^{S}\rightarrow\Omega^{U,m}$ is as in \Cref{def:embedding}.
\end{definition}

A nice property of these derivative operators is that they are closed under composition.

\begin{lemma}\label{lem:derivatives-compose}
Suppose $S$ and $T$ are vertex disjoint matchings over $U$ with $|S\cup T|\leq m$. For $f:\Omega^{U,m}\rightarrow\mathbb{R}$ we have
\[
D_{S}D_{T}[f]=D_{S\cup T}[f].
\]
\end{lemma}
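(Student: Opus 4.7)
The plan is to unravel both sides using \Cref{def:derivative} and to observe that, when $S$ and $T$ are vertex disjoint matchings, the iterated embedding into $\Omega^{U,m}$ coincides with the single embedding $\fraki$ associated to the combined matching $S \cup T$.

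First I would check that both sides of the claimed identity have the same domain. Since $D_T[f]\colon \Omega^{U,m}_{\setminus T} = \Omega^{U\setminus N(T),\,m-|T|} \to \mathbb{R}$ and, by vertex disjointness of $S$ and $T$, $S$ is a matching on $U\setminus N(T)$ of size at most $m - |T|$, the operator $D_S$ is applicable to $D_T[f]$ and produces a function on $\Omega^{(U\setminus N(T))\setminus N(S),\,m - |S| - |T|}$. Because $N(S\cup T) = N(S)\cup N(T)$ and $|S\cup T| = |S|+|T|$ (using vertex disjointness), this space equals $\Omega^{U,m}_{\setminus(S\cup T)}$, which is also the domain of $D_{S\cup T}[f]$.

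Next, I would expand the left-hand side by two applications of \Cref{def:derivative}: for any $y\in\Omega^{U,m}_{\setminus(S\cup T)}$,
\[
D_S D_T[f](y) = \E_{z_S\in\{-1,1\}^S}\,\E_{z_T\in\{-1,1\}^T}\left[z_S^S\cdot z_T^T \cdot f\bigl(\fraki_T(\fraki_S(y,z_S),\,z_T)\bigr)\right],
\]
where $\fraki_S\colon \Omega^{U\setminus N(T),\,m-|T|}_{\setminus S}\times\{-1,1\}^S \hookrightarrow \Omega^{U\setminus N(T),\,m-|T|}$ and $\fraki_T\colon\Omega^{U,m}_{\setminus T}\times\{-1,1\}^T\hookrightarrow \Omega^{U,m}$ denote the two relevant instances of the canonical embedding from \Cref{def:embedding}. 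The key observation is that, because $S$ and $T$ are vertex disjoint, $\fraki_T\circ\fraki_S$ coincides with the direct embedding $\fraki\colon \Omega^{U,m}_{\setminus(S\cup T)}\times\{-1,1\}^{S\cup T} \hookrightarrow \Omega^{U,m}$ under the natural identification $\{-1,1\}^S\times\{-1,1\}^T \cong \{-1,1\}^{S\cup T}$: both produce the element of $\Omega^{U,m}$ that agrees with $y$ on pairs disjoint from $N(S)\cup N(T)$, with $z_S$ on $S$, with $z_T$ on $T$, and is $0$ on all other pairs.

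To conclude, I would combine the factorization $z^{S\cup T} = z_S^S\cdot z_T^T$ (which holds since $S$ and $T$ are disjoint as edge sets) with the fact that independently uniform sampling of $z_S$ and $z_T$ is the same as uniform sampling of $z\in\{-1,1\}^{S\cup T}$. The displayed expression then collapses to
\[
\E_{z\in\{-1,1\}^{S\cup T}}\left[z^{S\cup T}\cdot f(\fraki(y,z))\right] = D_{S\cup T}[f](y),
\]
as required. I do not anticipate any real obstacle beyond careful bookkeeping of domains; the whole content of the lemma is the naturality of the embedding $\fraki$ under vertex-disjoint unions of matchings, which is essentially immediate from \Cref{def:embedding}.
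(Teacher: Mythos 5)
Your proposal is correct and follows essentially the same route as the paper's proof: expand both derivative operators via \Cref{def:derivative}, observe that the composed embedding equals the single embedding for $S\cup T$ under the identification $\{-1,1\}^{S}\times\{-1,1\}^{T}\cong\{-1,1\}^{S\cup T}$, and merge the two expectations using $z_S^S\cdot z_T^T=z^{S\cup T}$. The domain-checking you include is a sensible bookkeeping step that the paper leaves implicit.
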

\begin{proof}
Recall from \Cref{def:embedding} that we have canonical embeddings
\[\fraki_{2}:\Omega^{U,m}_{\setminus (S\cup T)}\times\{-1,1\}^{S}\times\{-1,1\}^{T}\hookrightarrow \Omega^{U,m}\quad\text{ and }\quad \fraki_{1}:\Omega^{U,m}_{\setminus (S\cup T)}\times\{-1,1\}^{S}\hookrightarrow \Omega^{U,m}_{\setminus T}.\]
Thus, we get
\begin{align*}
D_{S}D_{T}[f](y)&=\E_{z_{(1)}\in \{-1,1\}^{S}}\left[z_{(1)}^{S}\cdot D_{T}[f]\Big(\fraki_{1}(y,z_{(1)})\Big)\right]\\
&=\E_{z_{(1)}\in \{-1,1\}^{S}}\left[z_{(1)}^{S}\cdot \E_{z_{(2)}\in\{-1,1\}^{T}}\left[z_{(2)}^{T}\cdot f\Big(\fraki_{2}(y,z_{(1)},z_{(2)})\Big)\right]\right]\\
&=\E_{z_{(1)}\in \{-1,1\}^{S}}\E_{z_{(2)}\in\{-1,1\}^{T}}\left[z_{(1)}^{S}z_{(2)}^{T}f\Big(\fraki_{2}(y,z_{(1)},z_{(2)})\Big)\right]\\
&=D_{S\cup T}[f](y).\qedhere
\end{align*}
\end{proof}

Following~\cite[Definition 4.4]{KLM23}, we define a notion of derivative-based globalness.

\begin{definition}\label{def:derivative-based-global}
Let $r,\lambda>0$ and $1\leq p<\infty$. For a function $f:\Omega^{U,m}\rightarrow\mathbb{R}$, we say it is $(r,\lambda,d)$-$L^{p}$-global if for every matching $S\in\calM_{U,\leq d}$, we have $\left\|D_{S}f\right\|_{p}\leq r^{|S|}\lambda$.
\end{definition}

\Cref{lem:derivatives-compose} then has the following important corollary.

\begin{corollary}\label{cor:globalness-of-derivative}
If $f:\Omega^{U,m}\rightarrow\mathbb{R}$ is $(r,\lambda,d)$-$L^{p}$-global, then for any matching $S\in\calM_{U,\leq d}$, the derivative $D_{S}[f]$ is $(r,r^{|S|}\lambda ,d-|S|)$-$L^{p}$-global.
\end{corollary}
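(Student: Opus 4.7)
The plan is to derive this corollary directly from the composition identity in Lemma~\ref{lem:derivatives-compose} together with the definition of derivative-based globalness; no fresh ideas are required, only bookkeeping.

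First I will fix an arbitrary matching $T \in \calM_{U\setminus N(S),\,\leq d-|S|}$, so that $T$ is by construction vertex-disjoint from $S$ and $|S\cup T| \leq d \leq m$. The hypotheses of Lemma~\ref{lem:derivatives-compose} are then met, and I may write
\[
D_{T}\bigl[D_{S}[f]\bigr] \;=\; D_{S \cup T}[f].
\]
This identity should be interpreted on $\Omega^{U,m}_{\setminus (S\cup T)} = \Omega^{U\setminus N(S\cup T),\,m-|S|-|T|}$, which via the canonical embedding $\fraki$ of Definition~\ref{def:embedding} is the same underlying space whether one regards the outer derivative as being taken in $\Omega^{U,m}_{\setminus S}$ or the combined derivative as being taken directly in $\Omega^{U,m}$. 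Consequently, the $L^p$ norms on the two sides agree literally, and no change-of-measure factor appears.

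Next I will invoke the $(r,\lambda,d)$-$L^p$-globalness of $f$ with the matching $S\cup T \in \calM_{U,\leq d}$, obtaining
\[
\bigl\|D_{T}[D_{S}[f]]\bigr\|_{p} \;=\; \bigl\|D_{S\cup T}[f]\bigr\|_{p} \;\leq\; r^{|S\cup T|}\lambda \;=\; r^{|T|}\cdot\bigl(r^{|S|}\lambda\bigr).
\]
Since $T$ was an arbitrary matching in $\calM_{U\setminus N(S),\,\leq d-|S|}$, this is exactly the defining bound for $D_{S}[f]$ to be $\bigl(r,\, r^{|S|}\lambda,\, d-|S|\bigr)$-$L^p$-global.

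The only point that warrants a sentence of care is the last one: confirming that the ambient set for derivatives of $D_S[f]$ is indeed $\calM_{U\setminus N(S),\,\leq d-|S|}$ rather than $\calM_{U,\leq d-|S|}$, and that this is precisely the collection of matchings $T$ for which $S\cup T$ is still a valid matching in $\calM_{U,\leq d}$. I expect this alignment of index sets to be the only conceptual check; everything else is a one-line application of Lemma~\ref{lem:derivatives-compose} followed by the hypothesis, so there is no genuine obstacle.
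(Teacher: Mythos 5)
Your proof is correct and follows essentially the same route as the paper's: fix $T\in\calM_{U\setminus N(S),\,\leq d-|S|}$, apply Lemma~\ref{lem:derivatives-compose} to identify $D_T[D_S[f]]$ with $D_{S\cup T}[f]$, and then invoke the $(r,\lambda,d)$-$L^p$-globalness of $f$ on the matching $S\cup T$. The only difference is cosmetic (order of the two steps, plus your extra remark confirming the index set alignment and that the norms agree on the restricted space, which the paper leaves implicit).
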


\begin{proof}
For each matching $T\in \calM_{U\setminus N(S),\,\leq d-|S|}$, we know that $S\cup T\in \calM_{U,\leq d}$. So by the assumption that $f$ is $(r,\lambda,d)$-$L^{p}$-global, we have $\left\|D_{S\cup T}f\right\|_{p}\leq r^{|S|+|T|}\lambda$. By \Cref{lem:derivatives-compose} it follows that $\left\|D_{T}\left[D_{S}f\right]\right\|_{p}\leq r^{|T|}\cdot r^{|S|}\lambda$, as required.
\end{proof}

Thinking of the derivative with respect to $S$ 
as measuring the effect that the coordinates of $S$ 
have the the mass of the function, an intuitive way 
to think about the notion of derivative-based globaness
is that no small set of variables can bump the mass of 
$f$ too much beyond $\lambda$ (it is typical to think of $\lambda$ as $\Theta(\norm{f}_p)$. With this in mind, the following proposition (similar to~\cite[Lemma 4.9]{KLM23}) shows that globalness in the sense of~\Cref{def:global_set} implies discrete derivative based globalness as in~\Cref{def:derivative-based-global}.

\begin{proposition}\label{prop:connecting-two-globalness}
Suppose a subset $A\subseteq \Omega^{U,m}$ is a global set (in the sense of \Cref{def:global_set}). Let $\varphi:\Omega^{U,m}\rightarrow\{0,1\}$ be the indicator function of $A$. Then for every $1\leq p<\infty$, the function $\varphi$ is $(2^{1/p}, \|\varphi\|_{p}, m)$-$L^{p}$-global.
\end{proposition}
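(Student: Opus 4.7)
The plan is to bound $\|D_{S}\varphi\|_p^p$ directly by translating the derivative into a counting quantity on $A$ and then invoking the set-theoretic globalness hypothesis. Fix a matching $S\in\calM_{U,\leq m}$. Since $|z^S|=1$, Jensen's inequality applied to the convex function $t\mapsto |t|^p$ gives
\[
|D_{S}[\varphi](y)|^{p} = \left|\E_{z\in\{-1,1\}^{S}}\left[z^{S}\varphi(\fraki(y,z))\right]\right|^{p} \leq \E_{z\in\{-1,1\}^{S}}\left[\varphi(\fraki(y,z))^{p}\right] = \E_{z\in\{-1,1\}^{S}}\left[\varphi(\fraki(y,z))\right],
\]
where the last equality uses that $\varphi$ is $\{0,1\}$-valued. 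Averaging over $y\in\Omega^{U,m}_{\setminus S}$ and noting that under the canonical embedding $\fraki$ the pair $(y,z)$ ranges bijectively over the set of $\xi\in\Omega^{U,m}$ with $S\subseteq\supp(\xi)$, we get
\[
\|D_{S}\varphi\|_{p}^{p}\;\leq\;\frac{\left|\{\xi\in A:S\subseteq\supp(\xi)\}\right|}{\left|\Omega^{U,m}_{\setminus S}\right|\cdot 2^{|S|}}.
\]

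The next step is to partition the numerator according to the labels on $S$. For each $z\in\{-1,1\}^{S}$, view $z$ as a restriction in the sense of \Cref{def:restrictions} with $\supp(z)=S$; then $\{\xi\in A:\xi|_{S}=z\}=A\cap\Omega_{z}$, and $|\Omega_{z}|=|\Omega^{U,m}_{\setminus S}|$ independently of $z$. Using the globalness of $A$ applied with $z'=\vec 0$ we obtain, for every $z\in\{-1,1\}^{S}$,
\[
|A\cap\Omega_{z}|\;\leq\;2^{|S|}\cdot\frac{|A|}{|\Omega^{U,m}|}\cdot|\Omega_{z}|\;=\;2^{|S|}\,\mu(A)\,|\Omega^{U,m}_{\setminus S}|.
\]
Summing over the $2^{|S|}$ labelings $z$ yields
\[
\left|\{\xi\in A:S\subseteq\supp(\xi)\}\right|\;\leq\;2^{2|S|}\,\mu(A)\,|\Omega^{U,m}_{\setminus S}|.
\]

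Plugging this into the earlier bound gives $\|D_{S}\varphi\|_{p}^{p}\leq 2^{|S|}\mu(A)=2^{|S|}\|\varphi\|_{p}^{p}$, i.e.\ $\|D_{S}\varphi\|_{p}\leq 2^{|S|/p}\|\varphi\|_{p}$, which is precisely the required $(2^{1/p},\|\varphi\|_{p},m)$-$L^{p}$-globalness. There is no real obstacle here: the argument is a short calculation combining Jensen's inequality, the $\{0,1\}$ nature of the indicator, and a clean accounting of the uniform embedding; the only thing to be slightly careful about is the bookkeeping between the two notions of globalness and verifying that each labeling $z\in\{-1,1\}^{S}$ qualifies as a valid restriction so that \Cref{def:global_set} can be applied.
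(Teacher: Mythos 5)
Your proof is correct and takes essentially the same approach as the paper: apply a convexity inequality to remove the sign $z^S$ (you use Jensen pointwise and then integrate; the paper uses Minkowski's inequality on the $L^p$ norm), identify the image of the embedding $\fraki$ with the set $\{\xi:\ S\subseteq\supp(\xi)\}$, and bound each slice $|A\cap\Omega_z|$ via the definition of globalness applied with $z'=\vec 0$. The computations match and give the same bound $\|D_S\varphi\|_p\leq 2^{|S|/p}\|\varphi\|_p$.
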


\begin{proof}
Consider an arbitrary matching $S\in\calM_{U,\leq m}$, and let $\fraki:\Omega^{U,m}_{\setminus S}\times \{-1,1\}^{S}\hookrightarrow \Omega^{U,m}$ be the embedding defined in \Cref{def:embedding}. For any fixed $z\in\{-1,1\}^{S}$, by \Cref{def:global_set}, the function $\varphi(\fraki(\cdot,z)):\Omega^{U,m}_{\setminus S}\rightarrow \{0,1\}$ is the indicator function of a set of size at most $2^{|S|}\cdot|A| \cdot |\Omega^{U,m}_{\setminus S}|/|\Omega^{U,m}|$. As $\varphi$ is Boolean valued we get 
$\left\|\varphi(\fraki(\cdot,z))\right\|_{p}^{p}\leq 2^{|S|}\cdot \left\|\varphi\right\|_{p}^{p}$, and so
\[
\left\|D_{S}[\varphi]\right\|_{p}=\left\|\E_{z\in\{-1,1\}^{S}}\left[z^{S}\cdot \varphi(\mathfrak{i}(\cdot,z))\right]\right\|_{p}\leq \E_{z\in\{-1,1\}^{S}}\left\|\varphi(\fraki(\cdot,z))\right\|_{p}\leq 2^{|S|/p}\cdot\left\|\varphi\right\|_{p}.
\qedhere
\]
\end{proof}

\subsection{Level-$d$ Projection}\label{subsec:projections}
To state and prove our level-$d$ inequality, we first define the projection operator on the space of degree $d$ functions 
spanned by the characters $\{\chi_S\}_{S\in \mathcal{M}_{U,d}}$.
\begin{definition}
Define $\calC^{U,m,d}=\lspan \left\{\chi_{S}:S\in\calM_{U,d}\right\}$ 
and define the operator $P_{\calC}^{=d}:L^{2}(\Omega^{U,m})\rightarrow\calC^{U,m,d}$ to be the orthogonal projection onto the subspace $\mathcal{C}^{U,m,d}$. 
\end{definition}

Using the fact that $\{\chi_{M}:M\in\calM_{U,d}\}$ forms an orthonormal basis of $\calC^{U,m,d}$ (see~\Cref{prop:ortho_set}), we have the following direct formula for projections.

\begin{proposition}\label{prop:formula-of-projection}
Given an integer $d\geq 0$, for each function $f:\Omega^{U,m}\rightarrow\mathbb{R}$ we have
$$P_{\calC}^{=d}[f](y):=\sum_{S\in\mathcal{M}_{U,d}}\langle f,\chi_{S}\rangle \cdot \chi_{S}(y).$$
\end{proposition}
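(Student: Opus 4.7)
The plan is to apply the standard formula for orthogonal projection onto a subspace with an orthonormal basis. By the definition of $\mathcal{C}^{U,m,d}$, the set $\{\chi_S : S \in \mathcal{M}_{U,d}\}$ spans it, and by~\Cref{prop:ortho_set} (specialized to $S,T \in \mathcal{M}_{U,d}$) these characters are pairwise orthogonal with unit norm. Hence they form an orthonormal basis for $\mathcal{C}^{U,m,d}$.

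With the orthonormal basis in hand, I would denote the right-hand side by
\[
g(y) := \sum_{S\in\mathcal{M}_{U,d}}\langle f,\chi_{S}\rangle \cdot \chi_{S}(y),
\]
and verify the two defining properties of the orthogonal projection of $f$ onto $\mathcal{C}^{U,m,d}$. First, $g$ is manifestly a linear combination of the $\chi_S$'s with $S\in\mathcal{M}_{U,d}$, so $g\in\mathcal{C}^{U,m,d}$. Second, I would show $f-g$ is orthogonal to $\mathcal{C}^{U,m,d}$, for which it suffices to check orthogonality against each basis element $\chi_T$ with $T\in\mathcal{M}_{U,d}$. Using bilinearity and~\Cref{prop:ortho_set},
\[
\langle g,\chi_T\rangle = \sum_{S\in\mathcal{M}_{U,d}}\langle f,\chi_S\rangle\cdot\langle \chi_S,\chi_T\rangle = \sum_{S\in\mathcal{M}_{U,d}}\langle f,\chi_S\rangle\cdot\mathbbm{1}\{S=T\} = \langle f,\chi_T\rangle,
\]
so $\langle f-g,\chi_T\rangle = 0$ for every $T\in\mathcal{M}_{U,d}$. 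This uniquely characterizes $g$ as $P_{\mathcal{C}}^{=d}[f]$.

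There is no real obstacle here; the proposition is an essentially immediate consequence of orthonormal expansion in a finite-dimensional inner product subspace, and the only input beyond routine linear algebra is the orthonormality verified in~\Cref{prop:ortho_set}.
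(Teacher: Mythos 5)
Your proposal is correct and matches the paper's approach: the paper's proof is a one-liner ("Follows immediately from Proposition~\ref{prop:ortho_set}"), and you are simply spelling out the standard orthonormal-expansion argument that the paper leaves implicit.
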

\begin{proof}
    Follows immediately from~\Cref{prop:ortho_set}.
\end{proof}

Our next goal is to show that the projection operator (roughly) commutes with derivative operators. We first compute the derivative of character functions.  

\begin{proposition}\label{prop:derivative-for-pure-functions}
On the space $\Omega^{U,m}$, for all matchings $S,M\in \calM_{U,\leq m}$ we have
\[
D_{S}[\chi_{M}]=\begin{cases}
\Psi(|U|,m,|S|)^{-1/2}\cdot\chi_{M\setminus S} &\text{if }S\subseteq M,\\
0 &\text{if }S\not\subseteq M.
\end{cases}
\]
\end{proposition}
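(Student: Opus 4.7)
The plan is to unpack both sides from their definitions and reduce the claim to a direct calculation of the average $\E_{z\in\{-1,1\}^S}\left[z^S \cdot \fraki(y,z)^M\right]$. After substituting the formula for $\chi_M$ from \Cref{def:characters} into \Cref{def:derivative}, the scalar $\Psi(|U|,m,|M|)^{-1/2}$ factors out, and the embedding from \Cref{def:embedding} determines each factor of $\fraki(y,z)^M = \prod_{\{u,v\}\in M}\fraki(y,z)_{uv}$ as follows: an edge in $M\cap S$ contributes $z_{uv}$; an edge in $M\setminus S$ whose endpoints lie in $U\setminus N(S)$ contributes $y_{uv}$; and any edge of $M$ which touches $N(S)$ but does not itself lie in $S$ contributes $0$.

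If $S\not\subseteq M$, I would pick an edge $e\in S\setminus M$. Then in the combined product $\prod_{e\in S}z_e \cdot \prod_{e'\in M}\fraki(y,z)_{e'}$, the variable $z_e$ appears exactly once (it comes from $z^S$, and it cannot appear from $\fraki(y,z)^M$ since $e\notin M$). Averaging over the sign of $z_e$ alone kills the expectation, yielding $D_S[\chi_M] = 0$.

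If instead $S\subseteq M$, then every edge $e'\in M\setminus S$ is vertex-disjoint from $N(S)$, because $M$ is itself a matching, so no factor is forced to zero. Using $z_e^2=1$ for each $e\in S\subseteq M$, one collects
\[
z^S\cdot\fraki(y,z)^M = \prod_{e\in S}z_e^2 \cdot \prod_{e'\in M\setminus S} y_{e'} = y^{M\setminus S},
\]
which is independent of $z$, so $D_S[\chi_M](y) = \Psi(|U|,m,|M|)^{-1/2}\cdot y^{M\setminus S}$.

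Finally, I need to reconcile this expression with $\Psi(|U|,m,|S|)^{-1/2}\cdot\chi_{M\setminus S}(y)$, where $\chi_{M\setminus S}$ is now a character on the restricted space $\Omega^{U\setminus N(S),\, m-|S|}$ and therefore carries the normalizer $\Psi(|U|-2|S|,\, m-|S|,\, |M|-|S|)^{-1/2}$. The required multiplicative identity
\[
\Psi(|U|,m,|M|) = \Psi(|U|,m,|S|)\cdot \Psi(|U|-2|S|,\, m-|S|,\, |M|-|S|)
\]
follows by iterating the defining recursion of $\Psi$ in \Cref{def:Psi} exactly $|S|$ times, using $\Psi(\cdot,\cdot,0)=1$. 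The entire argument is a definition chase and I do not foresee any real obstacle.
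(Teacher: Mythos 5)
Your proposal is correct and follows essentially the same route as the paper: split into the cases $S\subseteq M$ and $S\not\subseteq M$, compute $\E_{z}\left[z^{S}\cdot\chi_{M}(\fraki(y,z))\right]$ directly from the definitions, and reconcile the normalizers via the telescoping identity $\Psi(|U|,m,|M|)=\Psi(|U|,m,|S|)\cdot\Psi(|U|-2|S|,m-|S|,|M|-|S|)$ obtained by iterating \Cref{def:Psi}. The only cosmetic difference is that in the zero case the paper first splits on whether $M\cup S$ is a matching, while you isolate a single edge $e\in S\setminus M$ whose sign $z_{e}$ appears exactly once; both arguments are valid.
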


\begin{proof} We consider the following two cases respectively.

\paragraph{Case 1: $S\not\subseteq M$.} If $M\cup S$ is not a matching, then $\chi_{M}(\fraki(y,z))=0$ for all $y\in\Omega^{U,m}_{\setminus S}$ and $z\in\{-1,1\}^{S}$. If $M\cup S$ is a matching, then
$$\chi_{M}(\fraki(y,z))=\Psi(|U|,m,|M|)^{-1/2}\cdot y^{M\setminus S}\cdot z^{M\cap S}.$$
Using \Cref{def:derivative} and $S\neq M\cap S$, it follows that
\[
D_{S}[\chi_{M}](y)=\Psi(|U|,m,|M|)^{-1/2}\cdot \E_{z\in\{-1,1\}^{S}}\left[z^{S}\cdot y^{M\setminus S}\cdot z^{M\cap S}\right]=0.
\]

\paragraph{Case 2: $S\subseteq M$.} In this case we have
\begin{align*}
D_{S}[\chi_{M}](y)&=\Psi(|U|,m,|M|)^{-1/2}\cdot \E_{z\in\{-1,1\}^{S}}\left[z^{S}\cdot y^{M\setminus S}\cdot z^{M\cap S}\right]\\
&=
\Psi(|U|,m,|M|)^{-1/2}\cdot y^{M\setminus S}, \\
&=
\Psi(|U|,m,|M|)^{-1/2}\cdot \Psi\Big(|U\setminus N(S)|,m-|S|,|M\setminus S|\Big)^{1/2}\cdot\chi_{M\setminus S}(y)\\
&=
\Psi(|U|,m,|S|)^{-1/2}\cdot\chi_{M\setminus S}(y).
\end{align*}
In the above equation $\chi_{M}$ is the character on $\Omega^{U,m}$ while $\chi_{M\setminus S}$ is the character on $\Omega^{U,m}_{\setminus S}$.
\end{proof}

The next lemma shows that the operator $P_{\mathcal{C}}^{=d}$ 
commutes with derivative operators (up to taking into account the obvious change in degrees):
\begin{lemma}\label{lem:derivative-projection-commute}
For $f:\Omega^{U,m}\rightarrow\mathbb{R}$ and any matching $S\in\calM_{U,\leq d}$, we have $D_{S}P_{\calC}^{=d}[f]=P_{\calC}^{=d-|S|}D_{S}[f]$.
\end{lemma}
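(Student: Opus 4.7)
My plan is to expand both sides using the explicit projection formula from \Cref{prop:formula-of-projection} together with the derivative-of-character formula from \Cref{prop:derivative-for-pure-functions}, and then reduce the claim to a single Fourier-coefficient identity which will follow from the multiplicativity of $\Psi$ under the recursion in \Cref{def:Psi}.

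By linearity and \Cref{prop:derivative-for-pure-functions},
\[
D_{S}P_{\calC}^{=d}[f]=\sum_{M\in\calM_{U,d}}\langle f,\chi_{M}\rangle D_{S}[\chi_{M}]=\Psi(|U|,m,|S|)^{-1/2}\sum_{\substack{M\in\calM_{U,d}\\ S\subseteq M}}\langle f,\chi_{M}\rangle \chi_{M\setminus S},
\]
where on the right $\chi_{M\setminus S}$ is the character on $\Omega^{U,m}_{\setminus S}$. Reindexing via $T:=M\setminus S$, which ranges over $\calM_{U\setminus N(S),\,d-|S|}$, this becomes $\Psi(|U|,m,|S|)^{-1/2}\sum_{T}\langle f,\chi_{S\cup T}\rangle\chi_{T}$. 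On the other hand, applying \Cref{prop:formula-of-projection} on $\Omega^{U,m}_{\setminus S}$ gives $P_{\calC}^{=d-|S|}D_{S}[f]=\sum_{T}\langle D_{S}[f],\chi_{T}\rangle\chi_{T}$. Equating coefficients against the orthonormal set $\{\chi_{T}\}_{T\in\calM_{U\setminus N(S),\,d-|S|}}$, the lemma reduces to proving, for every such $T$, the identity
\[
\langle D_{S}[f],\chi_{T}\rangle_{\Omega^{U,m}_{\setminus S}}=\Psi(|U|,m,|S|)^{-1/2}\langle f,\chi_{S\cup T}\rangle_{\Omega^{U,m}}.
\]

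The core step is this Fourier identity. I would unfold $D_{S}$ using \Cref{def:derivative} and \Cref{def:characters} to rewrite the left-hand side, up to the normalizing prefactor $\Psi(|U\setminus N(S)|,m-|S|,|T|)^{-1/2}$, as an expectation $\E_{(y,z)}[z^{S}\, y^{T}\, f(\fraki(y,z))]$. Because $S$ and $T$ have disjoint vertex supports, $S\cup T$ is itself a matching and $z^{S}y^{T}=\Psi(|U|,m,|S|+|T|)^{1/2}\chi_{S\cup T}(\fraki(y,z))$. Moreover, the pushforward of the uniform measure on $\Omega^{U,m}_{\setminus S}\times\{-1,1\}^{S}$ under $\fraki$ is the uniform measure on $\{w\in\Omega^{U,m}:S\subseteq\supp(w)\}$, which differs from the uniform measure on $\Omega^{U,m}$ by the factor $\Psi(|U|,m,|S|)^{-1}$ on its support; since $\chi_{S\cup T}(w)f(w)$ vanishes whenever $S\not\subseteq\supp(w)$, this change of measure is exact. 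Putting the pieces together yields $\langle D_{S}[f],\chi_{T}\rangle = C \cdot \langle f,\chi_{S\cup T}\rangle$ with $C=\Psi(|U\setminus N(S)|,m-|S|,|T|)^{-1/2}\Psi(|U|,m,|S|+|T|)^{1/2}\Psi(|U|,m,|S|)^{-1}$.

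The final task is the purely combinatorial check $C=\Psi(|U|,m,|S|)^{-1/2}$, equivalent to the telescoping identity
\[
\Psi(|U|,m,|S|+|T|)=\Psi(|U|,m,|S|)\cdot \Psi(|U\setminus N(S)|,m-|S|,|T|),
\]
which follows by iterating the recursion in \Cref{def:Psi}: both sides expand to the same product of ratios $(m-j)/\binom{|U|-2j}{2}$ for $j=0,\dots,|S|+|T|-1$. I do not anticipate a genuine obstacle here; the main care point is bookkeeping, since characters and inner products live on two different spaces, and one must handle the change of measure under $\fraki$ correctly. The cleanness of that step relies on the fact that the integrand $\chi_{S\cup T}\cdot f$ is automatically supported where $S\subseteq\supp(w)$, which eliminates any boundary terms.
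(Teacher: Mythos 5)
Your proposal is correct and follows essentially the same route as the paper: both expand $D_S P_{\calC}^{=d}[f]$ via \Cref{prop:formula-of-projection} and \Cref{prop:derivative-for-pure-functions}, reduce to computing $\langle D_S[f],\chi_T\rangle$ by unfolding \Cref{def:derivative}, and use the change of measure under $\fraki$ (with density factor $\Psi(|U|,m,|S|)^{-1}$) plus the telescoping identity $\Psi(|U|,m,|S|+|T|)=\Psi(|U|,m,|S|)\cdot\Psi(|U\setminus N(S)|,m-|S|,|T|)$ to collect the normalization constants. The only cosmetic difference is that you isolate the telescoping $\Psi$-identity as a named step, whereas the paper folds it silently into the last equality of its chain.
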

\begin{proof}
It follows from \Cref{prop:formula-of-projection,prop:derivative-for-pure-functions} that
\begin{equation}\label{eq:derivative-of-level-d}
D_{S}P_{\calC}^{=d}[f]=\Psi(|U|,m,|S|)^{-1/2}\sum_{\substack{M\in \mathcal{M}_{d}\\
M\supseteq S}}\langle f,\chi_{M}\rangle \cdot \chi_{M\setminus S}.
\end{equation}
Note that in the above equation $\chi_{M}$ is the character on $\Omega^{U,m}$ while $\chi_{M\setminus S}$ is the character on $\Omega^{U,m}_{\setminus S}$.

For any partial matching $T$ over $U\setminus N(S)$ of size $d-|S|$, we have
\begin{align*}
\left\langle D_{S}[f],\chi_{T}\right\rangle&= \Psi\Big(|U\setminus N(S)|,m-|S|,|T|\Big)^{-1/2}\cdot 
\E_{y\in \Omega^{U,m}_{\setminus S}}\left[\E_{z\in \{-1,1\}^{S}}\left[z^{S}\cdot f(\fraki(y,z))\right]\cdot y^{T}\right]\\
&=\Psi\Big(|U\setminus N(S)|,m-|S|,|T|\Big)^{-1/2}\cdot\E_{y\in \Omega^{U,m}_{\setminus S},z\in 
\{-1,1\}^{S}}\left[f(\fraki(y,z))\cdot y^{T}z^{S}\right]\\
&=\Psi\Big(|U\setminus N(S)|,m-|S|,|T|\Big)^{-1/2}\cdot \Psi(|U|,m,|S|)^{-1}\cdot\E_{\xi\in\Omega^{U,m}}\left[f(\xi)\cdot \xi^{S\cup T}\right]\\
&=\Psi\Big(|U\setminus N(S)|,m-|S|,|T|\Big)^{-1/2}\cdot \Psi(|U|,m,|S\cup T|)^{1/2}\cdot \Psi(|U|,m,|S|)^{-1}\cdot \langle f,\chi_{S\cup T}\rangle\\
&=\Psi(|U|,m,|S|)^{-1/2}\cdot \langle f,\chi_{S\cup T}\rangle.
\end{align*}
In the third transition above, we use the facts that $\xi^{S\cup T}$ is nonzero only if $\xi$ lies in the image of the embedding $\fraki:\Omega^{U,m}_{\setminus S}\times \{-1,1\}^{S}\hookrightarrow\Omega^{U,m}$, and that this image has size $\Psi(|U|,m,|S|)\cdot\left|\Omega^{U,m}\right|$. 
Comparing the above with \eqref{eq:derivative-of-level-d} shows that $D_{S}P_{\calC}^{=d}[f]=P_{\calC}^{=d-|S|}D_{S}[f]$.
\end{proof}

\subsection{The Hypercontractive Inequality}\label{subsec:hypercontractive}

In this section we establish our global hypercontractive inequality, which is later used in \Cref{subsec:level-d} 
to obtain the desired level-$d$ inequality with respect to the operator $P_{\mathcal{C}}^{=d}$ from the previous section.

Our proof uses a similar result in \cite{KLM23}, except that it is in the context of product spaces. To relate our setting and the product setting we introduce some setup. For a function $f\in \calC^{U,m,d}$, it will be useful to think of $f$ as a formal polynomial in $\binom{n}{2}$ variables. 

\begin{definition}\label{def:formal-polynomial}
For $f\in \calC^{U,m,d}$, we define its associated polynomial in the polynomial ring $\mathbb{R}\left[Y_{uv}:\{u,v\}\in\binom{U}{2}\right]$ to be
\[
\widetilde{f}(Y):=\sum_{M\in\calM_{U,d}}a_{M}\prod_{\{u,v\}\in M}Y_{uv},
\]
where $a_{M}=\langle f,\chi_{M}\rangle\cdot \Psi(|U|,m,d)^{-1/2}$. For each subset $S\subseteq \binom{U}{2}$ we define the formal derivative of $\widetilde{f}$ with respect to $S$ by
\[
\widetilde{D_{S}}\widetilde{f}(Y):=\sum_{\substack{M\in\calM_{U,d}\\ M\supseteq S}}a_{M}\prod_{\{u,v\}\in M\setminus S}Y_{uv}.
\]
\end{definition}

The following proposition follows easily from \Cref{def:characters} and~\Cref{prop:derivative-for-pure-functions} and shows that the generalization into formal polynomials and formal derivatives are backward compatible.

\begin{proposition}\label{prop:formal-compatibility}
For $f\in\calC^{U,m,d}$, \Cref{def:formal-polynomial} satisfies the following properties.
\begin{enumerate}
\item For $y\in\Omega^{U,m}$, we have $\widetilde{f}(y)=f(y)$.
\item For $S\subseteq \binom{U}{2}$, $\widetilde{D_{S}}\widetilde{f}$ is nonzero only if $S\in \calM_{U,\leq d}$.
\item For $S\in \calM_{U,\leq d}$, we have $\widetilde{D_{S}}\widetilde{f}=\widetilde{D_{S}f}$ as polynomials in the ring $\mathbb{R}\left[Y_{uv}:\{u,v\}\in\binom{U}{2}\right]$.
\end{enumerate}
\end{proposition}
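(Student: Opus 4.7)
The plan is to establish each of the three items in sequence, with items 1 and 2 being immediate from the definitions, and item 3 requiring the bulk of the work through a careful computation coupled with a multiplicativity identity for $\Psi$.

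For item 1, I would just expand $f\in\calC^{U,m,d}$ in the orthonormal basis $\{\chi_M\}_{M\in\calM_{U,d}}$ via Proposition \ref{prop:formula-of-projection}, obtaining $f(y)=\sum_{M\in\calM_{U,d}}\langle f,\chi_M\rangle \Psi(|U|,m,d)^{-1/2}\cdot y^M$, which coincides with $\widetilde{f}(Y)$ evaluated at $Y=y$ by the very definition of $a_M$. For item 2, note that the formal derivative $\widetilde{D_S}\widetilde{f}$ is defined as a sum over matchings $M\in\calM_{U,d}$ that contain $S$; if $S$ is not itself a matching, or has size exceeding $d$, there are no such $M$ and the sum is empty.

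The real work is in item 3. I would start from the expansion $f=\sum_{M\in\calM_{U,d}}\langle f,\chi_M\rangle \chi_M$ and apply Proposition \ref{prop:derivative-for-pure-functions} term by term to obtain the explicit formula
\[
D_S f = \Psi(|U|,m,|S|)^{-1/2}\sum_{\substack{M\in\calM_{U,d}\\ M\supseteq S}}\langle f,\chi_M\rangle\cdot \chi_{M\setminus S},
\]
where the $\chi_{M\setminus S}$ live in $\calC^{U\setminus N(S),m-|S|,d-|S|}$. Using orthonormality of $\{\chi_{M'}\}_{M'\in\calM_{U\setminus N(S),d-|S|}}$, I would read off the coefficients $\langle D_S f,\chi_{M'}\rangle$ and unfold the definition of $\widetilde{D_S f}$. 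Reindexing the sum in $\widetilde{D_S}\widetilde{f}$ via the bijection $M\leftrightarrow M':=M\setminus S$ between $\{M\in\calM_{U,d}:M\supseteq S\}$ and $\calM_{U\setminus N(S),d-|S|}$, matching the two polynomials reduces to verifying the scalar identity
\[
\Psi(|U|,m,d) \;=\; \Psi(|U|,m,|S|)\cdot \Psi\bigl(|U|-2|S|,\, m-|S|,\, d-|S|\bigr).
\]

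The main (and only mildly nontrivial) obstacle is this multiplicativity of $\Psi$, but it follows easily by induction on $|S|$ from the recursion $\Psi(n,m,d)=\frac{m}{\binom{n}{2}}\Psi(n-2,m-1,d-1)$ given in Definition \ref{def:Psi}; indeed, iterating the recursion $|S|$ times peels off exactly the factor $\Psi(|U|,m,|S|)$. Alternatively, the identity has a clean probabilistic interpretation: the probability that a fixed matching of size $d$ sits inside a uniformly random matching of size $m$ equals the probability that the sub-matching $S$ is present, multiplied by the conditional probability that the remaining $d-|S|$ edges appear in the induced random matching on $|U|-2|S|$ vertices. With this identity in hand, the coefficients of the two polynomials agree term by term, proving item 3.
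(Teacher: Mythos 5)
Your proof is correct and follows exactly the route the paper intends (the paper states the proposition without proof, saying it ``follows easily'' from Definition~\ref{def:characters} and Proposition~\ref{prop:derivative-for-pure-functions}): items 1 and 2 are immediate, and item 3 reduces to the multiplicativity $\Psi(|U|,m,d)=\Psi(|U|,m,|S|)\cdot\Psi(|U|-2|S|,m-|S|,d-|S|)$, which indeed follows by iterating the recursion in Definition~\ref{def:Psi} and is the same identity the paper already uses in the last step of its proof of Proposition~\ref{prop:derivative-for-pure-functions}. No gaps.
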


Another observation useful for connecting the product space results with our setting is that although the uniform distribution on $\Omega^{U,m}$ is far from a product distribution, if it is projected onto a few coordinates, the projection is close to a product distribution. Concretely, this observation corresponds to the fact that the probability parameters $\Psi(n,m,d)$ defined in \Cref{def:Psi} grows approximately exponentially in $d$ when $d$ is small, as formalized below.

\begin{proposition}\label{prop:approximate-product}
Fix integers $n,m,d$ such that $n\geq 10 m$ and $m\geq 10(d+1)$. Let $p=\Psi(n,m,d)^{1/d}$.
\begin{enumerate}[label=(\arabic*)]
\item For $s\in\{0,1,\dots,d\}$ we have $p^{s}\leq \Psi(n,m,s)\leq (2p)^{s}$.
\item For $s\in\{d,d+1,\dots,m\}$ we have $\Psi(n,m,s)\leq p^{s}$.
\end{enumerate}
\end{proposition}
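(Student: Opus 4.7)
The plan is to unfold the recursion of~\Cref{def:Psi} into the explicit product
\[
\Psi(n,m,s) \;=\; \prod_{i=0}^{s-1} f_i, \qquad f_i := \frac{m-i}{\binom{n-2i}{2}},
\]
so that $p = (f_0 f_1 \cdots f_{d-1})^{1/d}$ is the geometric mean of the first $d$ factors (and the edge case $s=0$ is trivial since $\Psi(n,m,0)=1=p^{0}$). The first task is to show that $(f_i)_{0 \leq i \leq m-1}$ is nonincreasing. A direct manipulation of $f_{i+1}/f_i$ reduces this to the inequality $(n-2i)(n-2i-1)(m-i-1) \leq (n-2i-2)(n-2i-3)(m-i)$, which is implied by $n-2i \geq 4(m-i)$, and the latter follows for every $i \geq 0$ from the hypothesis $n \geq 10m$.

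Once the monotonicity of $(f_i)$ is in place, the key analytic fact I would invoke is the elementary observation that if $(a_i)$ is any nonincreasing positive sequence, then its running geometric mean $G_s := (a_0 \cdots a_{s-1})^{1/s}$ is itself nonincreasing in $s$. Indeed, $G_{s+1} \leq G_s$ is equivalent to $a_s \leq G_s$, which holds because $a_s \leq a_i$ for every $i < s$ and hence $a_s$ is at most the geometric mean of the earlier terms. Applying this to $(f_i)$, the function $s \mapsto \Psi(n,m,s)^{1/s}$ is nonincreasing on $\{1,\dots,m\}$, so $\Psi(n,m,s)^{1/s} \geq p$ for $s \leq d$ and $\Psi(n,m,s)^{1/s} \leq p$ for $s \geq d$. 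These are exactly the lower bound in part~(1) and the bound of part~(2), respectively.

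The only remaining piece is the upper bound $\Psi(n,m,s) \leq (2p)^s$ in part~(1), and since $\Psi(n,m,s) \leq f_0^s$ it suffices to prove $f_0 \leq 2p$. To do so I would control the spread of the first $d$ factors by observing that each ratio $f_j/f_{j-1}$ is at least $1 - c/m$ for a universal constant $c > 0$, which comes from the same elementary manipulation used in the monotonicity step. Telescoping gives $f_i \geq (1-c/m)^{i} f_0$ for $0 \leq i \leq d-1$, and using $d \leq m/10 - 1$ one checks that $\prod_{i=0}^{d-1}(1-c/m)^{i}$ is at least $2^{-d}$; hence $p^d \geq (f_0/2)^d$ and the claim follows. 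I do not anticipate any real obstacle here: the argument is careful bookkeeping, and the constants in the hypotheses $n \geq 10m$ and $m \geq 10(d+1)$ are comfortably sufficient to absorb a factor of $2$.
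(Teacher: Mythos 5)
Your proof is correct and follows essentially the same route as the paper: both unfold the recursion into the product $\Psi(n,m,s)=\prod_{i=0}^{s-1}f_i$ with $f_i=(m-i)/\binom{n-2i}{2}$, prove that $(f_i)$ is nonincreasing, and deduce the lower bound in~(1) together with all of~(2) from the monotonicity of the running geometric mean. The only local difference is in establishing $f_0\leq 2p$: the paper compares $f_0$ and $f_d$ directly via $f_0/f_d=\frac{\binom{n-2d}{2}}{\binom{n}{2}}\cdot\frac{m}{m-d}\leq\frac{m}{m-d}\leq 2$ and then uses $f_d\leq p$, whereas you telescope lower bounds on $f_{j+1}/f_j$ across all $i<d$; both are valid, the paper's is just a shade more direct.
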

\begin{proof}
For $i\in\{0,1,\dots,m-1\}$, we have
$$\frac{\Psi(n-2i-2,m-i-1,1)}{\Psi(n-2i,m-i,1)}=\frac{\binom{n-2i}{2}}{\binom{n-2i-2}{2}}\cdot\frac{m-i-1}{m-i}\leq 
\left(1+\frac{2}{n-2i-3}\right)^2
\cdot\frac{m-i-1}{m-i}<1.$$
So $\Psi(n-2i,m-i,1)$ is decreasing in $i$. Furthermore, 
\[
\frac{\Psi(n,m,1)}{\Psi(n-2d,m-d,1)}=\frac{\binom{n-2d}{2}}{\binom{n}{2}}\cdot\frac{m}{m-d}\leq \frac{m}{m-d}\leq 2.
\]
Therefore, for $s\in\{0,1,\dots,d\}$ we have
\begin{align*}
\Psi(n,m,s)&=\prod_{i=0}^{s-1}\Psi(n-2i,m-i,1)\leq 2^{s}\cdot\Psi(n-2d,m-d,1)^{s}\\
&\leq 2^{s}\cdot\left(\prod_{i=0}^{d-1}\Psi(n-2i,m-i,1)\right)^{s/d}=2^{s}\cdot\Psi(n,m,d)^{s/d}=(2p)^{s},
\end{align*}
as well as
$$\Psi(n,m,s)=\prod_{i=0}^{s-1}\Psi(n-2i,m-i,1)\geq \left(\prod_{i=0}^{d}\Psi(n-2i,m-i,1)\right)^{s/d}=\Psi(n,m,d)^{s/d}=p^{s}.$$
For $s\geq d$ we have
\begin{align*}
\Psi(n,m,s)&=\Psi(n,m,d)\cdot\prod_{i=d}^{s-1}\Psi(n-2i,m-i,1)\leq \Psi(n,m,d)\cdot\Psi(n-2d,m-d,1)^{s-d}\\
&\leq \Psi(n,m,d)\cdot\left(\prod_{i=0}^{d-1}\Psi(n-2i,m-i,1)\right)^{(s-d)/d}=\Psi(n,m,d)^{s/d}=p^{s}.\qedhere
\end{align*}
\end{proof}

We need the following result, which is a direct consequence of~\cite[Theorem 4.1]{KLM23} adapted to our setting.

\begin{lemma}\label{lem:KLM-theorem}
Let $p\in(0,1)$ and let $z=\{z_{e}\}_{e\in \binom{U}{2}}$ be a set of mutually independent random variables, each following the distribution 
$$\Pr[z_{e}=-1]=\Pr[z_{e}=1]=\frac{p}{2},\text{ and }\Pr[z_{e}=0]=1-p.$$
Suppose $q$ is a positive integer and $\rho\in (0,\frac{1}{3\sqrt{2q}})$. For any $f\in\calC^{U,m,d}$ we have
\[
\E_{z}\left[\widetilde{f}(z)^{2q}\right]\leq \rho^{-2dq}\sum_{S\in\calM_{U,\leq d}}\beta^{2q|S|}(2q)^{-q|S|}\cdot p^{|S|}\cdot\E_{z}\left[\widetilde{D_{S}}\widetilde{f}(z)^{2}\right]^{q},
\]
where
$\beta:=\rho\sqrt{2q}\left(1+\frac{4(q-1)}{\ln(\rho^{-1}(2q)^{-1/2})}\right)$.
\end{lemma}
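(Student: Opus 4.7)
The plan is to derive this lemma by specializing~\cite[Theorem 4.1]{KLM23} to the product space setting at hand. The key observation is that the distribution of $z = \{z_e\}_{e \in \binom{U}{2}}$ is a product measure $\nu^{\otimes \binom{U}{2}}$ on $\{-1,0,1\}^{\binom{U}{2}}$, where $\nu(\pm 1)=p/2$ and $\nu(0)=1-p$. Since $\widetilde{f}$ is a multilinear polynomial whose monomials are indexed by matchings of size $d$, it can be identified with a degree-$d$ function on this product space and analyzed via the global hypercontractive machinery of~\cite{KLM23}.

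First, I would set up an orthonormal basis for $L^2(\{-1,0,1\}, \nu)$: take $\chi_0 \equiv 1$, $\chi_1(z) = z/\sqrt{p}$ (mean zero, variance one), and a third basis function orthogonal to these two. Under this basis,
\[
\widetilde{f}(z) = \sum_{M \in \calM_{U,d}} a_M \, p^{|M|/2} \prod_{e \in M} \chi_1(z_e),
\]
so $\widetilde{f}$ is a homogeneous degree-$d$ multilinear polynomial whose non-zero Fourier support on the product space lies only on matchings of size $d$. This situates $\widetilde{f}$ precisely in the framework of~\cite{KLM23}, and their Theorem 4.1 then yields an inequality of the form
\[
\|\widetilde{f}\|_{2q}^{2q} \leq \rho^{-2dq} \sum_{S} \beta^{2q|S|}(2q)^{-q|S|} \, \|D_S \widetilde{f}\|_2^{2q},
\]
where $D_S \widetilde{f}$ denotes the character-basis derivative, the sum runs over subsets $S$ of cardinality at most $d$, and $\beta$ is as in the statement of the lemma. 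The hypothesis $\rho\in(0,\frac{1}{3\sqrt{2q}})$ is precisely what~\cite{KLM23} requires to apply their bound at level $2q$.

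The second step is to translate the character-basis derivatives into our formal derivatives $\widetilde{D_S}\widetilde{f}$. A short orthogonality computation shows
\[
\|D_S \widetilde{f}\|_2^2 = \sum_{M \supseteq S} a_M^2 \, p^{|M|} = p^{|S|} \cdot \sum_{M \supseteq S} a_M^2 \, p^{|M|-|S|} = p^{|S|} \cdot \E_z\!\left[\widetilde{D_S}\widetilde{f}(z)^2\right],
\]
where the last equality uses that the monomials $\prod_{e \in M \setminus S} z_e$ are mutually orthogonal in $L^2(\nu^{\otimes \binom{U}{2}})$ with squared norm $p^{|M|-|S|}$. Substituting this identity into the bound above produces the explicit factor $p^{|S|}$ appearing in the statement. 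Restricting the sum to $S \in \mathcal{M}_{U,\leq d}$ is automatic, because $D_S \widetilde{f}$ (equivalently $\widetilde{D_S}\widetilde{f}$) vanishes whenever $S$ is not a matching, as recorded in~\Cref{prop:formal-compatibility}. The main item requiring care is the precise bookkeeping of constants between the KLM23 statement and ours; once this is checked, the lemma follows without additional work, and I do not anticipate a conceptual obstacle beyond this translation.
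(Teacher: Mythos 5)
Your proposal is correct and takes the same route the paper does: the paper states the lemma as ``a direct consequence of [Theorem 4.1, KLM23] adapted to our setting'' and gives no proof, so you are simply filling in the translation that the paper leaves implicit. Your identification of $z$ with the biased product measure, the expansion of $\widetilde{f}$ in the one-dimensional character $\chi_1(z)=z/\sqrt{p}$, and the bookkeeping identity $\|D_S\widetilde{f}\|_2^2 = p^{|S|}\,\E_z[\widetilde{D_S}\widetilde{f}(z)^2]$ are all consistent with how the lemma is used downstream (compare with the manipulations in the proof of the derivative-based hypercontractivity lemma), and the restriction of the sum to matchings is correctly justified via the vanishing of $\widetilde{D_S}\widetilde{f}$ for non-matchings.
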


Utilizing \Cref{lem:KLM-theorem}, we obtain the following analogous hypercontractive inequality for our 
(non-product space) setting. In words, it says that if $f$ is discrete derivative global, then the $q$-norms of $f$ (for possibly large $q$) are bounded.
\begin{lemma}[Derivative-based hypercontractivity]\label{lem:derivative-based-inequality}
Suppose $|U|\geq 10m$ and $m\geq 10(d+1)$. Fix $r>0$ and integer $q\geq 1$. For $f\in \calC^{U,m,d}$ we have
\[
\left\|f\right\|_{2q}^{2q}\leq 2^{d}\rho^{-2dq}\left\|f\right\|_{2}^{2}\cdot\max_{S\in\calM_{U,\leq d}}\left(r^{-|S|}\left\|D_{S}f\right\|_{2}\right)^{2q-2},
\]
where
\begin{equation}\label{eq:rho-formula}
\rho:=\frac{1}{4\sqrt{2}}\min\left\{ q^{-1/2},q^{-1}r^{-\frac{q-1}{q}}\right\}.
\end{equation}
\end{lemma}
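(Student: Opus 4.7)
The strategy is to reduce Lemma \ref{lem:derivative-based-inequality} to the product-space hypercontractive inequality of Lemma \ref{lem:KLM-theorem} by a careful choice of parameters and a transfer of norms. I would take $p := \Psi(|U|,m,d)^{1/d}$ and let $z$ denote a sample from the corresponding product distribution on $\{-1,0,1\}^{\binom{U}{2}}$, and use Proposition \ref{prop:approximate-product} to verify that the $\rho$ defined by \eqref{eq:rho-formula} satisfies $\rho \in (0, 1/(3\sqrt{2q}))$, so Lemma \ref{lem:KLM-theorem} applies to $\tilde{f}$.

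The key moment identities follow by direct expansion using Proposition \ref{prop:ortho_set} and $p^d = \Psi(|U|,m,d)$: for every $g \in \calC^{U,m,d}$ one has $\|g\|_2^2 = \E_z[\tilde{g}(z)^2]$. Applying this to $D_S f \in \calC^{U\setminus N(S), m-|S|, d-|S|}$ and combining with the factorization $\Psi(|U|,m,d) = \Psi(|U|,m,|S|) \cdot \Psi(|U\setminus N(S)|, m-|S|, d-|S|)$ and Proposition \ref{prop:approximate-product}(1), I would derive
\[ E_S := \E_z[\widetilde{D_S f}(z)^2] \leq 2^{|S|}\|D_S f\|_2^2 \quad\text{and}\quad \|D_S f\|_2^2 \leq p^{-|S|}\|f\|_2^2. \]

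Applying Lemma \ref{lem:KLM-theorem} and peeling off one factor of $E_S$ from each summand of the resulting sum $\sum_S (\beta^2/(2q))^{q|S|} p^{|S|} E_S^q$, I would use the first displayed bound to convert the $q-1$ remaining factors into $(2^{|S|}\|D_S f\|_2^2)^{q-1}$ and the second to convert the isolated $E_S \leq 2^{|S|} p^{-|S|}\|f\|_2^2$. Each summand thereby rewrites in the form $\|f\|_2^2 \cdot (r^{-|S|}\|D_S f\|_2)^{2q-2} \cdot 2^{q|S|}(\beta^2/(2q))^{q|S|} r^{(2q-2)|S|}$. The specific form of $\rho$ in \eqref{eq:rho-formula} is calibrated so that $\beta^2/(2q) \leq \tfrac{1}{2}r^{-2(q-1)/q}$; this tames the per-summand combinatorial factor, and the generating-function identity $\sum_{S\subseteq M} x^{|S|} = (1+x)^{|M|} \leq 2^d$ (with $x = u^q 2^{q-1} r^{2q-2} \leq 1$) controls the sum over $S \in \calM_{U,\leq d}$, yielding $\E_z[\tilde{f}(z)^{2q}] \leq 2^d \rho^{-2dq}\|f\|_2^2 \max_S (r^{-|S|}\|D_S f\|_2)^{2q-2}$.

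The main obstacle is transferring this bound from $\E_z[\tilde{f}(z)^{2q}]$ to $\|f\|_{2q}^{2q} = \E_y[\tilde{f}(y)^{2q}]$. Because the expansion of $\tilde{f}^{2q}$ carries coefficients of potentially mixed sign, a naive term-by-term comparison using Proposition \ref{prop:approximate-product}(2) (which would give $\Psi(|U|,m,k) \leq p^k$ for $k\geq d$) fails. I would handle this by viewing $y$ as $z$ conditioned on the event $\mathcal{E}$ that $\supp(z)$ forms a matching of size exactly $m$, so that nonnegativity of $\tilde{f}^{2q}$ yields $\|f\|_{2q}^{2q} \leq \E_z[\tilde{f}(z)^{2q}]/\Pr[\mathcal{E}]$; bounding $1/\Pr[\mathcal{E}]$ carefully via the hypotheses $|U| \geq 10m$ and $m \geq 10(d+1)$ and absorbing the resulting factor into the stated $2^d$ prefactor by exploiting the degree-$d$ structure of $\tilde{f}$ is the most delicate bookkeeping step in the proof.
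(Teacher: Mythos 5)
Your setup and the first half of the argument track the paper's proof closely and are correct: the choice $p=\Psi(|U|,m,d)^{1/d}$, the verification that $\rho<\tfrac{1}{3\sqrt{2q}}$, the moment identities $E_S\leq 2^{|S|}\|D_Sf\|_2^2$ and $\|D_Sf\|_2^2\leq p^{-|S|}\|f\|_2^2$ (both follow from \Cref{prop:approximate-product}(1) and the factorization of $\Psi$), and the calibration $\beta^{2q}q^{-q}r^{2q-2}\leq 1$ all check out. However, there are two genuine gaps in the remaining steps, and each on its own breaks the claimed bound.

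First, the transfer from $\E_z[\tilde f(z)^{2q}]$ back to $\|f\|_{2q}^{2q}$ via conditioning on the event $\mathcal{E}$ that $\supp(z)$ is a matching of size exactly $m$ does not work. While it is true that $z\mid\mathcal{E}$ is uniform on $\Omega^{U,m}$ and hence $\|f\|_{2q}^{2q}\leq \E_z[\tilde f(z)^{2q}]/\Pr[\mathcal{E}]$, the quantity $1/\Pr[\mathcal{E}]$ is of order $e^{\Theta(m^2/|U|)}\cdot\mathrm{poly}(m)$ (a birthday-paradox cost for the $m$ edges to be pairwise disjoint, times a local-limit cost for hitting size exactly $m$). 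Since the hypotheses only give $|U|\geq 10m$ and $m\geq 10(d+1)$ with no upper bound on $m$ in terms of $d$, this factor is $e^{\Theta(m)}$ in the worst case and cannot be absorbed into $2^d$ or any $d$-dependent prefactor; the "degree-$d$ structure of $\tilde f$" does not help because $\Pr[\mathcal{E}]$ is a property of the measure, not of $f$. The paper avoids conditioning entirely: it passes to $g$ with coefficients $|a_M|$ (which leaves every $\|D_Sf\|_2$ unchanged), observes that every moment $\E_y[\prod_i y^{M_i}]$ is either $0$ or $\Psi(|U|,m,|\bigcup_iM_i|)$, hence nonnegative and at most $p^{|\bigcup_iM_i|}=\E_z[\prod_i z^{M_i}]$ by \Cref{prop:approximate-product}(2), and concludes $\|f\|_{2q}^{2q}\leq\E_z[\tilde g(z)^{2q}]$ term by term. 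This direction of comparison is free and loses nothing.

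Second, the concluding summation over $S$ is miscounted. Once you convert the peeled-off factor $E_S$ into $2^{|S|}p^{-|S|}\|f\|_2^2$, each summand carries $\|f\|_2^2\cdot(r^{-|S|}\|D_Sf\|_2)^{2q-2}$ times a combinatorial factor that is at best $O(1)^{|S|}$, and you must then sum this factor over \emph{all} of $\calM_{U,\leq d}$, which has $\Theta(|U|^{2d})$ elements — not over the $2^d$ subsets of a single matching $M$. The identity $\sum_{S\subseteq M}x^{|S|}=(1+x)^{|M|}\leq 2^d$ is simply not the sum you are taking, and the resulting bound degrades from $2^d$ to roughly $|U|^{2d}$. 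The paper's way around this is to \emph{not} decouple the peeled-off factor from the coefficient structure: it keeps $\sum_{S}p^{|S|}\|D_Sf\|_2^2$ intact, writes $p^{|S|}\|D_Sf\|_2^2\leq\Psi(|U|,m,|S|)\|D_Sf\|_2^2=\Psi(|U|,m,d)\sum_{M\supseteq S}a_M^2$, and swaps the order of summation so that each $M\in\calM_{U,d}$ is counted once for each of its $2^d$ sub-matchings $S\subseteq M$, giving exactly $2^d\|f\|_2^2$. You should restructure the last step this way: bound $E_S^q\leq E_S\cdot(2^{|S|}\|D_Sf\|_2^2)^{q-1}$ with $E_S\le 2^{|S|}\|D_Sf\|_2^2$ kept as is (not converted to $\|f\|_2^2$), pull out $\max_S(r^{-|S|}\|D_Sf\|_2)^{2q-2}$, and evaluate the remaining weighted sum over $S$ exactly.
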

\begin{proof}
Let $z=\{z_{e}\}_{e\in \binom{U}{2}}$ be the random variables in \Cref{lem:KLM-theorem} with the parameter $p$ defined by  $p=\Psi(|U|,m,d)^{1/d}$, and write $f(y)=\sum_{M\in\calM_{U,d}}a_{M}y^{M}$. We define $g\in \calC^{U,m,d}$ to be $g(y):=\sum_{M\in\calM_{U,d}}|a_{M}|y^{M}$. 
Expanding, we get 
\[\E_{y\in\Omega^{U,m}}\left[f(y)^{2q}\right]=
\sum_{M_{1},\dots,M_{2q}\in \calM_{U,d}}
\E_{y\in\Omega^{U,m}}\left[\prod_{i=1}^{2q}y^{M_{i}}\right]\prod_{i=1}^{2q}a_{M_i}.\]
For each tuple of matchings $M_{1},\dots,M_{2q}\in\calM_{U,d}$, the expectation $\E_{y\in\Omega^{U,m}}\left[\prod_{i=1}^{2q}y^{M_{i}}\right]$ evaluates to either 0 (if some edge appears an odd number of times in these matchings) or some positive value. So replacing all $a_{i}$'s on the right hand side with their absolute values does not decrease the total value of the sum. We therefore have
\begin{align}
\E_{y\in\Omega^{U,m}}\left[f(y)^{2q}\right]&\leq \sum_{M_{1},\dots,M_{2q}\in \calM_{U,d}}
\E_{y\in\Omega^{U,m}}\left[\prod_{i=1}^{2q}y^{M_{i}}\right]\prod_{i=1}^{2q}|a_{M_i}|
\nonumber\\
&=\sum_{(M_{1},\dots,M_{2q})\in \textsf{Even}}
\Psi\left(|U|,m,\left|\bigcup_{i=1}^{2q}M_{i}\right|\right)\cdot\prod_{i=1}^{2q}|a_{M_i}|&(\text{by \Cref{def:Psi}})\nonumber\\
&\leq \sum_{(M_{1},\dots,M_{2q})\in \textsf{Even}}
p^{\left|\bigcup_{i=1}^{2q}M_{i}\right|}\cdot \prod_{i=1}^{2q}|a_{M_i}|&(\text{by \Cref{prop:approximate-product}(2)})\notag\\
&= \E_{z}\left[\widetilde{g}(z)^{2q}\right],\label{eq:assume-nonnegative-coefficients}
\end{align}
where $\textsf{Even}$ denotes the collection of tuples $(M_{1},\dots,M_{2q})\in \calM_{U,d}$ such that every edge appears an even number of times in these matchings.

Using \Cref{prop:formal-compatibility}, for $S\in\calM_{U,\leq d}$ we also have
\begin{align}
\E_{y\in\Omega^{U,m}_{\setminus S}}\left[D_{S}[f](y)^{2}\right]
=\E_{y\in\Omega^{U,m}_{\setminus S}}\left[\widetilde{D_{S}}\widetilde{f}(y)^{2}\right]
&=\E_{y\in\Omega^{U,m}_{\setminus S}}\left[\left(\sum_{M\in\calM_{U,d},\, M\supseteq S}a_{M}y^{M\setminus S}\right)^{2}\right] \nonumber\\
&=\sum_{M\in\calM_{U,d},\, M\supseteq S}a_{M}^{2}\cdot \Psi\Big(|U\setminus N(S)|,m-|S|,d-|S|\Big)\nonumber\\
&=\Psi(|U|,m,d)\cdot\Psi(|U|,m,|S|)^{-1}\cdot\sum_{M\in\calM_{U,d},\, M\supseteq S}a_{M}^{2}.\nonumber\\
&\geq p^{d}(2p)^{-|S|}\sum_{M\in\calM_{U,d},\, M\supseteq S}a_{M}^{2}\notag\\
&=2^{-|S|}\E_{z}\left[\widetilde{D_{S}}\widetilde{g}(z)^{2}\right],\label{eq:substitute-2} 
\end{align}
where the fifth transition uses~\Cref{prop:approximate-product}(1). Plugging \eqref{eq:assume-nonnegative-coefficients} and \eqref{eq:substitute-2} into \Cref{lem:KLM-theorem} yields
\begin{equation}\label{eq:derivative-based-untidied}
\left\|f\right\|_{2q}^{2q}\leq \rho^{-2dq}\sum_{S\in\calM_{U,\leq d}}\beta^{2q|S|}(2q)^{-q|S|}\cdot p^{|S|}\cdot2^{q|S|}\left\|D_{S}f\right\|_{2}^{2q}
= \rho^{-2dq}\sum_{S\in\calM_{U,\leq d}}\beta^{2q|S|}q^{-q|S|}\cdot p^{|S|}\cdot\left\|D_{S}f\right\|_{2}^{2q},
\end{equation}
where $\beta:=\rho\sqrt{2q}\left(1+\frac{4(q-1)}{\ln(\rho^{-1}(2q)^{-1/2})}\right)$ and $\rho$ is given by \eqref{eq:rho-formula}. 
Using $\rho^{-1}(2q)^{-1/2}\geq 4$ we get $\beta\leq \rho\sqrt{2q}\cdot 4q$. Now the other upper bound $\rho\leq \frac{1}{4\sqrt{2}}q^{-1}r^{-\frac{q-1}{q}}$ yields $\beta^{2q}q^{-q}r^{2q-2}\leq 1$. Plugging back into \eqref{eq:derivative-based-untidied}, we get
\begin{align}
\left\|f\right\|_{2q}^{2q}&\leq \rho^{-2dq}\sum_{S\in\calM_{U,\leq d}} p^{|S|}\cdot\left\|D_{S}f\right\|_{2}^{2}\cdot\left(r^{-|S|}\left\|D_{S}f\right\|_{2}\right)^{2q-2}\nonumber\\
&\leq \rho^{-2dq}\left(\sum_{S\in\calM_{U,\leq d}} p^{|S|}\cdot\left\|D_{S}f\right\|_{2}^{2}\right)\cdot\max_{S\in\calM_{U,\leq d}}\left(r^{-|S|}\left\|D_{S}f\right\|_{2}\right)^{2q-2}.\label{eq:derivative-based-tidied}
\end{align}
Now using \Cref{prop:approximate-product}(1) again, we have
\begin{align*}
\sum_{S\in\calM_{U,\leq d}} p^{|S|}\cdot\left\|D_{S}f\right\|_{2}^{2}&\leq \sum_{S\in\calM_{U,\leq d}} \Psi(|U|,m,|S|)\cdot\left\|D_{S}f\right\|_{2}^{2}\\
&= \sum_{S\in\calM_{U,\leq d}}\Psi(|U|,m,|S|)\cdot\E_{y\in\Omega^{U,m}_{\setminus S}}\left[\left(\sum_{M\in\calM_{U,d},\,M\supseteq S}a_{M}y^{M\setminus S}\right)^{2}\right]\\
&=\sum_{S\in\calM_{U,\leq d}}\Psi(|U|,m,|S|)\Psi\Big(|U\setminus N(S)|,m-|S|,d-|S|\Big)
\sum_{\substack{M\in\calM_{U,d}\\ M\supseteq S}}a_{M}^{2}\\
&=\sum_{S\in\calM_{U,\leq d}}\Psi(|U|,m,d)\sum_{\substack{M\in\calM_{U,d}\\ M\supseteq S}}a_{M}^{2}=2^{d}\sum_{M\in\calM_{U,d}}\Psi(|U|,m,d)\cdot a_{M}^{2}\\
&=2^{d}\E_{y\in\Omega^{U,m}}\left[\left(\sum_{M\in\calM_{U,d}}a_{M}y^{M}\right)^{2}\right]=2^{d}\left\|f\right\|_{2}^{2}.
\end{align*}
Plugging the above into \eqref{eq:derivative-based-tidied} yields the conclusion.
\end{proof}

\subsection{The Level-$d$ Inequality}\label{subsec:level-d}
We next run an induction argument along the lines of~\cite[Theorem 5.2]{KLM23} to prove the projected level-$d$ inequality, formally stated below. To get some sense of the 
statement and the argument, the reader should keep in mind the setting that 
$f$ is $\{0,1\}$-valued, and $\lambda_1 = \|f\|_1$, $\lambda_2 = \|f\|_2$, so that $\lambda_2^2 = \lambda_1$, and in particular $\lambda_1$ is much smaller than $\lambda_2$. Indeed, level-$d$ inequalities are typically stated and used for Boolean functions. However, the argument presented here proceeds by induction on the degree parameter $d$, and we appeal for the inductive hypothesis on the discrete derivatives of $f$. Note that if $f$ is Boolean, then $D_S[f]$ need not be Boolean. However, it is easy to note that the values of $D_S[f]$ are integer multiples of $2^{-|S|}$, so its values are not completely arbitrary, and in particular there is still a large gap between the $1$-norm and $2$-norm of $D_S[f]$. Thus,
the parameter that replaces Booleanity in the below statement is essentially that the ratio $\lambda_2/\lambda_1$ is large.
\begin{theorem}[Projected level-$d$ inequality]\label{thm:level-d-inequality}
Suppose $|U|\geq 10m$ and $m\geq 10(d+1)$. Suppose $f:\Omega^{U,m}\rightarrow\mathbb{R}$ is both $(r,\lambda_{1},d)$-$L^{1}$-global and $(r,\lambda_{2},d)$-$L^{2}$-global, where $d\leq \log(\lambda_{2}/\lambda_{1})$ and $r\geq 1$. Then
\[\left\|P_{\calC}^{=d}f\right\|_{2}^{2}\leq \lambda_{1}^{2}\left(\frac{10^{5}r^{2}\log(\lambda_{2}/\lambda_{1})}{d}\right)^{d}.\]
\end{theorem}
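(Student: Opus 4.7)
The plan is to proceed by induction on $d$. For the base case $d=0$, observe that $P_{\calC}^{=0}f=\langle f,\chi_\emptyset\rangle\cdot\chi_\emptyset=\E[f]$, so $\|P_{\calC}^{=0}f\|_{2}^{2}=\E[f]^{2}\leq\|f\|_{1}^{2}\leq\lambda_{1}^{2}$, which matches the claimed bound under the convention that the empty product equals $1$.

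For the inductive step, assuming the result at all smaller degrees, I would set $g:=P_{\calC}^{=d}f\in\calC^{U,m,d}$. Exploiting self-adjointness of $P_{\calC}^{=d}$ together with Hölder's inequality and log-convexity of $L^p$-norms, for an integer parameter $q\geq 1$ to be chosen later,
\[
\|g\|_{2}^{2}=\langle g,f\rangle\leq\|g\|_{2q}\cdot\|f\|_{2q/(2q-1)}\leq\|g\|_{2q}\cdot\lambda_{1}^{1-1/q}\lambda_{2}^{1/q},
\]
where the last step interpolates $\|f\|_{2q/(2q-1)}$ between $\|f\|_{1}\leq\lambda_{1}$ and $\|f\|_{2}\leq\lambda_{2}$. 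To control $\|g\|_{2q}$ I would apply the derivative-based hypercontractive inequality (\Cref{lem:derivative-based-inequality}) to $g\in\calC^{U,m,d}$:
\[
\|g\|_{2q}^{2q}\leq 2^{d}\rho^{-2dq}\|g\|_{2}^{2}\cdot\max_{S\in\calM_{U,\leq d}}\bigl(r^{-|S|}\|D_{S}g\|_{2}\bigr)^{2q-2},
\]
with $\rho$ from~\eqref{eq:rho-formula}.

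The crucial step is estimating the maximum. Using the commutation identity $D_{S}g=P_{\calC}^{=d-|S|}D_{S}f$ (\Cref{lem:derivative-projection-commute}) together with the fact that $D_{S}f$ inherits globalness with parameters $(r,r^{|S|}\lambda_{1},d-|S|)$ and $(r,r^{|S|}\lambda_{2},d-|S|)$ (\Cref{cor:globalness-of-derivative}), the inductive hypothesis yields, for every $|S|\geq 1$,
\[
r^{-|S|}\|D_{S}g\|_{2}\leq\lambda_{1}\left(\frac{10^{5}r^{2}\log(\lambda_{2}/\lambda_{1})}{d-|S|}\right)^{(d-|S|)/2}.
\]
Since $t\mapsto(10^{5}r^{2}\log(\lambda_{2}/\lambda_{1})/t)^{t/2}$ is increasing for $t\in(0,d]$ under the hypothesis $d\leq\log(\lambda_{2}/\lambda_{1})$ (so that the argument exceeds $e$), these $|S|\geq 1$ bounds are all strictly dominated by the target value $T:=\lambda_{1}(10^{5}r^{2}\log(\lambda_{2}/\lambda_{1})/d)^{d/2}$ that we wish to prove for $\|g\|_{2}$. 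So either $\|g\|_{2}\leq T$ and we are done, or the maximum is attained at $S=\emptyset$ and equals $\|g\|_{2}$ itself, in which case the hypercontractive bound collapses to
\[
\|g\|_{2}\leq 2^{d/(2q)}\rho^{-d}\lambda_{1}^{1-1/q}\lambda_{2}^{1/q}.
\]

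Finally, I would choose $q$ on the order of $\log(\lambda_{2}/\lambda_{1})/d$, which is an admissible integer $\geq 1$ thanks to the hypothesis $d\leq\log(\lambda_{2}/\lambda_{1})$. This balances the hypercontractive penalty $\rho^{-d}$, which is on the order of $(qr)^{d}$ via~\eqref{eq:rho-formula}, against the interpolation penalty $(\lambda_{2}/\lambda_{1})^{1/q}$, producing a bound on $\|g\|_{2}$ of the desired form $\lambda_{1}(10^{5}r^{2}\log(\lambda_{2}/\lambda_{1})/d)^{d/2}$. The main technical obstacle is the delicate bookkeeping of universal constants: the factors $2^{d/(2q)}$, $\rho^{-d}$, $r^{(q-1)/q}$, and $(\lambda_{2}/\lambda_{1})^{1/q}$ must be combined so that their squared product lies below $(10^{5}r^{2}\log(\lambda_{2}/\lambda_{1})/d)^{d}$, which is what ultimately fixes the constant at $10^{5}$; the entire argument is modeled on that of~\cite[Theorem 5.2]{KLM23}.
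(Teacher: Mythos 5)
Your framework --- induction on $d$, self-adjointness of $P_{\calC}^{=d}$ together with H\"older interpolation of $\|f\|_{2q/(2q-1)}$, the commutation identity $D_{S}P_{\calC}^{=d}=P_{\calC}^{=d-|S|}D_{S}$, and inherited globalness of $D_{S}f$ --- is exactly the paper's. But there is a genuine gap in how you invoke the derivative-based hypercontractive inequality (\Cref{lem:derivative-based-inequality}). The parameter ``$r$'' in that lemma is a \emph{free} parameter, and you set it equal to the theorem's globalness parameter $r\geq 1$. With that choice, since $r\geq 1$ forces the second term in~\eqref{eq:rho-formula} to be the minimum, $\rho=\tfrac{1}{4\sqrt2}q^{-1}r^{-(q-1)/q}$ and hence $\rho^{-d}\asymp(qr)^{d}$, exactly as you write. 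Taking $q\asymp\log(\lambda_{2}/\lambda_{1})/d$ then yields a final bound of order $\lambda_{1}\bigl(C\,r\log(\lambda_{2}/\lambda_{1})/d\bigr)^{d}$, whereas the target is $\lambda_{1}\bigl(10^{5}r^{2}\log(\lambda_{2}/\lambda_{1})/d\bigr)^{d/2}$. Squaring both, $\bigl(Cr\log/d\bigr)^{2d}$ and $\bigl(10^{5}r^{2}\log/d\bigr)^{d}$ differ by a factor $\bigl(C^{2}\log(\lambda_{2}/\lambda_{1})/(10^{5}d)\bigr)^{d}$, which is unbounded whenever $\log(\lambda_{2}/\lambda_{1})\gg d$. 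No amount of constant-chasing closes this: the exponent on $\log(\lambda_{2}/\lambda_{1})/d$ is off by a factor of two.

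The fix, which is what the paper does, is to apply \Cref{lem:derivative-based-inequality} with the free parameter set to $r':=\sqrt{d/\log(\lambda_{2}/\lambda_{1})}\leq 1$ rather than to $r$. Your own inductive estimate already shows $\|D_{S}P_{\calC}^{=d}f\|_{2}\leq (r')^{|S|}\lambda'$ with $\lambda'=\lambda_{1}\bigl(10^{5}r^{2}\log(\lambda_{2}/\lambda_{1})/d\bigr)^{d/2}$ (this is the paper's inequality just after the display with $r'$ and $\lambda'$), i.e.\ the derivatives of the projection decay geometrically at rate $r'\leq 1$, not merely at rate $r\geq 1$. Feeding $r'$ into~\eqref{eq:rho-formula}, both terms of the $\min$ become comparable and of order $\sqrt{d/\log(\lambda_{2}/\lambda_{1})}$, so $\rho^{-d}\asymp\bigl(\log(\lambda_{2}/\lambda_{1})/d\bigr)^{d/2}$ --- the correct power. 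In short, you correctly observe that the $|S|\geq 1$ terms in the max are dominated by the target (this \emph{is} the $(r',\lambda')$-globalness of $P_{\calC}^{=d}f$), but you then discard that gain by plugging the original $r\geq 1$ back into the hypercontractive penalty.
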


\begin{proof}

The conclusion in the case $d=0$ is simply $\E_{y}[f(y)]^{2}\leq \lambda_{1}^{2}$, which holds by the $L^1$-globalness assumption. We proceed by an induction on $d$. 
Towards this end, fix $d\geq 1$ and assume that the statement holds for all $d'<d$.

Fix $S\neq \emptyset$, so that by \Cref{cor:globalness-of-derivative} we know that $D_{S}f$ is both $(r,r^{|S|}\lambda_{1},d-|S|)$-$L^{1}$-global and $(r,r^{|S|}\lambda_{2},d-|S|)$-$L^{2}$-global. 
Our first goal 
will be to show that $P_{\calC}^{=d}[f]$ has discrete derivatives with small norms, and towards this end we use the induction hypothesis. 
Since $|U|-2|S|\geq 10(m-|S|)$ and $m-|S|\geq 10(d-|S|+1)$, we can apply the induction hypothesis on $D_{S}[f]:\Omega^{U,m}_{\setminus S}\rightarrow \mathbb{R}$. Combining with \Cref{lem:derivative-projection-commute}, we get
\begin{align}
\left\|D_{S}P_{\calC}^{=d}[f]\right\|_{2}^{2}&= \left\|P_{\calC}^{=d-|S|}D_{S}[f]\right\|_{2}^{2}\\
&\leq r^{2|S|}\lambda_{1}^{2}\left(\frac{10^{5}r^{2}\log(\lambda_{2}/\lambda_{1})}{d-|S|}\right)^{d-|S|}\nonumber\\
&= 10^{5(d-|S|)}\lambda_{1}^{2}r^{2d}\log^{d-|S|}(\lambda_{2}/\lambda_{1})d^{-(d-|S|)}\left(1+\frac{|S|}{d-|S|}\right)^{d-|S|}\nonumber\\
&\leq 10^{5(d-|S|)}\lambda_{1}^{2}r^{2d}\log^{d-|S|}(\lambda_{2}/\lambda_{1})d^{-(d-|S|)}\cdot 10^{5|S|}\nonumber\\
&= \lambda_{1}^{2}\left(\frac{10^{5}r^{2}\log(\lambda_{2}/\lambda_{1})}{d}\right)^{d}\left(\frac{\sqrt{d}}{\log^{1/2}(\lambda_{2}/\lambda_{1})}\right)^{2|S|}\notag\\
&=(r')^{2|S|}(\lambda')^{2},\label{eq:globalness-of-level-d}
 \end{align}
 where we let 
\[\lambda'=\lambda_{1}\left(\frac{10^{5}r^{2}\log(\lambda_{2}/\lambda_{1})}{d}\right)^{d/2}\quad\text{ and }\quad
r'=\frac{\sqrt{d}}{\log^{1/2}(\lambda_{2}/\lambda_{1})}.
\]
We intend to apply~\Cref{lem:derivative-based-inequality}, and for that we pick
\[q=\left\lfloor\frac{4\log (\lambda_{2}/\lambda_{1})}{d}\right\rfloor\quad\text{ and }\quad\rho=  \frac{1}{4\sqrt{2}}\min\left\{ q^{-1/2},q^{-1}(r')^{-\frac{q-1}{q}}\right\}.\]
This choice of parameters ensure that $\rho^{-2}\leq 10^{3}q$, and thus
\begin{align}
2^{d}\rho^{-2dq}\lambda_{1}^{2q}\left(\lambda_{2}/\lambda_{1}\right)^{2} &\leq \lambda_{1}^{2q}\left(2\rho^{-2}(\lambda_{2}/\lambda_{1})^{2/(dq)}\right)^{dq}\leq \lambda_{1}^{2q}(10^{4}q)^{dq}\nonumber\\
&\leq \lambda_{1}^{2q}\left(\frac{10^{5}r^{2}\log(\lambda_{2}/\lambda_{1})} {d}\right)^{dq}=(\lambda')^{2q}.\label{eq:parameter-mess-to-lambda}
\end{align}
Since $P_{\calC}^{=d}$ is an orthogonal projection, we have
\begin{align*}
\left\|P_{\calC}^{=d}f\right\|_{2}^{4q}
=\left\langle f,P_{\calC}^{=d}f\right\rangle^{2q}
&\leq \left\|P_{\calC}^{=d}f\right\|_{2q}^{2q}\cdot \|f\|_{2q/(2q-1)}^{2q}\\
&\leq \left\|P_{\calC}^{=d}f\right\|_{2q}^{2q}\cdot \|f\|_{1}^{2q-2}\cdot \|f\|_{2}^{2}\\
&\leq 2^{d}\rho^{-2dq}\lambda_{1}^{2q-2}\lambda_{2}^{2}\left\|P_{\calC}^{=d}f\right\|_{2}^{2}\cdot \max_{S\in\calM_{U,\leq d}}\left((r')^{-|S|}\left\|D_{S}P_{\calC}^{=d}f\right\|_{2}\right)^{2q-2}\\
&\leq (\lambda')^{2q}\cdot\left\|P_{\calC}^{=d}f\right\|_{2}^{2}\max\left(\left\|P_{\calC}^{=d}f\right\|_{2}^{2q-2},(\lambda')^{2q-2}\right),
\end{align*}
where the second and third transitions are by H\"{o}lder's inequality, the fourth transition is by~\Cref{lem:derivative-based-inequality}, and the last transition is by 
\eqref{eq:parameter-mess-to-lambda} and \eqref{eq:globalness-of-level-d}. It follows that $\left\|P_{\calC}^{=d}f\right\|_{2}\leq \lambda'$, as desired.
\end{proof}

\subsection{Proof of \Cref{lem:global-decay}}\label{subsec:proof-of-decay}
We now prove \Cref{lem:global-decay}, restated below.
\globalimpliesdecay*
\begin{proof}
    By definition of the function $f$, we know that $\widehat{f}(\emptyset)=0$. For any non-empty subset $S\subseteq U$, we know from \Cref{def:conditional-probability} that
    \begin{align}
        \widehat{f}(S) &= \frac{1}{2^{|U|}}\sum_{x\in\mathbb{F}_{2}^{U}}f(x)\cdot(-1)^{\sum_{i\in S}x_{i}}\nonumber\\
        &=\frac{1}{|A|}
        \sum_{y\in A}\left(\frac{1}{|L(y)|}\sum_{x\in\F_{2}^{U}}\mathbbm{1}\{x\in L(y)\}\cdot (-1)^{\sum_{i\in S}x_{i}}\right)\nonumber\\
        &=\frac{1}{|A|}\sum_{y\in A}\left(\mathbbm{1}\left\{L(y)\subseteq\left\{x\in\F_{2}^{U}:\sum_{i\in S}x_{i}=0\right\}\right\}-\mathbbm{1}\left\{L(y)\subseteq\left\{x\in\F_{2}^{U}:\sum_{i\in S}x_{i}=1\right\}\right\}\right)\label{eq:difficult-to-express}.
    \end{align}
    The last transition holds because the function 
    $(-1)^{\sum_{i\in S}x_{i}}$ is constant in the 
    events appearing in the last expression, and else
    it has average $0$ inside $L(y)$.
    
    Let $\mathcal{M}(S)$ be the collection of perfect matchings of the vertices in $S$ (if $|S|$ is odd then $\mathcal{M}(S) = \emptyset$). For a given labeled matching $y\in\Omega$, observe that $L(y)\subseteq \{x\in \mathbb{F}_{2}^{U}:\sum_{i\in S}x_{i}=0\}$ if and only if there exists some $M\in\calM(S)$ such that $\prod_{\{u,v\}\in M}y_{uv}=1$. Similarly, $L(y)\subseteq \{x\in \mathbb{F}_{2}^{U}:\sum_{i\in S}x_{i}=1\}$ if and only if there exists some $M\in\calM(S)$ such that $\prod_{\{u,v\}\in M}y_{uv}=-1$. Furthermore, for a given $y\in \Omega$ there can be at most one such $M\in \calM(S)$ such that $\prod_{\{u,v\}\in M}y_{uv}\neq 0$. We thus have
    \begin{align}
    &\quad\mathbbm{1}\left\{L(y)\subseteq\left\{x\in\F_{2}^{U}:\sum_{i\in S}x_{i}=0\right\}\right\}-\mathbbm{1}\left\{L(y)\subseteq\left\{x\in\F_{2}^{U}:\sum_{i\in S}x_{i}=1\right\}\right\} \nonumber\\
    &=\sum_{M\in\mathcal{M}(S)}\prod_{\{u,v\}\in M} y_{uv}=\begin{cases}
    0 &\text{if }|S|\text{ is odd},\\
    \Psi(|U|,m,|S|/2)^{1/2}\cdot \sum_{M\in\calM(S)}\chi_{M}(y)&\text{if }|S|\text{ is even}.
    \end{cases}
    \label{eq:clean}
    \end{align}
    The second transition above is by the definition of the characters.
    
    Plugging into \eqref{eq:difficult-to-express}, we see that $\widehat{f}(S)=0$ for all $S\subseteq [n]$ of odd size, as required for $f$ to be $(w,0,2)$-decaying. In the following, we only focuse on sets of even size and Fourier weights on even levels. 
    
    Let $\varphi:\Omega^{U,m}\rightarrow\{0,1\}$ be the indicator function of $A$. When $|S|=2d$ for some $d\geq 1$, again by plugging \eqref{eq:clean} into \eqref{eq:difficult-to-express} we get 
    \[
    \widehat{f}(S)=\frac{1}{|A|}\Psi(|U|,m,d)^{1/2}\cdot \sum_{y\in A}\sum_{M\in\calM(S)}\chi_{M}(y)
    =\frac{\left|\Omega^{U,m}\right|}{|A|}\cdot\Psi(|U|,m,d)^{1/2}\cdot \sum_{M\in\calM(S)}\left<\varphi,\chi_{M}\right>.
    \]
    Note that since $|S|=2d$, the collection $\calM(S)$ has size $(2d-1)!!$, the number of perfect matchings in a complete graph of $2d$ vertices. We then apply Cauchy-Schwarz to the above equation and get
    \begin{align*}
    \widehat{f}(S)^{2}&\leq \frac{\left|\Omega^{U,m}\right|^{2}}{|A|^{2}}\cdot\Psi(|U|,m,d)\cdot |\calM(S)|\cdot\sum_{M\in\calM(S)}\langle \varphi,\chi_{M}\rangle^{2}\\
    &=\frac{1}{\|\varphi\|_{1}^{2}}\left(\prod_{i=0}^{d-1}\frac{m-i}{\binom{|U|-2i}{2}}\right)\cdot(2d-1)!!\cdot\sum_{M\in\calM(S)}\langle \varphi,\chi_{M}\rangle^{2}\\
    &=\frac{1}{\|\varphi\|_{1}^{2}}\binom{m}{d}\binom{|U|}{2d}^{-1}\cdot\sum_{M\in\calM(S)}\langle \varphi,\chi_{M}\rangle^{2}.
    \end{align*}
    For every matching $M\in\calM_{U,d}$, there exists exactly one set $S\subseteq [n]$ of size $2d$ such that $M\in\calM(S)$. Therefore we deduce that
    \begin{align}
    \left\|f^{=2d}\right\|_{2}^{2}&=\sum_{\substack{S\subseteq [n]\\ |S|=2d}}\widehat{f}(S)^{2}\leq \frac{1}{\|\varphi\|_{1}^{2}}\binom{m}{d}\binom{|U|}{2d}^{-1}\cdot\sum_{M\in\calM_{U,d}}\langle\varphi,\chi_{M}\rangle^{2}\nonumber\\
    &=\frac{1}{\|\varphi\|_{1}^{2}}\binom{m}{d}\binom{|U|}{2d}^{-1}\cdot\left\|P_{\calC}^{=d}\varphi\right\|_{2}^{2}.\label{eq:varphi-to-f}
    \end{align}
    Since $A$ is a global set in $\Omega^{U,m}$ of size $2^{-w}\cdot\left|\Omega^{U,m}\right|$, we know from \Cref{prop:connecting-two-globalness} that its indicator function $\varphi$ is both $(2,2^{-w},d)$-$L^{1}$-global and $(2,2^{-w/2},d)$-$L^{2}$-global. If $d\leq w/2$, we can apply \Cref{thm:level-d-inequality} to the right hand side of \eqref{eq:varphi-to-f} and get 
    $$\left\|f^{=2d}\right\|_{2}^{2}\leq \binom{m}{d}\binom{|U|}{2d}^{-1}\cdot\left(\frac{10^{5}\cdot 4\cdot (w/2)}{d}\right)^{d}\leq \left(\frac{3m}{d}\right)^{d}\left(\frac{2d}{|U|}\right)^{2d}\left(\frac{2\cdot10^{5}w}{d}\right)^{d}\leq \left(\frac{w/2}{2|U|}\right)^{d},$$
    since $m\leq 10^{-7}|U|$. If $d>w/2$, we note that $\left\|P_{\calC}^{=d}\varphi\right\|_{2}^{2}\leq \left\|\varphi\right\|_{2}^{2}$. Then from \eqref{eq:varphi-to-f} we have
    $$
    \left\|f^{=2d}\right\|_{2}^{2}\leq \frac{\left\|\varphi\right\|_{2}^{2}}{\left\|\varphi\right\|_{1}^{2}}\cdot \binom{m}{d}\binom{|U|}{2d}^{-1}\leq 
    2^{w}\left(\frac{3m}{d}\right)^{d}\left(\frac{2d}{|U|}\right)^{2d}\leq 2^{w}\cdot\left(\frac{d}{8|U|}\right)^{d},$$
    since $m\leq 10^{-7}|U|$. Combining the above two equations, we conclude for any $d\geq 1$ that $\left\|f^{=2d}\right\|_{2}^{2}\leq 2^{-d}F(|U|,d,w/2)$, and we thus conclude the proof.
    \end{proof}

\bibliographystyle{alpha}
\bibliography{ref.bib}
\appendix
\section{\Cref{thm:main} Implies~\Cref{thm:main_into}}\label{sec:apx_quick_pf}
The proof that~\Cref{thm:main} implies~\Cref{thm:main_into} 
follows a standard reduction from communication protocols to streaming algorithms. Such arguments appear in \cite{KKS15,KK19} in the context of one-pass streaming algorithms, and below we give a straightforward extension to the multi-pass setting for the sake of completeness. 
\begin{proof}[Proof of \Cref{thm:main_into}]
    Suppose we have a $s$-space streaming algorithm $\mathcal{A}$ that, given $k$ passes over a data stream of edges of a graph $G$, outputs a $\left( \frac{1}{2}+\eps\right)$-approximation to the value of the maximum cut of $G$ with probability at least $0.99$. We show that we can construct a communication protocol $\Pi$ that solves $\textsf{DIHP}(n,\alpha ,K)$, where $K = 512/(\alpha\eps^2)$, with advantage $\adv(\Pi)\geq 0.1$ using $k\cdot s\cdot K$ bits of communication, and then applying~\Cref{thm:main} gives the lower bound on $s$.

    Let $\alpha = 10^{-7},K = 512/(\alpha \eps^2)$. Given an instance $(y^{(1)},y^{(2)},\dots,y^{(K)})$ of $\text{DIHP}(n,\alpha,K)$, we construct a data stream of edges as follows: for each $i=1,\ldots,K$, player $i$ constructs a part of the data stream $\sigma^{(i)}$ using $y^{(i)}$. 
    The part $\sigma^{(i)}$ consists of all edges $\{u,v\}\in \supp(y^{(i)})$ such that $y^{(i)}_{uv} =1$. The final data stream $\sigma$ is the concatenation of the $K$ parts constructed above, namely  $\sigma: = \sigma^{(1)}\circ	\sigma^{(2)}\circ\dots\circ\sigma^{(K)}$. We use $G$ to denote the graph that contains the edges in $\sigma$. We use $\mathcal{D}^N_{graph}$ and $\mathcal{D}^Y_{graph}$ to denote the distribution of $G$ when $(y^{(1)},y^{(2)},\dots,y^{(K)})$ follows $\Dno$ and $\Dyes$ respectively. Note that $\mathcal{D}^N_{graph}$ and $\mathcal{D}^Y_{graph}$ may be multigraphs with repeated edges here. The expected number of repeated edges is $O(\frac{1}{\eps^2})$, and edge multiplicities are bounded by $2$ with high probability. 
    We use the following result from \cite{KK19}, stating that $\mathcal{D}^N_{graph}$ and  $\mathcal{D}^Y_{graph}$ have very different max-cut values. 

    \begin{lemma}[Lemma 2.4, \cite{KK19}]\label{lem:separatingYesNo}
        For every $\alpha, \eps \in (n^{-1/10},1),\alpha<1/4$, $K= \frac{512}{\alpha\eps^2 }$, and large enough $n$, there exists $k_0 = k_0(n,\alpha,K)$ such that $G\sim \mathcal{D}_{graph}^N$ has Max-Cut value at most $k_0/(2-\eps)$ and $G\sim \mathcal{D}_{graph}^Y$ has Max-Cut value at least $k_0$ both with probability at least $1-\frac{1}{\sqrt{n}}$. 
    \end{lemma}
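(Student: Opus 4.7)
The plan is to take $k_0 := (1 - \eps/100) \cdot \alpha n K / 2$ (slightly below the common expected edge count of the stream) and to establish the YES lower bound and NO upper bound separately, each with probability $\ge 1 - 1/(2\sqrt{n})$. For the YES case, I would interpret the reduction so as to yield the separation (retaining edges with $y^{(i)}_{uv} = -1$, i.e.\ those cut by the hidden bipartition $x$; this matches the construction of KK19 up to a sign convention): the bipartition $x$ then cuts every edge of $G$, so $\mathsf{MaxCut}(G) \ge |E(G)|$. Since $|E(G)|$ is a sum of $K$ independent bounded random variables (one per matching) with mean $\alpha n / 2$ and range $[0,\alpha n]$, Hoeffding's inequality gives $|E(G)| \ge \alpha n K / 2 - O(\alpha n \sqrt{K \log n}) \ge k_0$ with probability $1 - n^{-\omega(1)}$.

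For the NO upper bound, the crucial observation is that conditional on the matchings $\{M^{(i)}\}$, each edge of $\bigcup_i M^{(i)}$ is independently retained in $G$ with probability $1/2$ (from the independent uniform $\pm 1$ labels). Hence for each fixed bipartition $x' \in \mathbb{F}_2^n$, the cut count $\mathrm{cut}_{x'}(G) \sim \mathrm{Binomial}(N_{\mathrm{cut},x'}, 1/2)$ where $N_{\mathrm{cut},x'} \le \alpha n K$, and Chernoff's bound gives
\[
\Pr\bigl[\mathrm{cut}_{x'}(G) \ge N_{\mathrm{cut},x'}/2 + s \,\big|\, \{M^{(i)}\}\bigr] \le \exp\bigl(-2s^2/(\alpha n K)\bigr).
\]
Taking $s = \Theta(n\sqrt{\alpha K})$ and union-bounding over the $2^n$ bipartitions succeeds with room to spare, since $\alpha K \eps^2 = 512$ is well above the $\Theta(\log 2)$ threshold required; hence $\mathsf{MaxCut}(G) \le \mathsf{MaxCut}(\bigcup_i M^{(i)})/2 + O(n\sqrt{\alpha K})$ w.h.p. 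An analogous Bernstein-plus-union-bound argument (exploiting the $O(\alpha n)$ per-matching variance of $N_{\mathrm{cut},x'}(M^{(i)})$), or alternatively an appeal to known max-cut concentration for sparse random graphs of density $\sim \alpha K/n$, yields $\mathsf{MaxCut}(\bigcup_i M^{(i)}) \le \alpha n K / 2 + O(n\sqrt{\alpha K})$. Combining these, $\mathsf{MaxCut}(G) \le \alpha n K / 4 + O(n\sqrt{\alpha K})$. A Taylor expansion gives $k_0/(2-\eps) = \alpha n K / 4 + \Omega(\eps \alpha n K)$, and since $n\sqrt{\alpha K} = n\sqrt{512}/\eps$ while $\eps \alpha n K = 512n/\eps$ (so the ratio is $1/\sqrt{512}$), the desired inequality $\mathsf{MaxCut}(G) \le k_0/(2-\eps)$ follows provided the universal constant hidden in the $O(\cdot)$ is suitably bounded, which one verifies directly.

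The main technical hurdle is establishing the auxiliary bound $\mathsf{MaxCut}(\bigcup_i M^{(i)}) \le \alpha n K / 2 + O(n \sqrt{\alpha K})$ with the correct constants. Hoeffding's inequality is too weak here (each per-matching contribution is a $[0,\alpha n]$-valued random variable, producing fluctuations of size $\alpha n\sqrt{Kn}$ which do not survive the union bound), so one must invoke Bernstein's inequality to exploit the $O(\alpha n)$ per-matching variance and carefully track which regime (Gaussian versus Poisson) of Bernstein applies across the full parameter range $\eps \in (n^{-1/10}, 1)$. The lower bound $\eps \ge n^{-1/10}$ is exactly what keeps all relevant estimates in a regime where the $2^n$-size union bound is viable, and the factor $512$ in $K = 512/(\alpha\eps^2)$ is calibrated precisely to absorb the implicit constants in the combined argument.
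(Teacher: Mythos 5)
First, note that the paper does not prove this statement at all: it is imported verbatim as Lemma~2.4 of \cite{KK19}, so there is no in-paper proof to compare against. Your overall strategy (fix $k_0$ slightly below $\alpha n K/2$; in the YES case lower-bound $\mathsf{MaxCut}$ by the number of surviving edges; in the NO case condition on the matchings, treat the labels as independent fair coins, and run a Chernoff-plus-union-bound over all $2^n$ bipartitions together with a concentration bound for $\mathsf{MaxCut}(\bigcup_i M^{(i)})$) is the standard and correct route, and your observation that the paper's stated convention (keeping edges with $y^{(i)}_{uv}=1$) must be flipped to $y^{(i)}_{uv}=-1$ for the YES instances to have large cuts is a genuine and necessary correction. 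Your accounting of the NO-side constants (the $1/\sqrt{512}$ ratio between $n\sqrt{\alpha K}$ and $\eps\alpha nK$, and the need for a variance-based rather than range-based tail bound for the matching union) is also right.

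There is, however, one concrete gap: the YES case as written does not close. You apply Hoeffding to $K$ independent summands each of range $[0,\alpha n]$, which at confidence $1-1/\mathrm{poly}(n)$ yields a deviation of order $\alpha n\sqrt{K\log n}$. The available slack is $\frac{\eps}{100}\cdot\frac{\alpha nK}{2}=\Theta(n/\eps)$, and the ratio of deviation to slack is
\[
\frac{\alpha n\sqrt{K\log n}}{\eps\alpha nK}=\frac{\sqrt{\log n}}{\eps\sqrt{K}}=\sqrt{\frac{\alpha\log n}{512}},
\]
which tends to infinity for any constant $\alpha$ (and the lemma must hold, e.g., for $\alpha$ near $1/4$). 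So the asserted conclusion $|E(G)|\ge k_0$ with probability $1-n^{-\omega(1)}$ does not follow from the stated application of Hoeffding for large $n$. The fix is exactly the refinement you already identify as essential on the NO side: the number of crossing edges of a single uniformly random matching of size $\alpha n$ with respect to a near-balanced bipartition concentrates with fluctuations $O(\sqrt{\alpha n\log n})$ (e.g., by a bounded-differences inequality for random matchings, since swapping one edge changes the count by $O(1)$), not $O(\alpha n)$; summing over the $K$ matchings then gives a total fluctuation $O(\sqrt{K\alpha n\log n})=O(\sqrt{n\log n}/\eps)$, which is negligible against the slack $\Theta(n/\eps)$. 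A related (minor) caveat applies to your NO-side use of ``Bernstein'': the per-matching crossing count is not a sum of independent indicators, so the concentration must come from a permutation/without-replacement inequality such as McDiarmid for random matchings; with that substitution the constants you track do fit inside the budget afforded by $K=512/(\alpha\eps^2)$.
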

    With this lemma, we can then transform $\mathcal{A}$ into a communication protocol $\Pi$ for $\textsf{DIHP}(n,\alpha,K)$ as follows: the first player locally simulates $\mathcal{A}$ on $\sigma^{(1)}$, and sends the memory state to player $2$ after processing all data in $\sigma^{(1)}$; the second player continues the simulation of $\mathcal{A}$ on $\sigma^{(2)}$, and sends the memory state to player $3$ after the simulation; the process continues until the player $K$ after the simulation of the first pass of $\mathcal{A}$ on $\sigma$, it then sends the memory state back to player $1$ again and player $1$ continues the simulation of the second pass of $\mathcal{A}$. After $k$ passes of simulation of $\mathcal{A}$, the protocol $\Pi$ outputs ``yes ''if the output of $\mathcal{A}$ is bigger than $k_0/(2-\eps)$ and outputs ``no'' otherwise. 

    It is easy to check that the protocol $\Pi$ only costs $k\cdot s\cdot K $ bits of communication. 

    For correctness, by Lemma \ref{lem:separatingYesNo} and our assumptions of $\mathcal{A}$, we know that in the yes case, $\Pi$ outputs ``yes'' with probability at least $0.99 - \frac{1}{\sqrt{n}}$ since $(\frac{1}{2}+\eps)\geq \frac{1}{2-\eps}$; in the no case, $\Pi$ outputs ``yes'' with probability at most $0.01 + \frac{1}{\sqrt{n}}$. As a result, the advantage of $\Pi$ is lower bounded by $0.1$ when $n$ is large enough. 
\end{proof}
\end{document}